\newcommand{\tvarphiep}{\tilde\varphi_{\epsilon}}
\newcommand{\bzmcD}{\zmcD}
\newcommand{\mrh}{\mathring{h}}
\newcommand{\calB}{{{\mycal B}}}
\newcommand\ben{\begin{enumerate}}
\newcommand\een{\end{enumerate}}
\newcommand\bit{\begin{itemize}}
\newcommand\eit{\end{itemize}}
\newcommand{\lambdatwo}{\lambda_1{}}
\newcommand{\qedskip}{\qed\medskip}
\newcommand{\cerc}{{\mathbb S}}
\newcommand{\ellm}{\red{m}}
\newcommand{\transversemanifold}{N^{n-1}}
\newcommand{\error}{\textit{l.o.t.}}%
\newcommand{\zmcD}{\,\,\mathring{\!\!\mcD}{}}
\newcommand{\ringh}{\,\mathring{\! h}{}}
\newcommand{\red}[1]{{\color{red}#1}}
\newcounter{mnotecount}[section]
\renewcommand{\themnotecount}{\thesection.\arabic{mnotecount}}
\newcommand{\mnote}[1]
{\protect{\stepcounter{mnotecount}}$^{\mbox{\footnotesize
$
\bullet$\themnotecount}}$ \marginpar{
\raggedright\tiny\em
$\!\!\!\!\!\!\,\bullet$\themnotecount: #1} }
\newcommand{\RR}{\R}
\newtheorem{theorem}{\sc  Theorem\rm}[section]
\newtheorem{thm}[theorem]{\sc  Theorem\rm}
\newtheorem{corollary}[theorem]{\sc  Corollary\rm}
\newtheorem{Corollary}[theorem]{\sc  Corollary\rm}
\newtheorem{lemma}[theorem]{\sc Lemma\rm}
\newtheorem{Proposition}[theorem]{\sc Proposition\rm}
\newtheorem{remark}[theorem]{\sc Remark\rm}
\newtheorem{Remark}[theorem]{\sc Remark\rm}
\newcommand{\bbR}{\mathbb{R}}
\newcommand{\jlcax}[1]{}
\newcommand{\eean}{\nonumber\end{eqnarray}}
\newcommand{\kk}[1]{}
\newcommand{\beq}{\begin{equation}}
\newcommand{\FS}       
                  {F}
\newcommand{\HS} 
       {H_{\mbox{\scriptsize volume}}}
\newcommand{\eeal}[1]{\label{#1}\end{eqnarray}}
\newcommand{\bed}{\begin{deqarr}}
\newcommand{\eed}{\end{deqarr}}
\newcommand{\bedl}[1]{\begin{deqarr}\label{#1}}
\newcommand{\eedl}[2]{\arrlabel{#1}\label{#2}\end{deqarr}}
\newcommand{\can}{g_{\cerc^{n-1}}}
\newcommand{\bel}[1]{\begin{equation}\label{#1}}
\newcommand{\bea}{\begin{eqnarray}}
\newcommand{\bean}{\begin{eqnarray}\nonumber}
\newcommand{\beal}[1]{\begin{eqnarray}\label{#1}}
\newcommand{\eea}{\end{eqnarray}}
\newcommand{\nn}{\nonumber}
\def\typeout{:<+ #.tex}\include{#}\typeout{:<-}1{\typeout{:<+ #1.tex}\include{#1}\typeout{:<-}}
\newcommand{\qed}{\hfill $\Box$ \medskip}
\newcommand{\be}{\begin{equation}}
\newcommand{\eeq}{\end{equation}}
\newcommand{\ee}{\end{equation}}
\newcommand{\beqa}{\begin{eqnarray}}
\newcommand{\eeqa}{\end{eqnarray}}
\newcommand{\beqan}{\begin{eqnarray*}}
\newcommand{\eeqan}{\end{eqnarray*}}
\newcommand{\ba}{\begin{array}}
\newcommand{\ea}{\end{array}}
\newcommand{\mcD}{{\mycal D}}
\newcommand{\warn}[1]
{\protect{\stepcounter{mnotecount}}$^{\mbox{\footnotesize
$
\bullet$\themnotecount}}$ \marginpar{
\raggedright\tiny\em
$\!\!\!\!\!\!\,\bullet$\themnotecount: {\bf Warning:} #1} }
\newcommand{\R}{\mathbb R}
\newcommand{\N}{\mathbb N}
\newcommand{\eq}[1]{(\ref{#1})}
\newcommand{\Mext}{M_\ext}
\newcommand{\ext}{\mathrm{ext}}
\newcommand{\ptc}[1]{\mnote{{\bf ptc:}#1}}
\newcommand{\beqar}{\begin{deqarr}}
\newcommand{\eeqar}{\end{deqarr}}
\newcommand{\beaa}{\begin{eqnarray*}}
\newcommand{\eeaa}{\end{eqnarray*}}
\newcommand{\tr}{\mbox{tr}}
\newcommand{\zg}{\mathring{g}}
\newcommand{\zGamma}{\mathring{\Gamma}}
\newcommand{\znabla}{\mathring{\nabla}}
\renewcommand{\red}[1]{#1}
\DeclareFontFamily{OT1}{rsfs}{}
\DeclareFontShape{OT1}{rsfs}{m}{n}{ <-7> rsfs5 <7-10> rsfs7 <10-> rsfs10}{}
\DeclareMathAlphabet{\mycal}{OT1}{rsfs}{m}{n}
\global\let\AddToReset=\@addtoreset}
\global\let\AddToReset=\@addtoreset}
\global\let\AddToReset=\@addtoreset}
\def\Tone{{\stackrel{(n)}{T}}}
\def\Ttwo{{\stackrel{(n+2)}{T}}}
\def\Tohalf{{\stackrel{(n+1)}{T}}}
\def\TohalfX{{\stackrel{(*)}{T}}}
\def\OmegatohalfX{{\stackrel{(*)}{\Omega}}}
\begin{document}
\title{On the mass aspect function and positive energy theorems for asymptotically hyperbolic manifolds\protect\thanks{Preprint UWThPh-2017-37}}

\author{Piotr T. Chru\'{s}ciel\thanks{Faculty of Physics and Erwin Schr\"odinger Institute, University of Vienna, and Institut de Hautes \'Etudes Scientifiques, Bures-sur-Yvette} {} \thanks{
{\sc Email} \protect\url{piotr.chrusciel@univie.ac.at}, {\sc URL} \protect\url{homepage.univie.ac.at/piotr.chrusciel}}\\
Gregory J. Galloway\thanks{University of Miami and Erwin Schr\"odinger Institute, Vienna}\\
 Luc
Nguyen\thanks{University of Oxford and Erwin Schr\"odinger Institute, Vienna}\\
Tim-Torben Paetz\thanks{Faculty of Physics and Erwin Schr\"odinger Institute, University of Vienna}}
\maketitle

\begin{abstract}
We prove positivity of energy for a class of asymptotically locally hyperbolic manifolds in dimensions $4\le n \le 7$. The result is established by first proving  deformation-of-mass-aspect theorems in dimensions $n\ge 4$.
Our positivity results extend  to the case $n = 3$ when more stringent conditions are imposed.
\end{abstract}

\tableofcontents
\section{Introduction}

An interesting global invariant of asymptotically hyperbolic manifolds is provided by the total mass, for general conformal boundaries at infinity, or the total energy-momentum vector when the conformal structure at conformal infinity is that of a round sphere~\cite{ChHerzlich,Wang,ChNagyATMP} (compare~\cite{AbbottDeser,ChruscielSimon}).  These objects provide a generalisation of the  Arnowitt-Deser-Misner (ADM) energy-momentum, which is defined for asymptotically flat manifolds, to the asymptotically hyperbolic case. While there are by now sharp positivity results for the ADM mass in all dimensions~\cite{SchoenYau2017}, the asymptotically hyperbolic case is still poorly understood. The purpose of this work is to expand somewhat our understanding of the topic.

As such, our main result is (see Section~\ref{s18VIII15.1} below for  notation
{and terminology}):

\begin{thm}\label{pmass1}
 Let $(M^n,g)$, $4 \le n \le 7$, be a
 $C^{n+5}$--conformally compactifiable
asymptotically locally hyperbolic (ALH) Riemannian manifold
diffeomorphic to  $[r_0,\infty) \times \transversemanifold$ with a compact boundary $N_0 := \{r_0\} \times \transversemanifold$ and with well defined total mass. Suppose that:

\begin{enumerate}
\item The mean curvature of $N_0$ satisfies $H< n-1$,  where $H$ is the    divergence $D_i n^i$ of the unit normal $n^i$ pointing into $M$.
\item The scalar curvature $R = R[g]$ of $M$ satisfies $R \ge -n(n-1)$.
\item Either $(N ,\mathring{h})$ is a  flat  torus, or  $(N ,\mrh)$ is a nontrivial quotient of a round sphere.
\een

Then the mass of $(M^n,g)$ is nonnegative, $m \ge 0$.
\end{thm}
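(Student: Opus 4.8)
\emph{Strategy.}
I would argue by contradiction: assume $m<0$ and derive an incompatibility with hypotheses (1)--(3). A Witten-type spinorial argument would in principle also be available, but it would not account for the restriction $4\le n\le 7$; since this range is exactly $n\ge 4$ (needed, as the abstract indicates, for the deformation-of-mass-aspect theorem) intersected with $n-1\le 6$ (the regularity range for area-minimizing hypersurfaces), the natural tool is a minimal-hypersurface (or ``brane''-type) argument in which the inner boundary $N_0$ and the asymptotic region play the roles of competing barriers. The cosmological normalization $R\ge -n(n-1)$ and the threshold value $H=n-1$ of the mean curvature are the two scales that must be matched against each other.

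\emph{Step 1: reduction to a constant mass aspect and the model end.}
Suppose $m<0$. Using the deformation-of-mass-aspect theorem proved earlier (valid for $n\ge 4$) I would replace $g$ by a metric, still denoted $g$, which coincides with the original one near $N_0$ (so hypothesis (1) is untouched), is still $C^{n+5}$, ALH with the same conformal infinity $(N,\mathring h)$, still satisfies $R\ge -n(n-1)$, and whose mass aspect function has been normalized to the constant $\mu\equiv m/\mathrm{Vol}(N,\mathring h)<0$. After this normalization $g$ is asymptotic, to the order relevant for the mass, to an explicit model: a generalized Horowitz--Myers metric of negative mass parameter in the flat-torus case, and a Kottler (anti--de Sitter--Schwarzschild) metric of negative mass descended to the quotient in the spherical case. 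Both models have $R=-n(n-1)$, and the crucial consequence of the sign $\mu<0$ is that every coordinate cross-section $\{r=\mathrm{const}\}$ of the model end has outward mean curvature $(n-1)\sqrt{1/r^2+1+O(|\mu|\,r^{-n})}>n-1$; thus a neighborhood of infinity is foliated by hypersurfaces of mean curvature \emph{strictly larger} than $n-1$, which will act as an outer barrier.

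\emph{Step 2: a trapped barrier hypersurface.}
In the compact region of $M$ bounded by $N_0$ and a large coordinate sphere $N_R$, I would minimize either $(n-1)$-dimensional area, or the brane-type functional $\mathrm{Area}(\Sigma)-(n-1)\,\mathrm{Vol}(\text{region between }N_0\text{ and }\Sigma)$ (whose critical points have constant mean curvature $n-1$), in the homology class of $N_0$. Hypothesis (1), $H<n-1$ on $N_0$, makes $N_0$ a strict barrier from the inside, so the minimizer cannot be pushed onto $N_0$; the mean-convex foliation of Step 1 makes the model end a strict barrier from the outside, so it cannot escape to infinity. For $4\le n\le 7$ one then obtains a smooth, closed, embedded, two-sided, \emph{stable} hypersurface $\Sigma$ homologous to $N_0$; in particular $\Sigma$ carries a map of nonzero degree to $N$ and hence inherits the scalar-curvature obstruction of $N$ (a torus, or a nontrivial quotient of a round sphere).

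\emph{Step 3: rigidity, and the main obstacle.}
Combining stability of $\Sigma$ with $R[g]\ge -n(n-1)$ and the traced Gauss equation gives a lower bound on the bottom of the spectrum of the conformal Laplacian of the induced metric on $\Sigma$, forcing $\Sigma$ to carry a metric of scalar curvature $\ge-(n-1)(n-2)$; iterating this reduction down in dimension and invoking the Schoen--Yau/Gromov--Lawson torus obstruction (in the flat-torus case), respectively the equivariant form of the hyperbolic rigidity theorem (in the spherical-quotient case), yields a contradiction unless all inequalities are saturated along $\Sigma$, which on tracing the equality case back up forces $g$ to be the $\mu=0$ model, i.e. $m=0$, contradicting $m<0$. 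Hence $m\ge 0$. The delicate point is Step 2: one must show that the two barriers genuinely confine the minimizer to a compact part of $M$ --- that the strict defect $n-1-H>0$ at $N_0$ and the asymptotic excess of mean curvature produced by $\mu<0$ actually cooperate --- and that the variational problem is coercive enough to produce a minimizer (not merely a min-max critical hypersurface) with the regularity guaranteed in $4\le n\le 7$. A secondary difficulty is making the scalar-curvature/topology obstruction of Step 3 quantitative enough to detect the \emph{sign} of $m$ rather than only a negative model bound, which is precisely where the dichotomy in hypothesis (3) must enter in an essential way; for $n=3$ neither the deformation step nor the minimal-hypersurface regularity is available, which is presumably why more stringent hypotheses are then needed.
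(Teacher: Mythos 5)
Your overall architecture is right: assume $m<0$, invoke the deformation theorem to normalize the mass aspect to a negative constant (unchanged near $N_0$), read off that cross-sections of the modified end have mean curvature $>n-1$, trap a minimizing/marginal hypersurface between the barriers, and conclude by a scalar-curvature topology obstruction. This is indeed the paper's strategy, and your identification of where the dimension restrictions $4\le n\le 7$ enter is correct. However there is one structural gap that would derail Step 3 in the spherical-quotient case, and your account of the final contradiction mechanism is not quite what works.

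The gap: you write that $\Sigma$ ``carries a map of nonzero degree to $N$ and hence inherits the scalar-curvature obstruction of $N$ (a torus, or a nontrivial quotient of a round sphere).'' A nontrivial quotient of a round sphere \emph{carries} a metric of positive scalar curvature, so the nonzero-degree map to $N$ gives no obstruction at all in that case; your Step 3 has no contradiction to run on. This is exactly why the paper cannot argue in $M$ directly in the spherical case. It instead passes to the Riemannian universal cover $(M',g')$, where the conformal infinity becomes a round sphere, applies the exact-deformation theorem of Andersson--Cai--Galloway to replace $g'$ by a metric that is genuinely hyperbolic outside a compact set, and then quotients by translational isometries of the half-space model to get an identification space $(\hat M,\hat g)$ which outside a compact set is a standard hyperbolic cusp $dt^2+e^{2t}h$ with \emph{toroidal} cross-sections, with spherical boundary $\Sigma_0$ of mean curvature $<n-1$ coming from $N_0$. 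Only then does the brane-action minimization produce a component with a nonzero-degree map to a torus, so that the Schoen--Yau obstruction bites. Without this cover-and-quotient step the spherical case does not close.

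Your description of the contradiction mechanism is also off. In the torus case the paper does not minimize the brane functional; it builds the initial data set $(M,g,K)$ with $K=-g$, checks the barriers give $\theta_0<0$ and $\theta_1>0$, produces an \emph{outermost} MOTS $\Sigma$, uses the nonzero-degree map to the torus plus Schoen--Yau to rule out positive scalar curvature on some component $\Sigma'$, and then invokes Galloway's result that an outer neighborhood of such a $\Sigma'$ is foliated by MOTSs --- contradicting outermost-ness. There is no iterated dimensional reduction or spectral argument. In the spherical case the contradiction comes from the brane-action rigidity theorem (a neighborhood of the minimizer splits as a warped product), whose local splitting is then propagated using global minimality until it must tangentially touch $\Sigma_0$, violating mean-curvature comparison because $\Sigma_0$ has $H<n-1$ while the warped-product leaves have $H=n-1$. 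Neither argument invokes ``an equivariant form of the hyperbolic rigidity theorem,'' and nothing is ``iterated down in dimension.''
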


\begin{remark}
  \label{R30XII17.11}
{\rm
It should be  clear from its proof below that Theorem \ref{pmass1} remains valid in the case $n =3$ if one assumes in addition that the mass aspect function has a sign.
}
\end{remark}

Note that the above applies in particular to manifolds with a minimal boundary  $H=0$, which arise in general relativity in time-symmetric initial data sets with apparent horizons.

It might be worthwhile pointing out that the assumed product structure on $M$  arises in certain technical aspects of the proof.  For example, in the torus case, the proof requires the existence of a deformation retract of the conformal compactification of $M$ onto its conformal boundary.  The product structure assumption is the simplest condition to ensure the existence of this, although somewhat more general topologies could be allowed.
In the spherical space case, the product structure, which in fact we assume extends to the conformal completion, is used to control the structure of the universal cover.
The well-known examples of \cite{Birmingham} have product topology.

We do not address   the question of rigidity in the case $m = 0$.  Our proof involves an initial perturbation of the metric (using Theorem \ref{T31VII17.1a} below) to a metric which may not have vanishing mass, and as such, may not have vanishing mass aspect.  Hence, for example, the analysis of the sort given in \cite[Section 3.2]{AnderssonGallowayCai} does not seem to be of use in our context.

Now, the total energy, or energy-momentum, are defined by integrating a function, called the \emph{mass aspect}, over the conformal boundary.
Part of the proof of Theorem~\ref{pmass1} consists in an analysis of this function, which has some interest of its own.
Here some terminology is required:
we will say that a function $f$ on $\mathbb{S}^{n-1}=\{y\in \R^n\,,\ |y|=1\}$ is a \emph{monopole-dipole function} if $f$ is a linear combination of constants and the functions $\theta^i= y^i/|y|$. We have:

\begin{theorem}
  \label{T31VII17.1a}
Let $(M^n,g)$ be an
  ALH manifold, $n\ge 4$ with
 $C^k$--conformal compactification, $k\ge 3$,
and with well-defined mass aspect function.
For all $ \epsilon>0 $ there exists a metric $g_\epsilon$ which is  $C^{\min(k,n+1)}$--conformally compactifiable when $n=4$, and $C^k$--conformally compactifiable otherwise,
which coincides with $g$ outside of an $\epsilon$-neighborhood of the conformal boundary at infinity, satisfies $R[g_\epsilon]\ge R[g ] $, and which has a well-defined mass aspect function
such that
\begin{enumerate}

 \item $g_\epsilon$ has a pure monopole-dipole mass aspect function
   $\Theta_\epsilon$ if $(\transversemanifold, \ringh)$ is conformal to the standard sphere,
   and has constant mass aspect function otherwise;

  \item the associated energy-momentum satisfies
  \begin{equation}\label{10VII17.2a}
   \left\{
     \begin{array}{ll}
       \lim_{\epsilon \to 0} m^\epsilon_0 = m_0\,,
\
 m^\epsilon_i = m_i \,,
& \hbox{if $(\transversemanifold, \ringh)$ is conformal to the round $ {\mathbb S}^{n-1}$;} \\
       \lim_{\epsilon \to 0} m^\epsilon = m
 \,, & \hbox{otherwise.}
     \end{array}
   \right.
\ee
\end{enumerate}
\end{theorem}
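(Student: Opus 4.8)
\noindent\emph{Sketch of the argument.}\ The plan is to leave $g$ unchanged outside a collar neighbourhood of the conformal boundary and, inside it, to modify the term of order $x^{n}$ in a Fefferman--Graham--type expansion so as to bring the mass aspect into the required form, and then to adjust the higher-order terms so that the scalar curvature does not decrease. First I would use the $C^{k}$-conformal compactifiability to fix a defining function $x$ and write $g=x^{-2}(dx^{2}+h_{x})$ on $\{0<x<x_{0}\}\times\transversemanifold$, with $h_{x}=\mrh+x^{2}h^{(2)}+\cdots+x^{n}h^{(n)}+\cdots$; the hypothesis that the mass aspect function $\Theta$ is well defined guarantees that this holds to the order needed to read off $\Theta$ from $h^{(n)}$ (see Section~\ref{s18VIII15.1}).

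Next, for a cutoff $\phi$ equal to $1$ for $x<x_{0}/2$ and vanishing for $x\ge x_{0}$, and a symmetric $2$-tensor field $\chi$ on $\transversemanifold$, I would consider $g_{\chi}:=g+\phi(x)\,x^{n-2}\chi$, so that $h_{x}$ is replaced by $h_{x}+\phi(x)\,x^{n}\chi$ while $g$ is unchanged for $x\ge x_{0}$. The key point, to be established by a direct computation starting from
\[
 R[g_{\chi}]-R[g]=-\Delta_{g}\operatorname{tr}_{g}(\delta g)+\operatorname{div}_{g}\operatorname{div}_{g}(\delta g)-\langle\Ric_{g},\delta g\rangle_{g}+(\text{terms at least quadratic in }\delta g),\qquad \delta g=\phi\,x^{n-2}\chi,
\]
is that the three leading contributions, each of order $x^{n}$, cancel identically, so that $R[g_{\chi}]-R[g]=O(x^{n+1})$ near $x=0$. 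This is the analytic manifestation of the fact that the mass sits at the critical Fefferman--Graham order: it may be perturbed without affecting the scalar curvature at its own order. Since the added term $\phi(x)\,x^{n}\chi$ changes the mass aspect by a nonzero universal multiple of $\operatorname{tr}_{\mrh}\chi$, I would take $\chi$ pure trace, $\chi=\psi\,\mrh$, with $\psi$ a function on $\transversemanifold$ — determined by an easily solved linear equation — making the mass aspect of $g_{\chi}$ equal to the monopole--dipole part of $\Theta$ when $(\transversemanifold,\mrh)$ is conformal to $\mathbb{S}^{n-1}$, and to its mean value otherwise. Note that $\chi$, and hence this step, does not depend on $\epsilon$.

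Now $R[g_{\chi}]-R[g]=O(x^{n+1})$, but with no sign, so the scalar curvature must be repaired. I would do this by correcting the coefficients of $h_{x}$ of orders $>n$: for every $m>n$ the trace part of $h^{(m)}$ enters the scalar curvature at order $x^{m-2}$ with a nonzero coefficient — the order $m$ no longer being critical — so one solves order by order for a formal modification $\phi(x)\sum_{m>n}x^{m-2}\chi^{(m)}$ of $h_{x}$ killing $R[g_{\chi}]-R[g]$ to all orders, without disturbing the coefficients of order $\le n$ and hence without changing the mass aspect. Truncating at a large order $J=J(\epsilon)\to\infty$ as $\epsilon\to0$, absorbing the resulting cutoff error in a fixed compact sub-annulus of $\{x_{0}/2<x<x_{0}\}$ by a localized Corvino-type scalar-curvature deformation there, and finally adding one further small positive correction of order $>n$ in $x$, with an $\epsilon$-dependent amplitude $c(\epsilon)\to0$, chosen to dominate the remaining $O(x^{J+1})$ error throughout $\{x<x_{0}\}$, one obtains the metric $g_{\epsilon}$: it agrees with $g$ for $x\ge x_{0}$, satisfies $R[g_{\epsilon}]\ge R[g]$, has mass aspect of the asserted monopole--dipole (resp.\ constant) type, and — since the last correction affects the mass aspect only through an $O(c(\epsilon))$ shift of the monopole, and not the pairings against the $\theta^{i}$ — it satisfies \eqref{10VII17.2a}. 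The regularity bookkeeping (counting derivative losses incurred by extracting and modifying the coefficients of orders $n$ and $n+1$, and, for even $n$, by the $x^{n}\log x$ term) yields the stated $C^{\min(k,n+1)}$ for $n=4$ and $C^{k}$ otherwise.

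The step I expect to be the main obstacle is the exact cancellation at the critical order in the second paragraph: the whole construction rests on it, since without it a perturbation fixing the mass aspect would alter the scalar curvature at order $x^{n}$ with an uncontrollable sign. The second delicate point is the final one, where enforcing the pointwise inequality $R[g_{\epsilon}]\ge R[g]$ necessitates the small $\epsilon$-dependent correction and is the reason one obtains convergence, rather than equality, of the energies.
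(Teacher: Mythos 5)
Your starting observation --- that a pure-trace perturbation $h_{x}\mapsto h_{x}+\phi\,x^{n}\psi\,\mrh$ of the critical Fefferman--Graham coefficient changes the scalar curvature only at order $x^{n+2}$, because $x^{n}$ is an indicial root of the relevant ODE operator --- is correct and is indeed the invariant behind the paper's deformation as well. Beyond that, however, the construction has two genuine gaps. First, the ``Corvino-type scalar-curvature deformation in a fixed compact sub-annulus of $\{x_{0}/2<x<x_{0}\}$'' is inconsistent with the requirement that $g_{\epsilon}$ coincide with $g$ outside an $\epsilon$-neighbourhood of the boundary: the cutoff must be at scale $\epsilon$, not at a fixed $x_{0}$. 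If you let $x_{0}\to0$ with $\epsilon$, the gluing region degenerates and Corvino's estimates do not come with uniform constants; moreover, Corvino's scheme produces solutions of $R[g+h]=R[g]+\sigma$ only modulo the cokernel of the linearised scalar-curvature operator, i.e.\ the static KIDs, which is nontrivial near the hyperbolic reference and which is precisely the finite-dimensional space that parametrises the mass --- so the step you invoke to leave the mass aspect alone would in fact move it in an uncontrolled way. Second, even granting the gluing, the sign bookkeeping for $R[g_{\epsilon}]\ge R[g]$ does not close: the $\epsilon$-scale cutoff introduces an $O(\epsilon^{n})$ error of uncontrolled sign concentrated near $x\approx\epsilon$, and a correction ``of order $>n$ in $x$ with amplitude $c(\epsilon)\to0$'' contributes only $c(\epsilon)\,\epsilon^{n+2}$ there, which cannot dominate $\epsilon^{n}$ unless $c(\epsilon)\gtrsim\epsilon^{-2}$.

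The paper avoids both issues by a different mechanism. Its Step~1 perturbation is not a raw modification of $h^{(n)}$ but is modelled on an approximate diffeomorphism (Corollary~\ref{Cor:3.10Extended}), so the scalar-curvature error is quadratic in the perturbation essentially by construction, since genuine diffeomorphisms preserve $R$; the perturbation lives in the $T^{xA}$ and $T^{AB}$ components, not just in the pure trace. The cutoff $\varphi_{\epsilon}$ is chosen with $|\varphi_{\epsilon}'|\le C/(x|\ln x|)$ and a vanishing moment, so that the metric perturbation is of size $x^{n-2}/|\ln x|$ with $\epsilon$-\emph{uniform} constants and the scalar-curvature error is $O(x^{n}/|\ln x|^{2})$. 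The positive $\mathrm{d}x^{2}$ correction in Step~2 is tuned to dominate exactly this decay; its amplitude is $\epsilon$-\emph{independent}, while its effect on the mass aspect is $O(|\ln\epsilon|^{-1})$, which is why $m^{\epsilon}_{0}\to m_{0}$ only in the limit. This logarithmic structure is also the true reason both for the $n\ge4$ restriction and for the $C^{\min(k,n+1)}$ regularity at $n=4$ (Remark~\ref{R4V18.1}); your attribution of the latter to the Fefferman--Graham $x^{n}\log x$ term is not what is going on.
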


See Remark~\ref{R4V18.1} below for more information on the differentiability of the metrics $g_\epsilon$ when $n=4$.

Theorem~\ref{T31VII17.1a}  has the following corollary:

\begin{Corollary}
 \label{C10VIII17.1}
  Under the conditions above, suppose that $(\transversemanifold, \ringh)$ is conformal to the standard sphere and that the energy-momentum covector $(m_0, m_1, \ldots, m_n)$ defined by \eqref{9VIII17.Add1} below is timelike: $m_0^2 - \sum_{i \geq 1} m_i^2 > 0$. Then there exists a metric $g_\epsilon$ as in Theorem~\ref{T31VII17.1a}  which has a constant mass aspect function in a suitable conformal frame at infinity.
\end{Corollary}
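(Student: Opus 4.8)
The plan is to feed the metric produced by Theorem~\ref{T31VII17.1a} into the Lorentzian transformation behaviour of the energy-momentum covector under a change of conformal frame at infinity, and then to use the elementary fact that a timelike vector in $\R^{1,n}$ can be boosted to a rest frame.

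First I would apply Theorem~\ref{T31VII17.1a} in the spherical case to obtain, for each $\epsilon>0$, a metric $g_\epsilon$ with a pure monopole-dipole mass aspect function $\Theta_\epsilon = m_0^\epsilon + \sum_{i\ge1} m_i^\epsilon\,\theta^i$. By \eqref{10VII17.2a} one has $m_i^\epsilon = m_i$ for $i\ge 1$ while $m_0^\epsilon\to m_0$ as $\epsilon\to0$. Since $(m_0,m_1,\ldots,m_n)$ is timelike by hypothesis and only the monopole component is perturbed, for all sufficiently small $\epsilon$ the covector $(m_0^\epsilon,m_1^\epsilon,\ldots,m_n^\epsilon)$ is still timelike, $(m_0^\epsilon)^2 - \sum_{i\ge1}(m_i^\epsilon)^2>0$; fix one such $\epsilon$.

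Next I would invoke the transformation law underlying \eqref{9VIII17.Add1}: a change of conformal frame at infinity induced by a conformal diffeomorphism of the round sphere $(\mathbb{S}^{n-1},g_{\mathbb{S}^{n-1}})$ sends monopole-dipole functions to monopole-dipole functions and acts on the coefficient vector $(m_0,\ldots,m_n)$ through the defining representation of the conformal (Lorentz) group $\mathrm{O}(1,n)$ of the sphere (cf.~\cite{ChHerzlich,ChNagyATMP}). As $(m_0^\epsilon,\ldots,m_n^\epsilon)$ is timelike there is $\Lambda\in\mathrm{O}(1,n)$ with $\Lambda(m_0^\epsilon,m_1^\epsilon,\ldots,m_n^\epsilon)=(m',0,\ldots,0)$, where $m'=\pm\big((m_0^\epsilon)^2-\sum_{i\ge1}(m_i^\epsilon)^2\big)^{1/2}$ is (a fixed multiple of) the Lorentzian length of the covector. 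Carrying out the corresponding change of conformal frame at infinity --- an operation which does not modify $g_\epsilon$ as a Riemannian metric on $M$, and hence preserves the properties listed in Theorem~\ref{T31VII17.1a}, namely that $g_\epsilon$ equals $g$ outside an $\epsilon$-neighbourhood of the conformal boundary and that $R[g_\epsilon]\ge R[g]$ --- turns $\Theta_\epsilon$ into the constant function $m'$, as required.

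The step I expect to be the main obstacle is the precise bookkeeping of how the mass aspect function responds to such a change of frame: one must check that under a conformal diffeomorphism of the boundary sphere the mass aspect transforms by composition with that diffeomorphism times the appropriate power of the conformal factor, and that on the $(n{+}1)$-dimensional space spanned by $1$ and the $\theta^i$ this coincides with the $\mathrm{O}(1,n)$ action on the coefficients. This identification is however essentially already contained in the construction of \eqref{9VIII17.Add1} and in \cite{ChHerzlich,ChNagyATMP}; granting it, what remains is only the linear-algebra observation that a timelike vector possesses a rest frame, together with the remark that $|m'|$ is the conformally invariant Lorentzian norm of the energy-momentum covector and is therefore $\epsilon$-independent up to the $o(1)$ coming from $m_0^\epsilon-m_0$.
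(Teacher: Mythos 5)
Your outline reverses the order of operations relative to the paper, and this reversal breaks the argument at the very step you flag as the ``main obstacle''.  You deform first (obtaining a pure monopole-dipole $\Theta_\epsilon$ in the given frame) and then boost; the paper boosts first and only then applies Theorem~\ref{T31VII17.1}.  Your step~2 hinges on the claim that a conformal diffeomorphism of $\mathbb{S}^{n-1}$, acting on mass aspect functions via the induced change of conformal frame, sends monopole-dipole functions to monopole-dipole functions.  That claim is false, and the transformation law \eqref{9VIII17.C2} shows why: one has
\begin{equation*}
\tilde\Theta(\tilde\theta)\;=\;\bigl(\cosh\alpha+\sinh\alpha\,e\cdot\tilde\theta\bigr)^{-n}\,\Theta\!\circ\!\Phi_{\alpha,e}(\tilde\theta)\,,
\end{equation*}
so the mass aspect transforms as a conformal density of weight~$n$, not by pullback alone.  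Even the simplest case, $\Theta\equiv\mathrm{const}$, is sent to a multiple of $(\cosh\alpha+\sinh\alpha\,e\cdot\tilde\theta)^{-n}$, whose expansion in spherical harmonics contains all multipoles; at first order in the boost parameter one already picks up a quadrupole term $\propto (e\cdot\tilde\theta)(b\cdot\tilde\theta)$ when $\Theta = a + b\cdot\theta$ with $b\neq 0$.  Hence, after you boost $(m_0^\epsilon,m_1,\ldots,m_n)$ to $(m',0,\ldots,0)$, the transformed $\tilde\Theta_\epsilon$ does satisfy $\int\tilde\Theta_\epsilon\,X_i\,d\mu_{\ringh}=0$, but it is no longer monopole-dipole and in particular is not constant.

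What \emph{is} $\mathrm{O}(1,n)$-equivariant in the way you want is not the mass aspect function itself but the $(n+1)$ numbers $(m_0,\ldots,m_n)$ obtained by pairing it against $1,X_1,\ldots,X_n$; equivalently, the monopole-dipole subspace is invariant for the weight $(-1)$ action (where the KID potentials live), and its pairing with the weight $n$ mass aspect is what gives the finite-dimensional Lorentz representation.  The paper's proof exploits this correctly by choosing the boosted frame \emph{before} invoking Theorem~\ref{T31VII17.1}: in the boosted frame $g$ has $m_i=0$ for $i\geq 1$, Theorem~\ref{T31VII17.1} produces $g_\epsilon$ whose mass aspect, \emph{in that frame}, is pure monopole-dipole with $m_i^\epsilon=m_i=0$, and a monopole-dipole function with vanishing dipole coefficients is a constant.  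To repair your argument you should simply swap the two steps, as the paper does.
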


Corollary~\ref{C10VIII17.1}  and a more precise version of Theorem~\ref{T31VII17.1a} are proved in Section~\ref{s17VIII15.3} below. Further deformation results can also be found there.

The restriction $n\ge 4$ is necessary in our analysis of the mass aspect function. This is due to the fact that our deformation procedure introduces error terms with a dimension-dependent decay rate. The method we use to compensate these error terms turns out to work if $n\ge 4$, but we have not been able to devise a technique to absorb the errors when $n=3$. On the other hand, the restriction $n\le 7$ in Theorem~\ref{pmass1} arises from the regularity theory of CMC hypersurfaces. It is conceivable that a generalisation of the methods of Schoen and Yau~\cite{SchoenYau2017} to asymptotically hyperbolic manifolds will allow one to remove the upper bound on $n$ in the positivity results here.

The fact that the mass aspect cannot be deformed to a constant in the spherical case is not surprising. Indeed, when conformal infinity is spherical the total energy is not a number but a vector, and the first non-trivial spherical harmonics of the mass aspect determine its spatial components.  In particular a constant mass aspect implies timelikeness of the energy-momentum vector. Our deformation procedure is devised to change the total energy-momentum by an arbitrarily small amount, and a deformation procedure which would change the causal character of the total energy-momentum is incompatible with the small-change requirement.

To put our studies of the mass aspect function in a wider context, recall that  Lee and Neves established a Penrose-type inequality for a class of three-dimensional asymptotically hyperbolic manifolds~\cite{LeeNeves} under the assumption that the mass aspect function has constant sign.  A similar hypothesis has been made previously by Andersson, Cai and Galloway in their proof of positivity of hyperbolic mass, in dimensions $3\le n\le 7$ and without the hypothesis that the manifold is spin~\cite{AnderssonGallowayCai}. The results in~\cite{ChDelayAH} imply that the hypothesis of constant sign of the mass aspect function can be removed under smallness assumptions, or with a fast dimension-dependent decay rate of the metric towards model solutions.  However, one would like to remove such supplementary assumptions altogether. We have unfortunately not been able to achieve this, in particular the restriction on dimension $n\ge 4$ renders our result useless for improving the Lee-Neves theorem. On the other hand, Theorem~\ref{T31VII17.1} provides the following minor improvement of the Andersson-Cai-Galloway theorem, keeping in mind that their hypothesis of mass aspect of constant sign implies that the energy-momentum vector is timelike (see Section~\ref{s18VIII15.1} for terminology):

\begin{theorem} \label{thm:posmass}
Let $(M^{n},g)$, $4 \leq n \leq 7$, be a
manifold with scalar curvature $R[g] \ge -n(n-1)$ with a metric which is smoothly conformally compactifiable with spherical conformal infinity.
Suppose that \eqref{Rabc5+} below holds with $\beta=n$ and assume that   $R[g]+n(n-1)=O(x^{n+1})$.
 Then the total energy-momentum vector of $(M^{n},g)$ \emph{cannot} be timelike past-pointing.
\end{theorem}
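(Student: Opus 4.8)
The plan is to reduce Theorem~\ref{thm:posmass} to the positivity statement of Theorem~\ref{pmass1} (in its spherical variant), by \emph{contradiction} using the small-deformation mechanism of Theorem~\ref{T31VII17.1a}. Suppose $(M^n,g)$ had a timelike past-pointing energy-momentum vector $(m_0,m_1,\dots,m_n)$, i.e. $m_0<0$ and $m_0^2-\sum_{i\ge1}m_i^2>0$. Since infinity is spherical, the mass-aspect function is $\Th = m_0 + \sum_i m_i\theta^i + (\text{higher harmonics})$, and the hypotheses $\beta=n$ together with $R[g]+n(n-1)=O(x^{n+1})$ are precisely what is needed so that the conclusions of Theorem~\ref{T31VII17.1a} apply with the sharper decay, producing for each $\epsilon>0$ a metric $g_\epsilon$, conformally compactifiable, equal to $g$ away from an $\epsilon$-collar of infinity, with $R[g_\epsilon]\ge R[g]\ge -n(n-1)$, whose mass aspect is a pure monopole-dipole function $\Th_\epsilon$ and whose energy-momentum satisfies $\lim_{\epsilon\to0}m_0^\epsilon=m_0$, $m_i^\epsilon=m_i$. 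By Corollary~\ref{C10VIII17.1} — applicable because $(m_0,\dots,m_n)$ is timelike — one can moreover pass to a conformal frame at infinity in which $g_\epsilon$ has a \emph{constant} mass aspect function, equal to $m_0^\epsilon$ in that frame (the value being, up to the Lorentz normalization, $-\sqrt{m_0^2-\sum_i m_i^2}<0$ for $\epsilon$ small, since past-pointing timelike is preserved under the boost and under the small change).

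Next I would feed this $g_\epsilon$ into the positive-mass argument. The metric $g_\epsilon$ is $C^{\min(k,n+1)}$— or $C^k$—conformally compactifiable with $R[g_\epsilon]\ge -n(n-1)$ and spherical conformal infinity, and in the adjusted frame it has constant (hence in particular monopole-dipole) mass aspect. It is \emph{not} literally covered by Theorem~\ref{pmass1} as stated, since there one also assumes a product structure and a boundary $N_0$ with $H<n-1$; but the spherical case of Theorem~\ref{pmass1} is proved by the Andersson-Cai-Galloway type CMC-foliation/barrier argument, whose only real input is $R\ge-n(n-1)$, the regularity theory of CMC hypersurfaces (valid for $4\le n\le7$), and a \emph{sign} on the mass aspect — which here is guaranteed, because a constant mass aspect has a sign. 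The conclusion of that argument is that the total energy is nonnegative, i.e. $m_0^\epsilon\ge 0$ in the chosen frame (equivalently the energy-momentum vector of $g_\epsilon$ is future causal). This directly contradicts $m_0^\epsilon<0$ obtained above. Hence the original assumption fails: the energy-momentum vector of $(M^n,g)$ cannot be timelike past-pointing.

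More precisely, rather than invoking Theorem~\ref{pmass1} as a black box I would replicate its core step for $g_\epsilon$: after the constant-mass-aspect normalization, if the energy were negative one runs the argument of \cite[Section~3]{AnderssonGallowayCai} (CMC hypersurfaces asymptotic to the horospheres, combined with the Riemannian Penrose-type barrier coming from $R\ge-n(n-1)$) to derive that a complete noncompact CMC hypersurface of the prescribed asymptotics would have to violate the stability inequality — impossible in dimensions $n\le7$ where such hypersurfaces are smooth. This is where the upper bound $n\le7$ and the scalar-curvature hypothesis are consumed; everything else is bookkeeping. Passing from $g_\epsilon$ back to $g$ is immediate since $m_0^\epsilon\to m_0$ and $m_i^\epsilon=m_i$: a strict inequality $m_0^2-\sum m_i^2>0$ with $m_0<0$ persists for all small $\epsilon$, so the contradiction is robust.

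The main obstacle is the bookkeeping at the interface: one must verify that the $g_\epsilon$ produced by Theorem~\ref{T31VII17.1a} genuinely satisfies \emph{all} the hypotheses needed to run the Andersson-Cai-Galloway argument — in particular that the deformation, supported near infinity and only lowering nothing in the scalar curvature, does not destroy completeness, the ALH asymptotics, or the regularity required for the CMC barrier construction, and that the conformal-frame change of Corollary~\ref{C10VIII17.1} is compatible with that argument (it is a global conformal rescaling fixed at infinity, so the CMC foliation transforms controllably). A secondary subtlety is confirming that "timelike past-pointing" is genuinely stable under both the $\epsilon\to0$ limit \emph{and} the Lorentz boost implicit in Corollary~\ref{C10VIII17.1}; this is the reason the hypothesis is stated as an open (strict) timelike condition rather than merely null.
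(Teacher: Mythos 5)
Your reconstruction matches the paper's intended proof. The paper does not write the argument out explicitly—Theorem~\ref{thm:posmass} is introduced precisely as a ``minor improvement of the Andersson-Cai-Galloway theorem'' obtained from the deformation results—and the reduction you describe (apply Theorem~\ref{T31VII17.1a} and Corollary~\ref{C10VIII17.1} to arrange a constant, hence sign-definite and here negative, mass aspect while keeping $R\ge -n(n-1)$ and the conformal compactifiability, then invoke the Andersson-Cai-Galloway positivity theorem, whose only extra input beyond $4\le n\le 7$ is a mass aspect of constant sign, to force $m_0^\epsilon\ge 0$ and contradict $m_0^\epsilon\to m_0<0$) is exactly the one the authors have in mind.
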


It is clear that the asymptotic hypotheses in Theorem~\ref{thm:posmass} can be weakened, but this is irrelevant for our purposes here.

We note that examples of metrics with constant negative scalar curvature and with a null or  spacelike energy-momentum vector on a (non-complete) asymptotically hyperbolic manifold have been constructed by Cortier in~\cite{CortierMass}.

 We stress that our analysis concerns the mass of asymptotically hyperbolic metrics, which coincides with the standard definitions of total mass of asymptotically anti-de Sitter spacetimes {only} when the usual no-radiation conditions at the timelike conformal boundary at infinity are imposed. In particular we do not cover those asymptotically hyperboloidal initial data sets with $\Lambda=0$ which intersect a null conformal boundary at infinity at a cut on which the radiation field does not vanish, nor initial data sets in asymptotically anti-de Sitter spacetimes which meet the conformal boundary at infinity in an unusual manner.

\section{The hyperbolic mass}
 \label{s18VIII15.1}

We briefly review part of~\cite{ChHerzlich}  as relevant for our purposes here.

Consider a manifold $M$ with a metric $g$ which asymptotes to a \emph{reference metric ${\zg}$}, and
contains a region $\Mext\subset M$ of the form
\be\label{Nman}
 \Mext = [r_0,\infty)\times \transversemanifold\,,\ee
where $\transversemanifold$ is a compact $(n-1)$-dimensional boundaryless manifold, $n\ge 3$, such that the {reference
metric ${{\zg}}$} on $\Mext$ reads
\be
\label{cm1}
{\zg}:=  \frac{dr^2}{r^2+k}+r^2{{\ringh}{}}\,, \ee with
${{\ringh}{}}$
being  a Riemannian metric on $\transversemanifold$ with scalar  curvature $R[{{{\ringh}{}}}]$  equal to
\be\label{cm1.1}
R[{{{\ringh}{}}}]= (n-1)(n-2)k\,, \quad k\in\{0,\pm 1\}
\,.
\ee
Here and below $r$ is a coordinate running along the $[r_0,\infty)$
factor of $[r_0,\infty)\times \transversemanifold$.

As an example, the metric on the time slices in the Schwarzschild - anti de Sitter (Kottler) spacetime is (compare~\cite{Birmingham})
\begin{equation}\label{20IV15.1}
 g_m = \frac{dr^2}{\frac {r^2}{\ell^2} + k - \frac {2m } {r^{n-2}}} + r^2 {\ringh}{}
 \,,
\ee
which asymptotes to \eq{cm1} as $r\to\infty$ after a constant rescaling of the coordinate $r$ and of the metric.

When $(\transversemanifold,{{\ringh}{}})$  is the unit round $(n-1)$--dimensional sphere
$(\cerc^{n-1},\can)$, then ${{\zg}}$ is the hyperbolic metric.

Equations~\eq{cm1} and \eq{cm1.1} imply that the scalar curvature
$R[{{{\zg}}}]$ of the metric ${{\zg}}$ is constant:
 $$ R[{{{\zg}}}]=-
n(n-1)\,.$$

In what follows we will assume that $\zg$ is Einstein. This will be the case  if and only
if ${{\ringh}{}}$ is.
We note that, for the purpose of definition of the mass,  the background metric
${{\zg}}$  needs to be defined only on $\Mext$.

The definition of mass integrals requires appropriate
boundary conditions, which are most conveniently defined using the following
${\zg}$-orthonormal frame $\{f_i\}_{i=1,n}$ on $\Mext$:
\be\label{Rabc1}  f_A = r^{-1}\epsilon_A\,, \quad
A=2,\ldots,n \,,\quad
 f_1 = \sqrt{r^2+k}
\;\partial_r\,,
\ee
where the $\epsilon_i$'s form an orthonormal
frame for the metric ${{\ringh}{}}$. We   set
\be \label{Rabc2}
g_{ij}:=g(f_i,f_j)
\,,
\quad
 e_{ij}:=g_{ij}-\zg_{ij}
 \,.\ee
The coordinate-independence of the mass integrals requires the fall-off conditions
\be
  \label{Rabc5}\sum_{i,j}
 |g_{ij}-\delta_{ij}| + \sum_{i, j,k} |f_k(g_{ij})|=
 o(r^{-n/2})\,.
\ee

Recall that \emph{static Killing Initial Data (KIDs)} are defined as the set of solutions of the equations
\begin{equation}\label{9IV18.31}
  \znabla_i \znabla_j V = V (R[\mathring{g}]_{ij} -\lambda \mathring{g}_{ij})
\end{equation}
where $\lambda$ is related to the cosmological constant $\Lambda$ as $\lambda = - n \ell^{-2}$, with $\ell^2
 =
 -
 \frac{n(n-1)}{2\Lambda}$. (In most of this work the constant $\ell$ will be scaled away to $1$, which together with the assumption that $\mathring{g}$ is Einstein yields $\znabla_i \znabla_j V = V\,\mathring{g}_{ij}$.)

Ignoring momentarily issues associated with the dimension of the space of static KIDs (to be addressed shortly), when $\ell$ is scaled to $1$ the mass is defined as
 \begin{eqnarray}
 \nn
  m
&=&\lim_{R\to\infty}  (R^2+k)\times \frac 1 {16 \pi} \times
\\
 &&
   \displaystyle \int_{\{r=R\}} \left(-\sum_{A=2}^{n }\left\{\frac {
          \partial e_{AA}}{\partial r}+ \frac {
          k e_{AA}}{ r(r^2+k)}\right\}+\frac {(n-1)e_{11}}{r}
   \right) d \mu_{h}\,,
   \label{massequation1}
 \end{eqnarray}
where $d \mu_{h}$ is the Riemannian measure associated with
the metric $h$ induced by $g$ on the level sets of the function $r$. The existence of the limit
is guaranteed by the conditions
\begin{deqarr}
 \arrlabel{Rabc3}
& \int_{\Mext} \left( \sum_{i,j} |g_{ij}-\delta_{ij}|^2 + \sum_{i,
j,\ell} |f_\ell(g_{ij})|^2 \right)r \;d\mu_g<\infty\,,
 \label{Rabc3a}&
\\
 & \int_{\Mext}
 |R[{{ g}}]-R[{{{\zg}}}] |\;r \;d\mu_g<\infty\,,\label{Rabc3b}&
\end{deqarr}
\begin{equation}
\label{Rabc0} \exists \ C > 0 \ \textrm{ such that }\
C^{-1}\zg(X,X)\le g(X,X)\le C\zg(X,X)\,.
\end{equation}

Let $\beta>0$. We will say that a metric is $\beta$-asymptotically hyperbolic if
\be
  \label{Rabc5+}
   \sum_{i,j}
 |g_{ij}-\delta_{ij}| + \sum_{i, j,k} |f_k(g_{ij})|=
 O(r^{-\beta})\,.
\ee
We note that  both \eq{Rabc5} and (\ref{Rabc3a})
 will hold if $\beta>n/2$.

The above has a natural formulation in terms of manifolds $M$ with  boundary $\partial M$, where one or more connected components of $\partial M$ are viewed as a conformal boundary at infinity.   In the setup above, the conformal boundary at infinity is diffeomorphic to $\transversemanifold$. For simplicity we will assume that $\partial M$ has only one component, which is a boundary at infinity, as the generalisations are straightforward. In this context let $x$ be a smooth function defined on $M$ which vanishes precisely on those components of $\partial M$, with $\mathrm{d}x$ nowhere vanishing on $\partial M$. A metric $g$ on $M$ is said to be smoothly, respectively $C^{k,\alpha}$, conformally compactifiable if the metric $x^2 g$ extends smoothly, respectively $C^{k,\alpha}$, across $\partial M$. 

Relevant for this work is a class of conformally compactifiable metrics which can be written as
\begin{eqnarray}
 \label{18VIII15.6}
  g &=& \ell^2 x^{-2} \Big(\mathrm{d}x^2  + (1-\frac{k}{4}x^2)^2 {\ringh}{}     + x^{n} \mu \Big)+o(x^{n-2}) \mathrm{d}x^i \mathrm{d}x^j
\,,
\\
 \label{18VIII15.7}
{\ringh}{}&=&\mathring  h_{AB}(x^C)\mathrm{d}x^A\mathrm{d}x^B\,,
\\
 \label{18VIII15.8}
\mu &=&\mu_{AB}(x^C)\mathrm{d}x^A\mathrm{d}x^B
\,,
\end{eqnarray}
where $\ell>0$ is a constant, where the $x^A$'s, $A=2,\ldots,n$,
are local coordinates on  $\transversemanifold$, and where $(x^i)=(x,x^A)$.
Here, as elsewhere, expressions such as $o(x^{p})\mathrm{d}x^A\mathrm{d}x^B$ mean  $f_{AB}\mathrm{d}x^A\mathrm{d}x^B$ with
$f_{AB}=o(x^{p})$; $O(x^{p})\mathrm{d}x^A\mathrm{d}x^B$ are similarly defined. Under suitable further differentiability conditions, such metrics are referred to as \emph{asymptotically locally hyperbolic} in~\cite{LeeNeves}. They are called \emph{asymptotically hyperbolic} in~\cite{AnderssonGallowayCai} when in addition one assumes that $\transversemanifold$ is diffeomorphic to $\cerc^{n-1}$ with the unit round metric.

Suppose that $\ell=1$, which can be achieved by a constant rescaling of $g$. Replacing $x$ by a coordinate $r$ through the formula
$$
 \frac{\mathrm{d}x}x =  -\frac{dr}{  \sqrt{ {r^2}  + k }
 }
 \,,
$$
and observing that the metric ${\zg}$ defined in \eqref{cm1} is transformed to
\begin{eqnarray}
  {\zg} &=& x^{-2} \Big(\mathrm{d}x^2
   + (1-\frac{k}{4}x^2)^2 {\ringh}{} \Big)
\,,
 \label{17VIII15.1}
\end{eqnarray}
one can bring \eq{18VIII15.6} to the form needed for the definition of mass. For such metrics, \eq{massequation1} can be rewritten as
\begin{equation}\label{23VIII15.1}
 m = c_n \int_{\transversemanifold} \tr_{{\ringh}{}}\mu\, d\mu_{{\ringh}{}}
 \,,
\ee
where $c_n$ is some universal normalising positive  constant depending only on $n$, and where  the integrand
\begin{equation}
 \Theta \equiv \tr_{{\ringh}{}}\mu := {\ringh}{}^{AB} \mu_{AB}
 	\label{28VII17.MAs}
\end{equation}
is called the \emph{mass aspect function}.

When $(\transversemanifold,\ringh)$ is \emph{not} the standard sphere $(\mathbb{S}^{n-1},h_0)$, \eqref{23VIII15.1} defines a geometric invariant of $g$: it is independent of the choice of coordinate systems in which the asymptotics \eqref{18VIII15.6}-\eqref{18VIII15.8} holds.
On the other hand, when $(\transversemanifold,\ringh)$ is the standard sphere $(\mathbb{S}^{n-1},h_0)$, the number $m$ defined in \eqref{23VIII15.1} is coordinate dependent. While it is invariant under coordinate transformations which pointwise fix the boundary at infinity, there are asymptotic coordinate transformations which preserve \eqref{18VIII15.6}-\eqref{18VIII15.8} but not \eqref{23VIII15.1}. In this case one considers instead  the \emph{energy-momentum covector} $(m_0, m_1, \ldots, m_n)$ defined by
\begin{equation}
m_0 = c_n \int_{\mathbb{S}^{n-1}} \tr_{{\ringh}{}}\mu \,d\mu_{{\ringh}{}} \text{ and } m_i = c_n \int_{\mathbb{S}^{n-1}} \tr_{{\ringh}{}}\mu\,X_i \,d\mu_{{\ringh}{}}
 \,,
	\label{9VIII17.Add1}
\end{equation}
where $X_1, \ldots, X_n$ are normalized first eigenfunctions on $\mathbb{S}^{n-1}$ which form an orthogonal basis of the first eigenspace of the Laplacian on $\mathbb{S}^{n-1}$: $\int_{\mathbb{S}^{n-1}} X_i\,X_j\,d\mu_{\ringh} = \frac{1}{n} \textrm{Volume}(\mathbb{S}^{n-1})\,\delta_{ij}$. The number
\[
m_0^2 - \sum_{i \geq 1} m_i^2
\]
and the causal character of the energy-momentum covector $(m_0, m_1, \ldots, m_n)$ are then geometric invariants of $g$; see Section~\ref{ssec:HypSym} below, compare~\cite{ChHerzlich,ChruscielSimon,ChNagy,Wang,ChNagyATMP}.

\section{Changing the mass aspect function}
 \label{s18VIII15.2}

We wish to analyse how the mass aspect function behaves under a certain class of coordinate transformations. To this end, it is necessary to consider metrics more general than those of the form {\eqref{18VIII15.6}-\eqref{18VIII15.8}}, as such form is not preserved under the coordinate transformations we would like to perform.

\subsection{Perturbations of infinity at order $x^n$}

Consider now metrics which are similar to {\eqref{18VIII15.6}-\eqref{18VIII15.8}} but allow for $\mathrm{d}x^2$- and $\mathrm{d}x\,\mathrm{d}x^A$- terms:
\begin{equation}
g
	 =  \mathring{g} + \underbrace{\lambda_{xx}\,\mathrm{d}x^2 + 2\,\lambda_{xA}\,\mathrm{d}x\,\mathrm{d}x^A + \lambda_{AB}\,\mathrm{d}x^A\,\mathrm{d}x^B}_{=:\lambda}
	\,,
 \label{28VII17.1}
\end{equation}
where
\begin{equation}
\lambda_{xx} = O_2(x^{n-2}),\ \lambda_{AB} = O_2(x^{n-2})\ \text{ and }\ \lambda_{xA} = O_2(x^{n-3})
	\,.
 \label{28VII17.2}
\end{equation}
Here and below,
we write
\begin{equation}\label{18XII17.22}
 f = O_\ell(\Psi(x))
\end{equation}
for some positive function $\Psi$ if
for $0\le i \le \ell$ we have  $|\mathring \nabla^i  f|_{\zg}  \leq C\Psi(x)$ for small $x$, for some constant $C$.

Instead of making the necessary changes of variables to bring $g$ to the form needed to evaluate \eqref{massequation1}, one can read off the mass integral directly as follows. By \eqref{26VIII15.1asf}, Appendix~\ref{app17VIII15.1} below, the scalar curvature $R[g]$ of $g$ satisfies
\begin{align}
R[g] - R[\mathring{g}]
	&= -x^{n+1}\partial_x\Big\{x^{-(n-1)} \partial_x (x^2 \ringh^{AB}\,\lambda_{AB})\nonumber\\
		&\qquad
		- 2x^{-(n-3)} \bzmcD ^A \lambda_{xA}
		+ (n-1)x^{-(n-2)}\lambda_{xx}\Big\}
 \nn
\\
		&\qquad  + x^4\zmcD^A \zmcD^B \lambda_{AB} + O(x^{n+2})
		\,.\label{28VII17.3}
\end{align}
The mass/energy-momentum covector is recognized as the flux integral(s) related to the above expression against suitable KID potential(s) (i.e. the functions $V$ in \eqref{9IV18.31}). When $(N,\ringh)$ is not conformal to the round sphere, $V$ is taken to be $V = x^{-1}(1 + \frac{k}{4} x^2)$. When $(N,\ringh)$ is conformal to the round sphere $\mathbb{S}^{n-1}$, $V$ can be taken to be $V_0 = x^{-1}(1 + \frac{1}{4}x^2)$ and $V_i = x^{-1}(1 - \frac{1}{4}x^2) X_i$ where $X_1, \ldots, X_n$ are normalized first eigenfunctions on $\mathbb{S}^{n-1}$ as in Section \ref{s18VIII15.1}.
 In particular, if we assume that
$x^{-1} (R[g] - R[\mathring{g}])\in L^1$ and
\begin{align}
\lambda
	&= x^{n-2}\Big[\mu_{xx}(x^C)\, \mathrm{d}x^2 + \frac{2}{x}\mu_{xA}(x^C)\,\mathrm{d}x\mathrm{d}x^A + \mu_{AB}(x^C)\,\mathrm{d}x^A\mathrm{d}x^B\Big]\nonumber\\
		&\qquad + o_2(x^{n-2})\mathrm{d}x^2 + o_2(x^{n-3}) \mathrm{d}x\,\mathrm{d}x^A + o_2(x^{n-2})\mathrm{d}x^A\,\mathrm{d}x^B
			\,,\label{28VII17.4}
\end{align}
then the mass of $g$ is found to be
\begin{equation}
m = c_n\int_{\transversemanifold}\Big[ \ringh^{AB} \mu_{AB} - \frac{2}{n} \bzmcD ^A\,\mu_{xA}
  + \frac{n-1}{n} \mu_{xx} \Big]\, d\mu_{{\ringh}{}}
	\,,\label{28VII17.5}
\end{equation}
when $(N,\ringh)$ is not conformal to the round sphere. In the other case, the energy-momentum covector of $g$ can be similarly computed by integrating the expression in the square bracket in \eqref{28VII17.5} against the constant one and the functions $X_i$.

The integrand on the right-hand side of \eqref{28VII17.5} contains a divergence term which does not contribute to the  {integral}, but we keep it in this form as it coincides with $\frac{1}{n}$ times the leading term of the sum contained in the curly brackets on the right-hand side of \eqref{28VII17.3}. We will continue to refer to this quantity as the \emph{mass aspect function} (for metrics given by \eqref{28VII17.1} and \eqref{28VII17.4}):
\begin{equation}
\Theta:= \ringh^{AB} \mu_{AB} - \frac{2}{n} \bzmcD ^A\,\mu_{xA} + \frac{n-1}{n} \mu_{xx}
	\,.\label{28VII17.6}
\end{equation}

\begin{lemma}\label{Lem:MAInv}
The mass aspect function is invariant under coordinate transformations which pointwise preserve infinity and the asymptotic  behavior in \eqref{28VII17.4}. Equivalently, under a transformation of the form
\begin{equation}\label{29VII17.1}
(x,x^A) \mapsto \big(y
  = x + x^{n+1}\,\psi(x^C) + o_{3}(x^{n+1})\,,\
    y^A = x^A + x^n\,X^A(x^C) + o_{3}(x^n)\big)
 \,,
\end{equation}
where $\psi$ and $X^A$ are of $C^3(\transversemanifold)$-differentiability class,
the new mass aspect function $\tilde \Theta$ and the original mass aspect function $\Theta$ satisfy
\[
\tilde \Theta(y^C) = \Theta(y^C).
\]
\end{lemma}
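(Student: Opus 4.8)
The plan is to compute directly how each of the three pieces $\ringh^{AB}\mu_{AB}$, $\bzmcD^A\mu_{xA}$ and $\mu_{xx}$ transforms under \eqref{29VII17.1}, and to show that the specific combination \eqref{28VII17.6} is invariant, the boundary divergence $\bzmcD^A\mu_{xA}$ term playing the role of cancelling the non-invariant pieces of the other two. First I would set up the change of variables: writing $y = x + x^{n+1}\psi + o_3(x^{n+1})$ and $y^A = x^A + x^n X^A + o_3(x^n)$, I invert to express $x$ and $x^A$ in terms of $y$, $y^C$ up to the relevant order (this only changes things at order $x^{n+1}$, resp. $x^n$, so to the order that matters the inverse is $x = y - y^{n+1}\psi(y^C) + \ldots$, $x^A = y^A - y^n X^A(y^C)+\ldots$). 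Then I substitute into $g = \mathring g + \lambda$ with $\lambda$ of the form \eqref{28VII17.4}, and I must keep track of two sources of new terms at order $x^{n-2}$: the terms coming from pulling back $\mathring g$ itself (the background metric \eqref{17VIII15.1} is $x^{-2}(dx^2 + (1-\tfrac k4 x^2)^2\ringh)$, and a shift of $x$ at order $x^{n+1}$ produces an $O(x^{n-2})$ correction to $g_{xx}$, etc.), and the terms coming from the leading-order transformation of $\lambda$ (which, since $\lambda$ is already $O(x^{n-2})$, only picks up its value at the old point, i.e. it transforms "tensorially to leading order" with the $O(1)$ parts of the Jacobian).

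The heart of the computation is therefore the pullback of $\mathring g$. From $x^{-2} dx^2$, substituting $x = y(1 - y^n\psi + \ldots)$ so $dx = (1 - (n+1)y^n\psi)\,dy - y^{n+1}\partial_C\psi\, dy^C + \ldots$, one gets new contributions to $g_{yy}$ at order $y^{n-2}$, to $g_{yA}$ at order $y^{n-3}$, and (from the $dy^C dy^D$ cross term, which is higher order) nothing new in $g_{AB}$ at the critical order. Similarly $x^{-2}(1-\tfrac k4 x^2)^2 \ringh_{AB}\,dx^A dx^B$ contributes, through $dx^A = dy^A - \partial_C(y^n X^A)\,dy^C - n y^{n-1}X^A\,dy + \ldots$, a term $-2 y^{n-2}\ringh_{AB} n X^{(A}\delta^{B)}_{\,C}\,\ldots$ wait — more precisely a $dy^A dy$ cross term at order $y^{n-3}$ and an $\ringh_{AB}$-type correction $y^{n-2}(\mathring\nabla_A X_B + \mathring\nabla_B X_A)$-ish piece in $g_{AB}$, plus a $\psi$-proportional correction from the overall $x^{-2}$ prefactor acting on the angular block. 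Collecting, one reads off the transformed $\mu$-coefficients $\tilde\mu_{xx}$, $\tilde\mu_{xA}$, $\tilde\mu_{AB}$ as $\mu_{xx}$, $\mu_{xA}$, $\mu_{AB}$ plus explicit linear expressions in $\psi$, $\partial_C\psi$, $X^A$, $\mathring\nabla_B X^A$. The claim is then that $\ringh^{AB}\tilde\mu_{AB} - \tfrac2n \bzmcD^A\tilde\mu_{xA} + \tfrac{n-1}{n}\tilde\mu_{xx}$ equals the same expression without tildes: the $X^A$-divergence terms from the $\ringh^{AB}\mu_{AB}$ piece cancel against the $\bzmcD^A$ of the $X^A$-contributions to $\mu_{xA}$, and the $\psi$ terms cancel between the $\tfrac{n-1}{n}\mu_{xx}$ piece and the rescaling-of-the-angular-block contribution to $\ringh^{AB}\mu_{AB}$, the factor $\tfrac{n-1}{n}$ being exactly what makes this work in dimension $n$.

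Rather than performing this computation blind, I would anchor it to what is already in the paper: the expression \eqref{28VII17.3} for $R[g]-R[\mathring g]$ shows that the curly bracket
\[
x^{-(n-1)}\partial_x(x^2\ringh^{AB}\lambda_{AB}) - 2x^{-(n-3)}\bzmcD^A\lambda_{xA} + (n-1)x^{-(n-2)}\lambda_{xx}
\]
has $n\,\Theta$ as its leading ($x^0$) coefficient, and $R[g]$ is a scalar, hence invariant; combined with the fact (also recorded after \eqref{28VII17.5}) that $\Theta$ is identified as the flux of $R[g]-R[\mathring g]$ against the fixed KID potential $V = x^{-1}(1+\tfrac k4 x^2)$ — which transforms into itself under \eqref{29VII17.1} up to terms that do not affect the flux, since the diffeomorphism fixes $\partial M$ pointwise — invariance of the mass integral $\int_N \Theta\,d\mu_{\ringh}$ is immediate. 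The remaining point is to upgrade invariance of the \emph{integral} to pointwise invariance of $\Theta$, and here the key observation is that the difference $\tilde\Theta - \Theta$, by the computation above, is a pure $\bzmcD$-divergence on $N$ (the $\psi$-contribution appears as $\bzmcD^A$ of something proportional to $\partial_A\psi$, and the $X^A$-contribution is manifestly a divergence); but our deformation class only changed $\Theta$ by a divergence of an object built covariantly from $\psi$ and $X^A$, so to show it vanishes pointwise I either (i) note that $\psi$, $X^A$ are \emph{arbitrary} $C^3$ functions/vector fields while $\tilde\Theta-\Theta$ must vanish in integral against $1$ \emph{and} — in the spherical case — against each $X_i$, which forces the divergence to vanish identically, or more robustly (ii) simply carry the explicit coefficient bookkeeping through and check the pointwise cancellation. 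I expect (ii), the honest term-by-term pointwise cancellation, to be the main obstacle: it is not conceptually hard but requires care with the $O(x^{n-3})$ mixed component $\lambda_{xA}$, with the precise form of the Jacobian of \eqref{29VII17.1} to the needed order, and with the $(1-\tfrac k4 x^2)$ factors, and it is easy to drop a factor of $n$ or a sign. The cleanest writeup probably does (ii) for the $x^{xx}$/$AA$-trace part and invokes the divergence structure plus arbitrariness of $X^A$ for the $\mu_{xA}$ part.
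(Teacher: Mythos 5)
Your route (ii), the direct term-by-term substitution, is exactly what the paper does, and your sketch of the relevant ingredients (invert the change of variables to the critical order, expand $dx$ and $dx^A$, track the $O(x^{n-2})$ corrections coming from pulling back $\mathring g$ and from transporting $\lambda$) is an accurate outline of the computation. But you explicitly decline to carry it out, calling it ``the main obstacle,'' and your proposed shortcut (i) does not close that gap: knowing that $\int_N (\tilde\Theta-\Theta)\,d\mu_{\ringh}=0$ (and, in the spherical case, that $\int_N(\tilde\Theta-\Theta)X_i\,d\mu_{\ringh}=0$) only controls the monopole and dipole components of $\tilde\Theta-\Theta$, not the function itself, and ``arbitrariness of $\psi,X^A$'' is of no help because the lemma is claimed \emph{for each fixed} $\psi$ and $X^A$, not only on average over them. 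So (i) does not upgrade integral invariance to pointwise invariance, and the proof still hinges on (ii).

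On the details of (ii) itself, your qualitative picture is also off in two places, and both matter for making the cancellation come out to zero rather than merely to a divergence. First, the $\psi$-contribution to the transformed $\mu_{xA}$ is \emph{not} present at the critical order $x^{n-3}$: the $\bzmcD_A\psi$ term in $dx$ enters $x^{-2}dx^2$ only at order $y^{n-1}\,dy\,dy^A$, which is subleading; what actually lands in $\tilde\mu_{xA}$ at the critical order is $-nX_A$. Second, and more importantly, the cancellation of the $\psi$-terms in the combination \eqref{28VII17.6} is purely algebraic, not a divergence cancellation: the angular trace picks up $+2(n-1)\psi$, while $\tfrac{n-1}{n}\tilde\mu_{xx}$ picks up $\tfrac{n-1}{n}\cdot(-2n\psi)=-2(n-1)\psi$, and these cancel pointwise. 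Only the $X$-pieces cancel as divergences (the $-2\bzmcD^A X_A$ from the angular trace against $+2\bzmcD^A X_A$ from $-\tfrac2n\bzmcD^A(-nX_A)$). Your characterization of $\tilde\Theta-\Theta$ as ``a pure $\bzmcD$-divergence,'' which is what (i) would need, is therefore wrong: the cancellation is stronger, yielding $\tilde\Theta-\Theta\equiv 0$ identically and needing no averaging argument. In short: right strategy, but the proof is the computation you skipped, and the alternative route is unsound.
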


\begin{remark}
 \label{R29VII17.1}
 {\em
 If we assume that, in local coordinates, the metrics  are $C^k$-conformally compactifiable both before and after the coordinate transformation,
 with $k$ large enough,
 then  \eqref{29VII17.1} exhausts the set of transformations described in the first sentence of the Lemma.  This follows essentially from \cite[Equations~(3.18)-(3.20)]{ChNagyATMP}: Indeed, it is standard to go from the estimates there to the expansions \eqref{29VII17.1} with a loss of derivatives. A conservative estimate is $k\ge n+7$, and it is clear that a careful argument can bring the threshold down. Compare Proposition~\ref{Prem:02V18-R1} below, where   supplementary conormal regularity is imposed.
 }
\end{remark}

\begin{remark}
  \label{R27VIII17.1}
{\em
We will consider various coordinate transformations such as  \eq{29VII17.1}, which a priori only make sense in local charts. To make global sense of such formulae, in particular to see that the coefficients $X^A$ naturally define a vector field on $\transversemanifold$, one can proceed as follows:
Let $2\le \ellm \in \N$
%
 and consider a metric $g$ of the form
\[
g = \mathring{g} + x^{\ellm -2} \mu_{AB}\,\mathrm{d}x^A\,\mathrm{d}x^B +o(x^{\ellm -2}) \mathrm{d}x^i \mathrm{d}x^j
	\,.
\]

Let $X=X^A\partial_A $ be a vector field on $\transversemanifold$,
 and let $\zeta$
  be a parameter along the flow of $X$. Thus
\begin{eqnarray}
 \nn
  \lefteqn{
 \frac{\mathrm{d}x^A(\zeta)}{\mathrm{d}\zeta} =X^A(x^B(\zeta))\,, \quad x^A(0)=x^A
  }
 &&
\\
 &&  \Longleftrightarrow \quad
x^A (\zeta)=\phi^A[X](\zeta,x^B)\,, \quad \phi^A[X](0,x^B)=x^A
\,.
\end{eqnarray}
One can then pass to a new coordinate system $(x,x^A)\mapsto (x,\phi^A) $ by setting $\zeta= x^\ellm $ in the flow:
\begin{eqnarray*}
 \phi^A:= \phi^A[X](x^\ellm,x^B) =  x^A +X^A  x^\ellm  + O(x^{2\ellm })
\,,
\end{eqnarray*}
which is essentially \eq{29VII17.1}. The transformation formulae for the expansion coefficients of the metric, in terms of powers of $x$ near $x=0$, are then obtained by the usual calculations involving flows.
%
}
\end{remark}

\noindent{\sc Proof of Lemma~\ref{Lem:MAInv}:} We compute%
\begin{align*}
x^{-2}\Big(1 - \frac{k}{4}x^2\Big)^2
	&= y^{-2}\Big(1 - \frac{k}{4}y^2\Big)^2 + 2y^{n-2}\,\psi(y^C) +o(y^{n-2})
		\,,\\
\mathrm{d}x
	&= [1 - (n+1)y^{n}\psi(y^C) +o(y^{n})
 ]\mathrm{d}y\\
		&\qquad\qquad - [y^{n+1}\bzmcD _A  \psi(y^C)+o(y^{n+1})
  ]\,\mathrm{d}y^A
		\,,\\
\mathrm{d}x^A
	&= [-n\,y^{n-1}\,X^A(y^C) +o(y^{n-1})
  ]\mathrm{d}y\\
		&\qquad\qquad + [\delta^A{}_B - y^n\,\partial_B X^A(y^C)
  +o(y^{n})
]\mathrm{d}y^B
		\,,\\
h_{AB}(x^C)
	&= h_{AB}(y^C) - y^n \partial_{D} h_{AB}(y^C)\,X^D(y^C) +o(y^{n})
	\,.
\end{align*}
This implies that
\begin{align*}
x^{-2}\,\mathrm{d}x^2
	&= y^{-2}\,\mathrm{d}y^2  - [2ny^{n-2}\,\psi(y^C)+o(y^{n-2})
  ]\,\mathrm{d}y^2\\
		&\qquad\qquad  + O(y^{n-1})\mathrm{d}y\,\mathrm{d}y^A + O(y^{2n})\mathrm{d}y^A\,\mathrm{d}y^B
			\,,
\end{align*}
and
\begin{align*}
&x^{-2}\Big(1 - \frac{k}{4}x^2\Big)^2\,\ringh_{AB}(x^C)\,\mathrm{d}x^A\,\mathrm{d}x^B\\
	&\qquad = y^{-2}\Big(1 - \frac{k}{4}y^2\Big)^2\,\ringh_{AB}(y^C)\,\mathrm{d}y^A\,\mathrm{d}y^B \\
		&\qquad\qquad - 2ny^{n-3}\,\Big(1 - \frac{k}{4}y^2\Big)^2\ringh_{AB}(y^C)\,X^B(y^C)\,\mathrm{d}y\,\mathrm{d}y^A\\
		&\qquad\qquad + 2y^{n-2}\psi(y^C)\,\ringh_{AB}(y^C)\,\mathrm{d}y^A\,\mathrm{d}y^B\\
		&\qquad\qquad - y^{n-2} \Big(1 - \frac{k}{4}y^2\Big)^2\,\underbrace{[\partial_D \ringh_{AB}\,X^D + 2\ringh_{D(A}\partial_{B)} X^D]}_{=\bzmcD _B X_A + \bzmcD _A X_B}(y^C)\,dy^A\,dy^B
\\
&\qquad\qquad
			+ O(y^{2n-4})\mathrm{d}y^2  +o(y^{n-3}) \mathrm{d}y\,\mathrm{d}y^A + o(y^{n-2})\mathrm{d}y^A\,\mathrm{d}y^B,
\end{align*}
where $X_A = \ringh_{AB}X^B$.

It follows that, in the new coordinate system, the difference $\tilde \lambda$ of $g$ and the new reference metric $\mathring{\tilde g} = y^{-2}(dy^2 + (1 - \frac{k}{4}y^2)^2\,\ringh_{AB}(y^C)\,dy^A\,dy^B$ takes the form
\begin{align}
\tilde \lambda
	&= y^{n-2}\Big[(\mu_{xx}(y^C) - 2n\psi(y^C))\, \mathrm{d}y^2 + \frac{2}{y}(\mu_{xA}(y^C) - nX_A(y^C)) \,\mathrm{d}y\mathrm{d}y^A\nonumber\\
		&\qquad + (\mu_{AB}(x^C) + 2\psi(y^C)\,\ringh_{AB}(y^C) -  (\bzmcD _B X_A + \bzmcD _A X_B)(y^C))\,\mathrm{d}y^A\mathrm{d}y^B\Big]\nonumber\\
		&\qquad + o_2(y^{n-2})\mathrm{d}y^2 + o_2(y^{n-3}) \mathrm{d}y\,\mathrm{d}y^A + o_2(x^{n-2})\mathrm{d}y^A\,\mathrm{d}y^B
			\,.
\label{Eq:29XII17-L1}
\end{align}
We see that
\begin{align*}
\tilde \Theta
	&= (\ringh^{AB}\mu_{AB} + 2(n-1)\psi - 2\bzmcD ^A X_A) - \frac{2}{n}\bzmcD ^A(\mu_{xA} - nX^A)\\
		&\qquad\qquad + \frac{n-1}{n} (\mu_{xx}  - 2n\psi)\\
	&= \ringh^{AB}\mu_{AB} - \frac{2}{n}\bzmcD ^A\mu_{xA} + \frac{n-1}{n} \mu_{xx}\\
	&=\Theta,
\end{align*}
as desired.
\qedskip

As a corollary of the proof, we have
\begin{corollary}\label{Cor:29XII17-C1}
Any metric of the form \eqref{28VII17.4} such that $\lambda_{xA} = o_2(x^{n-2})$ can be put in the form  \eqref{18VIII15.6}-\eqref{18VIII15.8} via a change of coordinates at infinity.
\end{corollary}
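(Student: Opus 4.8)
The plan is to extract this from the explicit transformation law already computed in the proof of Lemma~\ref{Lem:MAInv}, by making a convenient choice of the free data in \eqref{29VII17.1}. First I would record what the hypothesis buys us: by \eqref{28VII17.4} the $\mathrm{d}x\,\mathrm{d}x^A$--component of $\lambda$ equals $2x^{n-3}\mu_{xA}+o_2(x^{n-3})$, and since $x^{n-3}$ dominates $x^{n-2}$ as $x\to 0$, the assumption $\lambda_{xA}=o_2(x^{n-2})$ forces $\mu_{xA}\equiv 0$ while simultaneously improving the accompanying remainder to $o_2(x^{n-2})$. Thus the metric under consideration is already $g=\mathring g + x^{n-2}\big[\mu_{xx}(x^C)\,\mathrm{d}x^2 + \mu_{AB}(x^C)\,\mathrm{d}x^A\mathrm{d}x^B\big] + o_2(x^{n-2})\,\mathrm{d}x^i\mathrm{d}x^j$.

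Next I would apply the coordinate change \eqref{29VII17.1} with $X^A\equiv 0$ and $\psi=\tfrac{1}{2n}\mu_{xx}$ --- an admissible transformation, since the standing differentiability hypotheses make $\mu_{xx}$ of class $C^3$ --- and read off the result from \eqref{Eq:29XII17-L1}. With $X^A\equiv 0$, $\mu_{xA}\equiv 0$ and this $\psi$, the $\mathrm{d}y^2$--perturbation at order $y^{n-2}$ is $\mu_{xx}-2n\psi=0$, while the $\mathrm{d}y^A\mathrm{d}y^B$--perturbation at that order is $y^{n-2}\tilde\mu_{AB}$ with $\tilde\mu_{AB}:=\mu_{AB}+\tfrac{1}{n}\mu_{xx}\,\ringh_{AB}$, a genuine symmetric $2$--tensor on $\transversemanifold$. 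Hence in the new coordinates $\tilde g=\mathring{\tilde g}+y^{n-2}\tilde\mu_{AB}(y^C)\,\mathrm{d}y^A\mathrm{d}y^B+o_2(y^{n-2})\,\mathrm{d}y^2+(\text{rem.})\,\mathrm{d}y\,\mathrm{d}y^A+o_2(y^{n-2})\,\mathrm{d}y^A\mathrm{d}y^B$ with $\mathring{\tilde g}=y^{-2}\big(\mathrm{d}y^2+(1-\tfrac{k}{4}y^2)^2\ringh_{AB}(y^C)\,\mathrm{d}y^A\mathrm{d}y^B\big)$; apart from the off--diagonal remainder this is precisely \eqref{18VIII15.6}--\eqref{18VIII15.8} with $\mu=\tilde\mu$.

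The only point that needs genuine (though routine) work is verifying that the $\mathrm{d}y\,\mathrm{d}y^A$ remainder is $o(y^{n-2})$ --- good enough for \eqref{18VIII15.6} --- rather than the generic $o_2(y^{n-3})$ appearing in \eqref{Eq:29XII17-L1}. For this I would track the three sources of an off--diagonal term through the Jacobian of \eqref{29VII17.1} with $X^A\equiv 0$: the angular variables are unchanged, so $\mathrm{d}x^A=\mathrm{d}y^A$ and the $\ringh$--block of $\mathring g$ produces no cross term; the $o_2(x^{n-2})\,\mathrm{d}x\,\mathrm{d}x^A$ piece of the starting metric stays $o_2(y^{n-2})$; and the $\mathrm{d}x^2$--terms ($x^{-2}\mathrm{d}x^2$ and the order $x^{n-2}$ one) produce, via $\mathrm{d}x=(1+O(y^n))\,\mathrm{d}y+O(y^{n+1})\,\mathrm{d}y^A$, cross terms of order at least $y^{-2}\cdot y^{n+1}=y^{n-1}=o(y^{n-2})$. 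So all contributions are $o(y^{n-2})$ and $\tilde g$ has the asserted form, which is where I expect the analysis to end. As a sanity check, $\ringh^{AB}\tilde\mu_{AB}=\ringh^{AB}\mu_{AB}+\tfrac{n-1}{n}\mu_{xx}=\Theta$ by \eqref{28VII17.6}, so the mass aspect is preserved, consistently with Lemma~\ref{Lem:MAInv}.
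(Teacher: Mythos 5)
Your argument is correct and is essentially the one in the paper: take $\psi=\tfrac{1}{2n}\mu_{xx}$, $X^A\equiv 0$ in \eqref{29VII17.1}, read off the result from \eqref{Eq:29XII17-L1}, and use $\lambda_{xA}=o_2(x^{n-2})$ (equivalently, $\mu_{xA}\equiv 0$) to upgrade the $\mathrm{d}y\,\mathrm{d}y^A$ remainder from $o_2(y^{n-3})$ to $o(y^{n-2})$. You spell out the remainder-tracking through the Jacobian that the paper dismisses as ``readily seen''. One small point in your favour: your $\tilde\mu_{AB}=\mu_{AB}+\tfrac{1}{n}\mu_{xx}\,\ringh_{AB}$ is the sign forced by \eqref{Eq:29XII17-L1}, and your sanity check $\ringh^{AB}\tilde\mu_{AB}=\Theta$ confirms it; the display in the paper's own proof has a sign slip, writing $-\tfrac1n\mu_{xx}\,\ringh_{AB}$.
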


\begin{proof} We make a coordinate transformation of the form $(x,x^A) \mapsto \big(y
  = x + \frac{1}{2n}x^{n+1}\,\mu_{xx}(x^C)\,,
    y^A = x^A\big)$. By inspecting the argument leading to \eqref{Eq:29XII17-L1} and using the hypothesis $\lambda_{xA} = o_2(x^{n-2})$, it is readily seen that the $o_2(y^{n-3}) \mathrm{d}y\,\mathrm{d}y^A$ term in \eqref{Eq:29XII17-L1} is in fact $o_2(y^{n-2}) \mathrm{d}y\,\mathrm{d}y^A$. We thus obtain
\begin{align*}
\tilde \lambda
	&= y^{n-2}\Big[\mu_{AB}(x^C) - \frac{1}{n} \mu_{xx}(y^C)\,\ringh_{AB}(y^C) \Big]\,\mathrm{d}y^A\mathrm{d}y^B\\
		&\qquad + o_2(y^{n-2})\mathrm{d}y^2 + o_2(y^{n-3}) \mathrm{d}y\,\mathrm{d}y^A + o_2(x^{n-2})\mathrm{d}y^A\,\mathrm{d}y^B
			\,,
\end{align*}
which completes the proof.
\end{proof}
\qedskip

Let $m\in\N$. We will say that a function $f$ on $\overline M$ is of differentiability class $C^{\ell|m}$ if for any vector fields $X_i$ which are smooth on the compactified manifold and tangent to its boundary  it holds that
\begin{equation}\label{5V18.7}
 \mbox{$\forall \ 0 \le  i\le m$ we have $X_1 \cdots X_i (f)\in C^\ell(\overline M)$.}
\end{equation}
Here the index $i$ does \emph{not} indicate a component of the vector, but numbers the vectors. This definition generalises in the following obvious way to tensor fields $u$: if $\mathring D$ is any smooth covariant derivative operator on $\overline M$, then \eqref{5V18.7} is replaced by
\begin{equation}\label{5V18.8}
 \mbox{$\forall \ 0 \le  i\le m$ we have $\mathring D_{X_1} \cdots \mathring D_{X_i}  u \in C^\ell(\overline M)$.}
\end{equation}

In what follows we will need the following:

\begin{Proposition}\label{Prem:02V18-R1}
Suppose that $g$ is a
$C^{\ell|m}$--conformally compactifiable metric of the form \eqref{18VIII15.6}-\eqref{18VIII15.8}, $m\ge 2$. Then, after a suitable change of coordinates at infinity, in which the metric becomes $C^{\ell-1|m-2}$--conformally compactifiable,
the terms $o(x^{n-2})dx^i dx^j$ in \eqref{18VIII15.6} can be arranged to assume the form $o(x^{n-2})dx^A\,dx^B$.
If $\ell \ge n+1$ and $m\ge 4$  the mass aspect function remains unchanged.
\end{Proposition}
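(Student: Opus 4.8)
The plan is to bring $g$ into geodesic normal form with respect to the conformal boundary: replace $x$ by a \emph{special boundary defining function} $y$ satisfying $|dy|^2_{y^2g}=1$ near $\partial M$, complete $y$ to adapted coordinates by the Gauss lemma, and then track the conormal regularity and the mass aspect function through the construction.

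After rescaling so that $\ell=1$, set $\hat g:=x^2g$; then \eqref{18VIII15.6}--\eqref{18VIII15.8} give, near $\partial M$, $\hat g_{xx}=1+o(x^n)$, $\hat g_{xA}=o(x^n)$, $\hat g_{AB}=(1-\tfrac{k}{4}x^2)^2\ringh_{AB}+x^n\mu_{AB}+o(x^n)$. Writing $y=x\,e^{u}$ with $u|_{x=0}=0$, the condition $|dy|^2_{y^2g}=1$ is a first--order Hamilton--Jacobi equation for $u$ which is non--characteristic at $\partial M$ and whose deviation from the corresponding equation for the model metric is $o(x^n)$; hence it has a unique solution with $u=o(x^n)$, so that $y=x+o(x^{n+1})$ and $x\mapsto y$ is a diffeomorphism near $\partial M$ that fixes the boundary pointwise. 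I then transport the coordinates $x^A$ along the integral curves of the $y^2g$--gradient of $y$, which are unit--speed $y^2g$--geodesics normal to $\partial M$; in the resulting coordinates $(y,y^A)$ the Gauss lemma gives $y^2g=dy^2+h(y)_{AB}\,dy^A\,dy^B$, i.e.\ $g=y^{-2}\big(dy^2+h(y)_{AB}\,dy^A\,dy^B\big)$, with no $dy^2$-- and no $dy\,dy^A$--error term. Since the model metric $\mathring g$ is already in this gauge and the change of coordinates is $o(x^{n+1})$--small, it disturbs neither the $x^0$, $x^2$, $x^4$ terms of $h$ (all of order $<x^{n+1}$ since $n\ge4$) nor the coefficient of $x^n$; hence $h(y)_{AB}=(1-\tfrac{k}{4}y^2)^2\ringh_{AB}+y^n\mu_{AB}+o(y^n)$ with the \emph{same} tensor $\mu_{AB}$, and \eqref{18VIII15.6}--\eqref{18VIII15.8} hold in the new coordinates with the residual error terms of the form $o(y^{n-2})\,dy^A\,dy^B$, as required.

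For the regularity statement I would carry out the standard derivative bookkeeping: solving the Hamilton--Jacobi equation by characteristics costs one differentiability order (the characteristic ODEs involve $\partial\hat g$), while commuting the boundary--tangent vector fields through that equation and through the transport ODE that defines the $y^A$ costs, in total, two conormal orders; so the change of coordinates, and hence $g$ re--expressed in it, is $C^{\ell-1|m-2}$--conformally compactifiable, the residual error being of class $o_{m-2}(y^{n-2})\,dy^A\,dy^B$. If $\ell\ge n+1$, then $\ell-1\ge n$, so the expansions \eqref{18VIII15.6}--\eqref{18VIII15.8}, and with them $\mu_{AB}$ and the mass aspect function, are well defined both before and after; the second paragraph then shows that $\mu_{AB}$ is literally unchanged and that no $\mu_{xx}$-- or $\mu_{xA}$--terms are produced, so $\tilde\Theta=\ringh^{AB}\mu_{AB}=\Theta$. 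Equivalently, the transformation is of the admissible form \eqref{29VII17.1}, with $\psi$ and $X^A$ vanishing and with tails of the conormal class $o_3$ guaranteed by $m\ge4$, so that Lemma~\ref{Lem:MAInv} applies and yields $\tilde\Theta=\Theta$ directly; cf.\ Remark~\ref{R29VII17.1}.

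I expect the genuinely delicate point to be the derivative bookkeeping of the last paragraph: one must verify that the construction loses only one differentiability order and two conormal orders --- in particular that commuting the tangential vector fields through the eikonal and transport equations introduces no further loss --- and that the transformed error term really lies in $o_{m-2}(y^{n-2})\,dy^A\,dy^B$ rather than in a weaker class. The geometric ingredients --- existence of the special defining function, the Gauss--lemma normalisation, and the invariance of the $x^n$--coefficient $\mu$ under an $o(x^{n+1})$--small change of coordinates --- are routine.
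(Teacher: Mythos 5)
Your proposal follows essentially the same route as the paper: solve the eikonal (Hamilton–Jacobi) equation $|dy|^2_{y^2g}=1$ near the boundary (the paper writes this as $|d\ln y|_g=1$ and sets $y=x e^{\chi}$), deduce $y=x+O(x^{n+1})$ from the non-characteristic condition and the decay of the error terms, complete $y$ to Gauss coordinates by transporting the $x^A$ along normal $y^2g$-geodesics to obtain $y^A=x^A+O(x^{n+1})$, track the loss of one classical order and two conormal orders to get $C^{\ell-1|m-2}$ regularity, and finally invoke Lemma~\ref{Lem:MAInv} (with vanishing leading coefficients $\psi$ and $X^A$) to conclude invariance of the mass aspect when $\ell\ge n+1$, $m\ge4$. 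The only small blemish is the parenthetical ``since $n\ge4$'' — the Proposition has no such restriction, and it is anyway not needed: the $x^n$-coefficient $\mu$ is unaffected simply because the change of coordinates is $O(x^{n+1})$-small.
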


\begin{proof}
We solve
\begin{equation}
|d\ln y(x,x^A)|_g = 1
	\label{Eq:01V18-E1}
\end{equation}
under the boundary condition that $y = 0$ when $x = 0$. Writing
\begin{equation}\label{5V18.1}
y = x\,\exp\chi(x,x^A) \,,
\end{equation}
equation \eqref{Eq:01V18-E1} becomes
\[
1 = g^{xx}(\frac{1}{x} + \partial_x \chi)^2 + 2g^{xA}(\frac{1}{x} + \partial_x \chi) \partial_A \chi + g^{AB} \partial_A \chi \partial_B \chi \,.
\]
Rearranging terms, this gives
\begin{equation}\label{4V18.1}
\partial_x \chi + \frac{x}{2} (\partial_x \chi)^2 + \underbrace{\frac{xg^{xA}}{g^{xx}}}_{=O(x^{n+1})}(\frac{1}{x} + \partial_x \chi) \partial_A \chi + \underbrace{\frac{xg^{AB}}{2g^{xx}}}_{=O(x)} \partial_A \chi \partial_B \chi = \underbrace{\frac{x}{2g^{xx}}(1 - \frac{1}{x^2} g^{xx})}_{=O(x^{n-1})}
 \,.
\end{equation}
It is readily seen that the conformal infinity $\{x = 0\}$ is non-characteristic, and so existence of  a  function $\chi$ in a neighbourhood of the boundary follows.
Note that for a $C^{\ell}$-compactifiable metric the source term at the right-hand side of \eqref{4V18.1} is $C^{\ell-1}$ only, which results in a $C^{\ell-1}$ solution. But $\chi$ will be $C^{\ell|m-1}$ for metrics which are $C^{\ell|m}$.

It follows from \eqref{5V18.1}-\eqref{4V18.1} that
\begin{equation}\label{5V18.2}
  y = x + O(x^{n+1})
   \,.
\end{equation}

Now extend  local coordinate functions $x^A$ defined on the conformal infinity to  local coordinate functions $y^A$ defined in a neighborhood thereof, so that $y \mapsto (y,y^A)$ are geodesics with respect to the metric $y^2 g$, orthogonal to the conformal boundary. Since $y^2 g$ is $C^{\ell|m-1}$, $y^A$ is $C^{\ell|m-2}$. In the new coordinate system the metric $g$ is $C^{\ell-1|m-2}$--conformally compactifiable and takes on the desired form.

Let us denote by $\bar g_{ij} $ the metric coefficients of the metric $y^2 g$ in the Gauss coordinates above. From $\bar g_{yy}=1$ and  $\bar g_{yA}=0$ we obtain
\begin{equation}\label{5V18.3}
 y^2   g_{xA} = \frac{\partial y}{\partial x} \frac{\partial y}{\partial x^A}
  + \bar g_{BC} \frac{\partial y^B}{\partial x ^A} \frac{\partial y^C}{\partial x}
   \,.
\end{equation}
Letting $M^{A B}$ denote the matrix inverse to $ \bar g_{BC} \frac{\partial y^B}{\partial x ^A}$, this implies
\begin{equation}\label{5V18.4}
\frac{\partial y^A}{\partial x  }
 = M ^{AB}\left( y^2x^{-2} g_{xB} - \frac{\partial y}{\partial x} \frac{\partial y}{\partial x^B}
\right)
\,.
\end{equation}
Assuming that $m\ge 2$, integrating in $x$ and using \eqref{5V18.2} one obtains
\begin{equation}\label{5V18.6}
  y^A
	 = x^A + O(x^{n+1})
\,.
\end{equation}
If $\ell \geq n+1$ and $m \geq 4$, the invariance of the mass aspect follows now from Lemma~\ref{Lem:MAInv} using Taylor expansions.
\qed
\end{proof}

\subsection{Perturbations of infinity at order $x^{n-2}$,  generalised mass aspect function}
 \label{ss29X17.1}

In view of Lemma \ref{Lem:MAInv}, in order to  change  the mass aspect function via a coordinate transformation, one needs to work with metrics $g$ such that
\[
 \lambda = g - \mathring{g}
\]
does not satisfy \eqref{28VII17.4}. For our later purposes it suffices to consider the case that $\lambda$ decays ``one order slower'' than the decay given by \eqref{28VII17.4}.

As a by-product of our analysis, we will identify, in dimensions $n \geq 5$, a class of such metrics
where the mass equals the integral of a generalised mass aspect function \emph{which  can be changed by a coordinate transformation},
and which coincides with the mass aspect function when  \eqref{28VII17.4} holds. The point is that the mass integrand acquires new terms when the asymptotic coordinate conditions are relaxed, as compared to the ones in  {\eqref{18VIII15.6}-\eqref{18VIII15.8}}. This new integrand is the generalised mass aspect function. One can exploit the freedom gained, together with a subsequent deformation of the metric, to obtain a new nearby metric, with almost the same mass or energy-momentum, which satisfies again the more stringent conditions \eqref{28VII17.4} after the deformation but has now a different mass aspect function.

For $x > 0$ and $\ell = 0, 1, 2, \ldots$, define
\begin{equation}
\Omega_\ell(x) = \sup_{x^C \in \transversemanifold}   \sum_{0 \leq j \leq \ell}
 \,|\mathring{\nabla}^j \lambda(x,x^C)|_{\mathring{g}}
	\,.\label{Eq:30XI17.1}
\end{equation}

We proceed by inspecting the formula for the scalar curvature. To this end, define $\lambda^{ij}$ by raising the indices of $\lambda_{ij}$ with respect to $\mathring{g}$ and define its first order \emph{Newton tensor}
\[
T^{ij} = \lambda^{ij} - \tr_{\mathring{g}}(\lambda)\,\mathring{g}^{ij}.
\]
By \eqref{exp_Ricci},
\[
R[g] - R[\mathring{g}] = \znabla_i \znabla_j T^{ij} - \tr_{\mathring{g}}(T)
	+ O(\Omega_2^2(x))
		\,.
\]
Define $F^i = \znabla_j T^{ij}$. A direct computation gives
\begin{align*}
F^x
	&= \partial_x T^{xx} - \frac{n+1}{x} T^{xx} + \bzmcD _A T^{xA} + \frac{1}{x} \ringh_{AB} T^{AB} \\
		&\qquad\qquad - \frac{(n-1)kx}{2} T^{xx} + O(x^3)T^{xx} + O(x^3) \ringh_{AB}\,T^{AB}
			\,,\\
F^A
	&= \partial_x T^{xA} - \frac{n+2}{x} T^{xA} + \bzmcD _B T^{AB}\\
		&\qquad\qquad - \frac{(n+1)kx}{2} T^{xA} + O(x^3)T^{xB}
			\,,\\
\znabla_i F^i
	&= \partial_x F^x - \frac{n}{x} F^x + \bzmcD _A F^A - \frac{(n-1)kx}{2} F^x + O(x^3) F^x
	\,.
\end{align*}
Therefore
\begin{align}
R[g] - R[\mathring{g}]
	&= \znabla_i F^i - \tr_{\mathring{g}}(T) + O(\Omega_2^2(x))
		\nonumber\\
	&= x^{n+1}\partial_x \Big[x^{-1} \partial_x \left( x^{-n} T^{xx}  \right)
 \nonumber\\
			&\qquad\qquad\qquad\qquad + x^{-n-2} \ringh_{AB} T^{AB} + 2x^{-n-1} \bzmcD _A T^{xA}\Big]\nonumber\\
		&\qquad\qquad - (n-1)k x^{n+1}\partial_x(x^{-n} T^{xx}) - \frac{(n-2)k}{2} \ringh_{AB} T^{AB}\nonumber\\
		&\qquad\qquad - nkx \bzmcD _A T^{xA} + \bzmcD _A \bzmcD _B T^{AB}\nonumber\\
		&\qquad\qquad + O(\max(x^4\,\Omega_1(x), \Omega^2_2(x)))
			\,. \label{28VII17.8}
\end{align}

Now, suppose that we have a development of the $T^{xx}$, $T^{xA}$ and $T^{AB}$'s in series of powers of $x$ at $x = 0$, say starting at order $x^{n_1}$, $x^{n_2}$ and $x^{n_3}$ respectively (for some $n_1, n_2, n_3 \geq 0$). Observe that the contribution of the leading coefficients of $T^{xx}$, $T^{xA}$ and $T^{AB}$'s to the right-hand side of \eqref{28VII17.8} are of order $x^{n_1 - 2}$, $x^{n_2 -1}$ and $x^{n_3 - 2}$ respectively, except for the following four cases:

\begin{enumerate}[(i)]
\item If $n_1 = n$, there is no contribution from the leading coefficient of $T^{xx}$.
\item If $n_1 = n+2$,  the leading coefficient of $T^{xx}$ contributes a term of order $O(x^{n+2})$.
\item If $n_2 = n+1$,  the leading coefficients of $T^{xA}$ contribute a term of order $O(x^{n+2})$.
\item If $n_3 = n+2$,  the leading coefficients of $T^{AB}$ contribute a term of order $O(x^{n+2})$.
\end{enumerate}

On the other hand, in view of \eqref{Rabc3b}, $R[g] - R[\mathring{g}]$ should decay faster than $O(x^n)$. This leads us to consider metrics $g$ such that the tensor $T$ satisfies
\begin{align}
T^{xx}
	& = x^n\Tone{}^{xx}(x^C) + x^{n+2} \Ttwo{}^{xx}(x^C) + o_2(x^{n+2})
	\,,\label{28VII17.9xx}\\
T^{AB}
	& =  x^{n+2} \Ttwo{}^{AB}(x^C) + o_2(x^{n+2})
	\,,\label{28VII17.9AB}\\
T^{xA}
	& =  x^{n+1} \Tohalf{}^{xA}(x^C) + o_2(x^{n+1})
	\,.\label{28VII17.9xA}
\end{align}
Under these assumptions, the sum of the last three lines in \eqref{28VII17.8} is of order $O(x^{n+1})$,
 and the mass can be computed as the integral over ${\transversemanifold}$ of the leading term of the sum contained in the square bracket on the right-hand side of \eqref{28VII17.8}. For this class of metrics one can thus define the \emph{generalised mass aspect function} as
\begin{equation}
\Theta := -\frac{1}{n}\ringh_{AB} \Ttwo{}^{AB} - \frac{2}{n} \bzmcD _A\,\Tohalf{}^{xA} - \frac{2}{n} \Ttwo{}^{xx}
	\,. \label{28VII17.MAX}
\end{equation}
It should be clear that, when $\lambda$ satisfies \eqref{28VII17.4}, the above formula simplifies to \eqref{28VII17.6}.

We note that, under \eqref{28VII17.9xx}-\eqref{28VII17.9xA}, we have
\begin{equation}
\lambda_{xx} = O_{2}(x^{n-4})\,,\ \lambda_{AB} = O_{2}(x^{n-4}) \text{ and } \lambda_{xA} = O_{2}(x^{n-3})
	\,.
	\label{30VII17.1}
\end{equation}
If one asks that $\Omega_2(x)$ decays slightly better than $x^{\frac{n-2}{2}}$ (so that the mass can be defined), one is led  to the restriction $n \geq 5$.

In addition to the coordinate transformations already studied in Lemma \ref{Lem:MAInv}, there is another type of coordinate transformations which preserves the asymptotic behaviors \eqref{28VII17.9xx}-\eqref{28VII17.9xA}: $x \mapsto \bar x = x + x^{n-1}\,\psi(x^C)$. As we will now see, this change of variable leads to a change in the generalised mass aspect function.

\begin{lemma}\label{Lem:UChangeVar}
 Let $n \ge 3$. Assume that the metric $g$ satisfies \eqref{30VII17.1}.
Under the coordinate transformation
$$
(x,x^A) \mapsto (\bar x = x + x^{n-1}\,\psi(x^C),x^A)
	\,,
$$
with $\psi \in C^3(N^{n-1})$,
the tensor $T$ transforms as follows
\begin{align*}
T^{xx}
	&\rightarrow T^{xx}(\bar x,x^C) - 2(n-1)\bar x^{n}\,\psi(x^C) - k(n-1)\bar x^{n + 2}\,\psi(x^C)
			+ O(\bar x^{\min(n+4,2n-2)})
				\,,\\
T^{xA}
	&\rightarrow T^{xA}(\bar x,x^C) - \bar x^{n+1}\,\bzmcD ^A \psi(x^C)
            +  O(\bar x^{\min{}(n+3,2n-1)})
		\,,\\
T^{AB}
	&\rightarrow T^{AB}(\bar x,x^C) - k(n-2) \bar x^{n+2}\,\psi(x^C)\,\ringh^{AB}(x^C)+ O(\bar x^{\min(n+4,2n-2)})
		\,.
\end{align*}
\end{lemma}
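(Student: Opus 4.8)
The plan is to carry out the coordinate change $\bar x = x + x^{n-1}\psi(x^C)$, $\bar x^A = x^A$, directly at the level of the one-form $dx$ and then track how each component $T^{xx}$, $T^{xA}$, $T^{AB}$ transforms under the resulting change of frame. First I would invert the relation $\bar x = x + x^{n-1}\psi$ to write $x = \bar x - \bar x^{n-1}\psi(x^C) + O(\bar x^{2n-3})$, and compute
\[
dx = \bigl(1 - (n-1)\bar x^{n-2}\psi\bigr)\,d\bar x - \bar x^{n-1}\,\bzmcD_A\psi\;d\bar x^A + O(\bar x^{2n-3})\,d\bar x + O(\bar x^{2n-2})\,d\bar x^A\,,
\]
keeping in mind that $\psi$ depends only on $x^C = \bar x^C$, so differentiating $\psi$ in the $x^A$-directions produces the $\bzmcD_A\psi$ term while differentiating in $x$ produces nothing. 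Since $\lambda = g - \mathring g$ already carries a factor $x^{n-2}$ or better by \eqref{30VII17.1}, its transformation under this near-identity map is $\lambda(\bar x) + O(\bar x^{2n-4})$ to leading order, so the nontrivial change comes entirely from the transformation of the \emph{reference} metric $\mathring g$, i.e. from the mismatch between $\mathring g$ expressed in the old versus the new coordinates.

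The key computation is therefore to expand
\[
\mathring g = x^{-2}\Bigl(dx^2 + (1 - \tfrac k4 x^2)^2\,\ringh_{AB}(x^C)\,dx^A dx^B\Bigr)
\]
in the $\bar x$-coordinates and read off the discrepancy $\mathring g - \mathring{\bar g}$, where $\mathring{\bar g}$ is the reference metric built from $\bar x$. Using $x^{-2} = \bar x^{-2}(1 + 2\bar x^{n-2}\psi + O(\bar x^{2n-4}))$ and the expansion of $dx$ above, the $d\bar x^2$-coefficient of $\mathring g$ picks up a term $2(n-1)\bar x^{n-2}\psi$ relative to $\bar x^{-2}$ — hence a correction of order $\bar x^{n}$ to $T^{xx}$ once one accounts for the factor $\bar x^4$ relating the lowered and raised $xx$-components and the definition $T^{xx} = \lambda^{xx} - \tr_{\mathring g}(\lambda)\mathring g^{xx}$ — together with the subleading $k$-dependent piece of order $\bar x^{n+2}$ coming from the $(1-\tfrac k4 x^2)^2$ factor. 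Similarly, the cross term $-\bar x^{n-1}\bzmcD_A\psi$ in $dx$ couples to the $x^{-2}\,dx^2$ part to give the $\bar x^{n+1}\bzmcD^A\psi$ shift in $T^{xA}$, and the $k$-dependent expansion of $(1-\tfrac k4 x^2)^2 \ringh_{AB}(x^C)$ together with $x^{-2} = \bar x^{-2}(1+2\bar x^{n-2}\psi+\cdots)$ produces the $-k(n-2)\bar x^{n+2}\psi\,\ringh^{AB}$ shift in $T^{AB}$. The error terms $O(\bar x^{\min(n+4,2n-2)})$, etc., are exactly the quadratic-in-$\psi$ contributions and the $O(\bar x^{2n-4})$ feedback on $\lambda$; the two candidate orders $n+4$ and $2n-2$ reflect whether the dominant error is the next term in the Taylor expansion of the reference metric ($x^2$-suppressed, giving $n+2+2 = n+4$) or the quadratic $\psi^2$-term ($\bar x^{2(n-2)} = \bar x^{2n-4}$ against the $\bar x^{-2}$, i.e. $2n-2$ after the index raising).

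The main obstacle I expect is bookkeeping: correctly propagating the near-identity map through the three-step chain $\lambda_{ij} \to \lambda^{ij} \to T^{ij}$, since the index raising is done with $\mathring g$, which is itself what changes, and the trace term $\tr_{\mathring g}(\lambda)\mathring g^{ij}$ mixes the $xx$, $xA$, $AB$ blocks. One must be careful that $\ringh$ is regarded as a \emph{fixed} metric on $\transversemanifold$ — it is not transformed, because $\bar x^A = x^A$ — so the only source of new terms is the $x$-dependence entering through $x^{-2}$, through $(1-\tfrac k4 x^2)^2$, and through $dx$; there is no $\partial_D\ringh_{AB}X^D$-type term here, in contrast with the proof of Lemma~\ref{Lem:MAInv}. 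Once the $d\bar x$-expansion is in hand, the rest is a routine matching of coefficients order by order in $\bar x$, and the three displayed transformation formulae follow, with the stated error orders, for any $n \ge 3$.
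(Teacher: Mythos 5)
Your plan mirrors the paper's proof step for step: invert $\bar x = x + x^{n-1}\psi$, expand the reference metric $\mathring{g}$ in the $\bar x$-coordinates to read off the induced shift in $\bar\lambda = g - \mathring{\bar g}$, and pass to the Newton tensor via index-raising with $\mathring{\bar g}$ and trace subtraction, correctly observing that $\ringh$ itself is untouched because the angular coordinates are not transformed. The only slip is in a quoted intermediate coefficient --- the $d\bar x^2$-coefficient of $\mathring{g}$ picks up $-2(n-2)\bar x^{n-4}\psi$, not $+2(n-1)\bar x^{n-2}\psi$ relative to $\bar x^{-2}$, and the factor $-2(n-1)$ in the $T^{xx}$-shift appears only after the trace subtraction you mention --- but this is precisely the bookkeeping you flag, and carrying your outline through carefully reproduces the stated formulae.
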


This lemma will not be needed in our main results,  we therefore defer its proof to Appendix~\ref{A28XII17.1}.

\begin{corollary}\label{Cor:UChangeVarMA}
 Let $n \ge 5$. Then the coordinate transformation
$$
(x,x^A) \mapsto (\bar x = x + x^{n-1}\,\psi(x^C),x^A)
$$
preserves the asymptotic conditions \eqref{28VII17.9xx}-\eqref{28VII17.9xA}. Furthermore, the new and old generalised mass aspect functions defined in \eqref{28VII17.MAX} are related by
\[
\Theta_{\mathrm{new}} = \Theta_{\mathrm{old}} + \frac{2}{n} \bzmcD ^A\bzmcD _A \psi + k(n-1)\psi.
\]
\end{corollary}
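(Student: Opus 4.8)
The plan is to deduce the Corollary directly from Lemma~\ref{Lem:UChangeVar}. First I would observe that the hypotheses \eqref{28VII17.9xx}--\eqref{28VII17.9xA} imply the decay \eqref{30VII17.1}, so the transformation laws of Lemma~\ref{Lem:UChangeVar} apply to $(x,x^A)\mapsto(\bar x = x + x^{n-1}\psi(x^C),\,x^A)$. Those laws already express the new components $T^{xx}$, $T^{xA}$, $T^{AB}$ as functions of $\bar x$, so it suffices to insert the expansions \eqref{28VII17.9xx}--\eqref{28VII17.9xA}, with $\bar x$ in place of $x$, into their right-hand sides. The explicit correction terms $-2(n-1)\bar x^{n}\psi-k(n-1)\bar x^{n+2}\psi$, $-\bar x^{n+1}\bzmcD^A\psi$ and $-k(n-2)\bar x^{n+2}\psi\,\ringh^{AB}$ carry exactly the powers $\bar x^{n},\bar x^{n+2}$ (for $T^{xx}$), $\bar x^{n+1}$ (for $T^{xA}$), $\bar x^{n+2}$ (for $T^{AB}$) permitted by \eqref{28VII17.9xx}--\eqref{28VII17.9xA}, while the remainders (of order $O(\bar x^{\min(n+4,2n-2)})$ in $T^{xx}$ and $T^{AB}$, and $O(\bar x^{\min(n+3,2n-1)})$ in $T^{xA}$) lie strictly beyond orders $\bar x^{n+2}$, resp.\ $\bar x^{n+1}$, the binding condition being $2n-2>n+2$, i.e.\ $n\ge5$; this order count is the one quantitative point of the proof and is where the restriction $n\ge5$ is used. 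Hence the transformed metric again satisfies \eqref{28VII17.9xx}--\eqref{28VII17.9xA}, now with
\begin{align*}
\Tone{}^{xx}_{\mathrm{new}} &= \Tone{}^{xx} - 2(n-1)\psi\,,
&
\Ttwo{}^{xx}_{\mathrm{new}} &= \Ttwo{}^{xx} - k(n-1)\psi\,,
\\
\Tohalf{}^{xA}_{\mathrm{new}} &= \Tohalf{}^{xA} - \bzmcD^A\psi\,,
&
\Ttwo{}^{AB}_{\mathrm{new}} &= \Ttwo{}^{AB} - k(n-2)\psi\,\ringh^{AB}\,.
\end{align*}

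Next I would substitute these coefficients into the definition \eqref{28VII17.MAX} of the generalised mass aspect function and collect terms. Using $\ringh_{AB}\ringh^{AB}=n-1$, the $\Ttwo{}^{AB}$--term produces an extra $\frac{k(n-1)(n-2)}{n}\psi$, the $\Tohalf{}^{xA}$--term an extra $\frac{2}{n}\bzmcD^A\bzmcD_A\psi$, and the $\Ttwo{}^{xx}$--term an extra $\frac{2k(n-1)}{n}\psi$; since $k(n-1)(n-2)+2k(n-1)=kn(n-1)$, the three $\psi$--terms combine into $k(n-1)\psi$, which yields
\[
\Theta_{\mathrm{new}} = \Theta_{\mathrm{old}} + \frac{2}{n}\bzmcD^A\bzmcD_A\psi + k(n-1)\psi
\]
as claimed.

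The argument is essentially bookkeeping: the substantive content — the transformation of the Newton tensor $T$ under $x\mapsto x+x^{n-1}\psi$, together with the precise orders of the error terms — is all contained in Lemma~\ref{Lem:UChangeVar}, proved in Appendix~\ref{A28XII17.1}. I therefore expect no serious obstacle here; the only step deserving attention is the verification, indicated above, that the remainders in Lemma~\ref{Lem:UChangeVar} do not intrude at orders $\bar x^{n+1}$ or $\bar x^{n+2}$, which is precisely what forces $n\ge5$.
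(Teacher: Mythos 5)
Your argument is correct and is precisely the deduction the paper intends (the paper states the corollary without giving a separate proof). You correctly identify the one quantitative point: the remainders in Lemma~\ref{Lem:UChangeVar} intrude below orders $\bar x^{n+2}$ (in the $T^{xx}$ and $T^{AB}$ slots) and $\bar x^{n+1}$ (in the $T^{xA}$ slot) unless $2n-2 > n+2$, which is exactly $n\ge 5$; and your algebraic bookkeeping, using $\ringh_{AB}\ringh^{AB}=n-1$ and $k(n-1)(n-2)+2k(n-1)=kn(n-1)$, correctly reproduces $\Theta_{\mathrm{new}} = \Theta_{\mathrm{old}} + \tfrac{2}{n}\bzmcD^A\bzmcD_A\psi + k(n-1)\psi$. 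One minor point you could make explicit: the hypotheses \eqref{28VII17.9xx}--\eqref{28VII17.9xA} and the statement you are proving require $o_2$-control of the remainders, whereas Lemma~\ref{Lem:UChangeVar} as stated gives only plain $O$ estimates; this upgrade is routine (the proof in Appendix~\ref{A28XII17.1} works at the level of Taylor expansions and preserves differentiated bounds) but deserves a word.
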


\subsection{Perturbation of infinity at order $\frac{x^{n-2}}{|\ln x|}$}
\label{ss22XII17.1}

The requirement that $n$ be at least five in the previous subsection is quite restrictive. Furthermore, the most direct application of the results in that subsection to the proof of our deformation theorem will introduce a perturbation of order $O(x^{n-1})$ in the $T^{xA}$ component, which contributes an error estimation of order $O(\max(x^{n+1}, x^{2n-6}))$ in the scalar curvature (cf. \eqref{28VII17.8}), which does not decay fast enough to ensure the integrability condition \eqref{Rabc3b} in dimensions $n = 4, 5, 6$.
{By arranging a suitable form for $T^{xA}$ and making an appropriate change of the angular variables, the error estimation can improved to $O(\max(x^{n+2}, x^{2n-4}))$, which takes care of dimensions $n = 5,6$. In dimension $n = 4$, we circumvent the above complication by working with metrics  which are, roughly speaking, perturbations of infinity at order $O(\frac{x^{n-2}}{|\ln x|})$, which is slightly milder than that in the previous subsection.

For this,
let $x \mapsto \Xi(x)$ be a smooth function which is defined for small positive values of $x$ and satisfies for some $\ell \geq 0$ that
\begin{equation}\label{18XII17.31}
 \Lambda_{\ell + 1}(x)  = O(1)
 \,,
\ \textrm{where}
\
\Lambda_{\ell +1}(x) := \sum_{0 \leq l \leq \ell +1} x^l\,|\partial_x^l \Xi|
	\,.
\end{equation}
In the notation of \eqref{18XII17.22}, since $\Xi$ depends only upon $x$ it holds that $\Xi(x) = O_j(\Lambda_j(x) )$ for any $0\le j \le \ell+1$. (For readers who would like to zoom ahead to the proof of our deformation theorem, $\Xi$ will be chosen so that $\Lambda_\ell(x) = O(\frac{x^{n-2}}{|\ln x|})$, but we do not assume this in the present section.)

We have:

\begin{lemma}
 \label{Lem:UChangeVarAng}
Let $n \ge 3$,  $\ell \geq 0$, and suppose that \eqref{18XII17.31} holds.
Assume that there is a positive function $x \mapsto \Upsilon(x)$
such that, for small values of $x$, it holds
\begin{eqnarray}
&
\nn
\lambda_{xx} = O_\ell(x^{-4} \Upsilon(x))\,,
 \ \lambda_{AB} = O_\ell(x^{-4} \Upsilon(x))\,, \
  \lambda_{xA} = O_\ell(x^{-5} \Upsilon(x))
   \,,
&
\\
&\zmcD _A\lambda_{xx} =  O_\ell(x^{-4} \Upsilon(x))\,,
 \ \zmcD _C\lambda_{AB} =  O_\ell(x^{-4} \Upsilon(x))\,,
   \text{ and } \zmcD _B\lambda_{xA} =  O_\ell(x^{-5} \Upsilon(x))
	\,.&
 \nn
\\
 &&
 \label{26XI17.1}
\end{eqnarray}
Let $X\equiv X^A\partial_A$ be a
$C^{\ell+1}$
section of $T\transversemanifold$.
Then, under the coordinate transformation,
\begin{equation}\label{28XII17.1}
(x,x^A) \mapsto \big(x,y^A = x^A + \Xi(x)\,X^A(x^C)\big)
	\,,
\end{equation}
the tensor $T$ transforms as follows
\begin{align*}
T^{xx}
	&\rightarrow T^{xx}(x,y^C) + 2 x^2\,\Xi(x)\,\bzmcD _A X^A(y^C)
			+ O_{\ell}(x^2 \Lambda_{\ell+1}^2(x) +  \Lambda_{\ell+1}^3(x)+ x^{-2}\, \Upsilon(x)\,\Lambda_{\ell}(x))
				\,,\\
T^{xA}
	&\rightarrow T^{xA}(x,y^C) - x^2 \Xi'(x)\,X^A (y^C)
		  + O_{\ell}(x \Lambda_{\ell+1}^2(x) + x^{-{1}}\Upsilon(x)\,\Lambda_{\ell}(x))
   \,,\\
T^{AB}
	&\rightarrow T^{AB}(x,y^C) - x^2\,\Big(1 - \frac{k}{4}x^2\Big)^{-2}\,\Xi(x)\,(\bzmcD ^A X^B + \bzmcD ^B X^A - 2\bzmcD _D X^D\,\ringh^{AB})(y^C)\\
		&\qquad\qquad - x^2  [\Xi'(x)]^2\,|X|_{\mathring h}^2\,\ringh^{AB}
			+ O_{\ell}(x^2 \Lambda_{\ell+1}^2(x) +  \Lambda_{\ell+1}^3(x)+ x^{-2}\,\Upsilon(x)\,\Lambda_{\ell}(x))
		\,,
\end{align*}
where the implicit constants in the big $O$ terms depend only on the implicit constants in
 \eqref{18XII17.31} and \eqref{26XI17.1},  $\|X\|_{C^{{\ell + 1}}(\transversemanifold)}$, $n$ and $\ell$.
\end{lemma}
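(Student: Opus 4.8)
The plan is to follow the scheme of the proof of Lemma~\ref{Lem:MAInv}, while carefully tracking all error terms. The key observation is that $T^{ij} = \lambda^{ij} - \tr_{\mathring g}(\lambda)\,\mathring g^{ij}$ changes for two distinct reasons: the reference metric $\mathring g = x^{-2}\mathrm{d}x^2 + c(x)\,\ringh_{AB}\,\mathrm{d}x^A\mathrm{d}x^B$, where I abbreviate $c(x) := x^{-2}(1-\frac{k}{4}x^2)^2$, is \emph{not} preserved by \eqref{28XII17.1} — and it is this mechanism that produces the displayed leading terms — whereas the mere re-expression of $\lambda$ in the new coordinates generates only errors, controlled by the hypotheses \eqref{26XI17.1}.

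First I would invert \eqref{28XII17.1}, writing $x^A = y^A - \Xi(x)\,X^A(y^C) + O_\ell(\Lambda_{\ell+1}^2)$, and record the Jacobian: $\partial x/\partial x = 1$, $\partial x/\partial y^A = 0$, $\partial x^A/\partial y^B = \delta^A{}_B - \Xi\,\partial_B X^A(y^C) + O_\ell(\Lambda_{\ell+1}^2)$ and $\partial x^A/\partial x = -\Xi'(x)\,X^A(y^C) + O_\ell(x^{-1}\Lambda_{\ell+1}^2)$, keeping track of the associated powers of $x$. Here the subscript $\ell$ refers to $\transversemanifold$-derivatives; since $\Xi$ depends on $x$ only, with $X$ carrying the angular dependence, controlling $\ell$ derivatives of any object built by composition with the flow of $\Xi X$ (cf.\ Remark~\ref{R27VIII17.1}) costs one extra derivative of $\Xi$ and of $X$, which is the origin of $\Lambda_{\ell+1}$ and $\|X\|_{C^{\ell+1}(\transversemanifold)}$ rather than $\Lambda_\ell,\|X\|_{C^\ell}$ in the final bounds.

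Next, in the new chart set $\bar\lambda := g - \mathring{\tilde g}$, where $\mathring{\tilde g} = x^{-2}\mathrm{d}x^2 + c(x)\,\ringh_{AB}(y^C)\,\mathrm{d}y^A\mathrm{d}y^B$ is the reference metric in the $(x,y^A)$-frame, and decompose $\bar\lambda = \delta\mathring g + \lambda^\sharp$, where $\delta\mathring g := \mathring g - \mathring{\tilde g}$ and $\lambda^\sharp$ denotes the tensor $\lambda = g - \mathring g$ re-expressed in the new coordinates. The term $\delta\mathring g$ is computed exactly as in the proof of Lemma~\ref{Lem:MAInv}: substituting the Jacobian into $c(x)\ringh_{AB}(x^C)\mathrm{d}x^A\mathrm{d}x^B$ and Taylor-expanding $\ringh_{AB}(x^C) = \ringh_{AB}(y^C) - \Xi X^C\partial_C\ringh_{AB}(y^C) + O_\ell(\Lambda_{\ell+1}^2)$, the combination $\partial_C\ringh_{AB}X^C + 2\ringh_{C(A}\partial_{B)}X^C = \bzmcD_A X_B + \bzmcD_B X_A$ appears (equivalently, modulo the remainders below $\delta\mathring g$ is $-\Xi$ times the Lie derivative of $\mathring g$ along $X$, plus the $[\Xi']^2$-correction forced by the $x$-dependence of $\Xi$), giving
\begin{align*}
 \delta\mathring g_{xx} &= c(x)\,[\Xi'(x)]^2\,|X|_{\mathring h}^2 + O_\ell(x^{-4}\Lambda_{\ell+1}^3)\,,\\
 \delta\mathring g_{xA} &= -c(x)\,\Xi'(x)\,X_A(y^C) + O_\ell(x^{-3}\Lambda_{\ell+1}^2)\,,\\
 \delta\mathring g_{AB} &= -c(x)\,\Xi(x)\,(\bzmcD_A X_B + \bzmcD_B X_A)(y^C) + O_\ell(x^{-2}\Lambda_{\ell+1}^2)\,,
\end{align*}
with $X_A = \ringh_{AB}X^B$, while the transformation law plus a Taylor expansion give $\lambda^\sharp_{ij}(x,y^C) = \lambda_{ij}(x,y^C)$ up to corrections which, by \eqref{26XI17.1}, are of size $O_\ell(\Lambda_{\ell+1})$ times the relevant $\lambda$-components or their $\zmcD$-derivatives. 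Raising indices with $\mathring{\tilde g}$ (block-diagonal, $\mathring{\tilde g}^{xx}=x^2$, $\mathring{\tilde g}^{AB}=c(x)^{-1}\ringh^{AB}$, and agreeing with $\mathring g$ as a coordinate expression) and forming $\bar T^{ij} = \bar\lambda^{ij} - \tr_{\mathring{\tilde g}}(\bar\lambda)\,\mathring{\tilde g}^{ij}$, the $\delta\mathring g$-part reproduces precisely the displayed leading terms: the $[\Xi']^2$-pieces of $(\mathring{\tilde g}^{xx})^2\delta\mathring g_{xx}$ and of $-\tr_{\mathring{\tilde g}}(\delta\mathring g)\,\mathring{\tilde g}^{xx}$ cancel, leaving $2x^2\Xi\,\bzmcD_A X^A$ for $T^{xx}$; $\delta\mathring g_{xA}$ raised gives $-x^2\Xi'X^A$ for $T^{xA}$; and $\delta\mathring g_{AB}$ raised, together with the trace contributions of $\delta\mathring g_{xx}$ and $\delta\mathring g_{AB}$, gives $-x^2 c(x)^{-1}\Xi(\bzmcD^A X^B + \bzmcD^B X^A - 2\bzmcD_D X^D\,\ringh^{AB}) - x^2(\Xi')^2|X|_{\mathring h}^2\ringh^{AB}$ for $T^{AB}$ — exactly the stated formulas, since $c(x)^{-1} = x^2(1-\frac{k}{4}x^2)^{-2}$. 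The $\lambda^\sharp$-part reproduces $T^{ij}(x,y^C)$ modulo errors, where one also uses that raising $\lambda^\sharp$ with $\mathring{\tilde g}$ rather than with the old reference metric in the new chart costs only an $O_\ell(\Lambda_{\ell+1}\Upsilon)$-type error, the two reference metrics differing by $\delta\mathring g$.

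The main obstacle is the last, purely computational, step: one must verify that every remainder — from the $O_\ell(\Lambda_{\ell+1}^2)$/$O_\ell(\Lambda_{\ell+1}^3)$ tails of $\delta\mathring g$, from the Taylor and Jacobian corrections to $\lambda^\sharp$, and from the off-diagonal $\mathrm{d}x\,\mathrm{d}y^A$ entries of $\mathring g$ in the new chart (of size $O(x^{-2}\Lambda_{\ell+1})$, which must be shown not to contaminate the leading part of the index-raising) — once multiplied by the powers of $x$ coming from $\mathring{\tilde g}^{-1}$ and combined with the differing $x$-decays of $\lambda_{xx},\lambda_{xA},\lambda_{AB}$ in \eqref{26XI17.1}, lands inside $O_\ell(x^2\Lambda_{\ell+1}^2 + \Lambda_{\ell+1}^3 + x^{-2}\Upsilon\Lambda_\ell)$ for $T^{xx}$ and $T^{AB}$, respectively inside $O_\ell(x\Lambda_{\ell+1}^2 + x^{-1}\Upsilon\Lambda_\ell)$ for $T^{xA}$; and, simultaneously, that applying up to $\ell$ of the admissible derivatives preserves these bounds, each such derivative falling either on $\Xi$ (needing $\Lambda_{\ell+1}$), on $X$ (needing $\|X\|_{C^{\ell+1}}$), or on $\lambda$ (needing the $\zmcD$-derivative hypotheses in \eqref{26XI17.1}). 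Everything else — the explicit leading terms — follows mechanically from the two displays above.
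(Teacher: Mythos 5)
Your scheme is essentially the paper's proof, reorganised: the decomposition $\bar\lambda = \delta\mathring g + \lambda^\sharp$ is a transparent bookkeeping of the two ingredients (change of reference metric; re-expression of $\lambda$) that the paper computes together when it assembles $\tilde\lambda$, and your Jacobian, your three formulae for $\delta\mathring g$, the cancellation of the $[\Xi']^2$ pieces in $T^{xx}$, and the $T^{xA}$ and $T^{AB}$ leading terms all match the paper's.

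What you flag as ``the main obstacle'' --- checking that the remainders land inside the stated $O_\ell$ bounds --- is precisely the quantitative content of the lemma, and deferring it leaves the proof incomplete. That check requires, for each of the up-to-$\ell$ $\znabla$-derivatives in an $O_\ell$ estimate, tracking whether the derivative falls on $\Xi$ or on $\Xi'$ (contributing a factor $\Lambda_\ell$ respectively $x^{-1}\Lambda_{\ell+1}$), on $X$ (hence $\|X\|_{C^{\ell+1}}$), or on $\lambda$ (hence the $\zmcD$-derivative hypotheses in \eqref{26XI17.1}), and doing this across the three index slots with their differing $x$-weights; the paper carries this out line by line through $\tilde\lambda$, $\tr_{\mathring{\tilde g}}\tilde\lambda$, $\tilde T^{xx}$, $\tilde T^{xA}$, $\tilde\lambda^{AB}$, $\tilde T^{AB}$. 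A concrete sign that the bookkeeping is not automatic: you estimate the off-diagonal $\delta\mathring g_{xA}$ as $O(x^{-2}\Lambda_{\ell+1})$, but $\delta\mathring g_{xA} = -c(x)\,\Xi'(x)\,X_A$ with $c(x)=O(x^{-2})$ and $\Xi' = O(x^{-1}\Lambda_{\ell+1})$, so it is in fact $O(x^{-3}\Lambda_{\ell+1})$, one power of $x$ worse than you wrote. Carry the remainder computation out in full --- your decomposition is perfectly serviceable scaffolding for it --- before calling the proof finished.
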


\begin{proof} In the new coordinate system $(x, y^A)$, we will use $\tilde \lambda$ to denote the difference between $g$ and the new reference metric
$$\mathring{\tilde g} = x^{-2}\Big[\mathrm{d}x^2 + \Big(1 - \frac{k}{4} x^2\Big)^2\ringh_{AB}(y^C)\,\mathrm{d}y^A\,\mathrm{d}y^B
 \Big]
 \,.
$$
We will accordingly use a tilde to refer to the metric components of $\tilde \lambda$, its Newton tensor etc.

Define the matrix $ M \equiv (M^A{}_B) $ by
\begin{equation}\label{18XII17.1}
   M^A{}_B   \equiv
 M^A{}_B(x, y^C)
 =
 \delta^A_B + \Xi(x) \frac{\partial X^A}{\partial x^B}\big(x^D(y^C)\big)
 \,.
\end{equation}
We have
 \begin{align*}
\mathrm{d}x^A
 &=
  (M^{-1})^A{}_B\left(
 \mathrm{d}y^B - \Xi'(x) X^B\big(x^D (y^C)\big)\mathrm{d}x
 \right)
\\
	&= [-\Xi'(x)\,X^A(y^C) +
 O_{\ell}(x^{-1}\Lambda_{\ell+1}^2(x))
	] \mathrm{d}x \\
		&\qquad\qquad  + [\delta^A{}_B - \Xi(x)\,\partial_B X^A(y^C)
			+ O_{\ell}(\Lambda_{\ell}^2(x))
		]\mathrm{d}y^B
		\,,
\\
\ringh_{AB}(x^C)
	&= \ringh_{AB}(y^C) - \Xi(x) \partial_{D} \ringh_{AB}(y^C)\,X^D(x^C) + O_{\ell}( \Lambda_\ell^2(x) )
\\
	&= \ringh_{AB}(y^C) - \Xi(x) \partial_{D} \ringh_{AB}(y^C)\,X^D( y^C ) +O_{\ell}( \Lambda_{{\ell}}^2(x) )
	\,.
\end{align*}
This implies that
\begin{align*}
&\ringh_{AB}(x^C)\,\mathrm{d}x^A\,\mathrm{d}x^B\\
	&\qquad = \ringh_{AB}(y^C)\,\mathrm{d}y^A\,\mathrm{d}y^B  +[\Xi'(x)]^2\,|X|_{{\mathring h}}^2(y^C) \mathrm{d}x^2 \\
		&\qquad\qquad + [\Xi'(x)]^2\,h_{AB}(y^C)\,X^A(y^C)\,X^B(y^C)\,\mathrm{d}x^2\\
		&\qquad\qquad - 2\Xi'(x) \ringh_{AB}(y^C)\,X^B(y^C)\,\mathrm{d}x\,\mathrm{d}y^A\\
		&\qquad\qquad - \Xi(x)\,\underbrace{[\partial_D \ringh_{AB}\,X^D + 2\ringh_{D(A}\partial_{B)} X^D]}_{=\bzmcD _B X_A + \bzmcD _A X_B}(y^C)\,\mathrm{d}y^A\,\mathrm{d}y^B
\\
&\qquad\qquad
			+ O_{\ell}(x^{-2}\,\Lambda_{\ell+1}^3(x))\mathrm{d}x^2  + O_{\ell}(x^{-1}\,\Lambda_{\ell+1}^2(x)) \mathrm{d}x\,\mathrm{d}y^A
 + O_{\ell}(\Lambda_{{ \ell }}^2(x) )
 \mathrm{d}y^A\,\mathrm{d}y^B
 \,,
\end{align*}
where $X_A = \ringh_{AB}X^B$.
Using the trivial identity
\begin{equation}\label{20XII17.1}
 \lambda(x,x^D (y^C))= \lambda(x,y^C )- \int_0^1\frac{d\Big(\lambda\big(x,x^E (y^C) + s \Xi(x)X^E(x^D (y^C))\big)\Big)}{ds} ds
\end{equation}
to replace every occurrence of  $ \lambda(x,x^D(y^C))$ by $\lambda(x,y^C ) $, together with the hypothesis \eqref{26XI17.1} to estimate the associated error terms, a calculation gives
\begin{align}
 \nn
\tilde \lambda
	&= \Big[\lambda_{xx}(x,y^C)  + x^{-2}\,\Big(1 - \frac{k}{4}x^2\Big)^2\, [\Xi'(x)]^2\,|X|_{{\mathring h}}^2(y^C) \Big]\, \mathrm{d}x^2
\\
 \nn
		&\qquad + 2\Big[\lambda_{xA}(x,y^C) - x^{-2}\,\Big(1 - \frac{k}{4}x^2\Big)^2\Xi'(x)\,X_A(y^C) \Big] \,\mathrm{d}x\mathrm{d}y^A\nonumber
\\
 \nn
		&\qquad + \Big[\lambda_{AB}(x,y^C)  -  x^{-2}\,\Big(1 - \frac{k}{4}x^2\Big)^2\, \Xi(x)(\bzmcD _B X_A + \bzmcD _A X_B)(y^C)\Big]\,\mathrm{d}y^A\mathrm{d}y^B\nonumber
\\
 \nn
		&\qquad + O_{\ell}(x^{-4}\,\Lambda_{\ell+1}^3(x)+ x^{-{6}}\,\Upsilon(x)\,\Lambda_{\ell}(x))\mathrm{d}x^2
\\
 \nn
		&\qquad + O_{\ell}(x^{-3} \Lambda_{\ell+1}^2(x)  + x^{-{5}}\,\Upsilon(x)\,\Lambda_{\ell}(x)) \mathrm{d}x\,\mathrm{d}y^A
\\
		&\qquad + O_{\ell}( x^{-2} \Lambda_{\ell+1}^2(x)  + x^{-4}\,\Upsilon(x)\, \Lambda_{\ell}(x))\mathrm{d}y^A\,\mathrm{d}y^B
			\,.
 \label{20XII17.1a}
\end{align}

We next compute the tensor $\tilde T$:
\begin{align*}
\tr_{\mathring{\tilde g}}(\tilde \lambda)
	& = x^2 \,\tilde \lambda_{xx} +  x^2 \Big(1 - \frac{k}{4} x^2\Big)^{-2} \,\ringh^{AB}\,\tilde \lambda_{AB}\\
	& =  \tr_{\mathring{g}}(\lambda)
		 + \Big(1 - \frac{k}{4}x^2\Big)^2\, [\Xi'(x)]^2\,|X|_{{\mathring h}}^2 - 2\Xi(x)\,\bzmcD _A X^A
				\\
				&\qquad\qquad  + O_{\ell}(\Lambda_{\ell+1}^2(x) + x^{-2}\,\Lambda_{\ell+1}^3(x)+ x^{-4}\,\Upsilon(x)\,\Lambda_{\ell}(x))
				\,,\\
\tilde T^{xx}
	&=  x^4 \,\tilde \lambda_{xx} - \tr_{\mathring{\tilde g}}(\tilde \lambda)\,x^2\\
	&= T^{xx} + 2x^2\,\Xi(x)\,\bzmcD _A X^A
			+ O_{\ell}(x^2\Lambda_{\ell+1}^2(x) + \Lambda_{\ell+1}^3(x)+ x^{-2}\,\Upsilon(x)\,\Lambda_{\ell}(x))
				\,,
\\
\tilde T^{xA}
	&= x^4\Big(1 - \frac{k}{4}x^2\Big)^{-2} \ringh^{AB} \tilde \lambda_{xB}\\
	&= T^{xA} - x^2 \Xi'(x)\,X^A
		+ O_{\ell}(x\,\Lambda_{\ell+1}^2(x)+ x^{-{1}}\,\Upsilon(x)\,\Lambda_{\ell}(x))
		\,,\\
\tilde \lambda^{AB}
	&= x^4\Big(1 - \frac{k}{4}x^2\Big)^{-4}\,\ringh^{AC}\,\ringh^{BD} \tilde \lambda_{CD}\\
	&= \lambda^{AB} - x^2\,\Big(1 - \frac{k}{4}x^2\Big)^{-2}\Xi(x)\,(\bzmcD ^A X^B + \bzmcD ^B X^A)
		\\
		&\qquad\qquad +O_{\ell}(x^2\Lambda_{\ell+1}^2(x)  + \Upsilon(x)\,\Lambda_{\ell}(x))
		\,,\\
\tilde T^{AB}
	&= \tilde \lambda^{AB} - x^2\Big(1 - \frac{k}{4}x^2\Big)^{-2}\tr_{\mathring{\tilde g}}(\tilde \lambda)\,\ringh^{AB}\\
	&= T^{AB}  - x^2  [\Xi'(x)]^2\,|X|_{{\mathring h}}^2\,\ringh^{AB}\\
		&\qquad\qquad - x^2\,\Big(1 - \frac{k}{4}x^2\Big)^{-2}\Xi(x)\,(\bzmcD ^A X^B + \bzmcD ^B X^A - 2 \bzmcD _C X^C\,\ringh^{AB})\\
		&\qquad\qquad  + O_{\ell}(x^2 \Lambda_{\ell+1}^2(x) +  \Lambda_{\ell+1}^3(x)+ x^{-2}\,\Upsilon(x)\,\Lambda_{\ell}(x))
		\,.
\end{align*}
This completes the proof.
\end{proof}
\qedskip

We now derive a version of \eqref{28VII17.8} where the mixed terms $T^{xA}$ are allowed to decay slower than the $T^{xx}$ and $T^{AB}$ terms. More precisely, we assume, for some smooth function $\Xi$ and vector field $X = X^A\partial_A$ (on $\transversemanifold$), that $T$ can be expressed as a sum of $\TohalfX(x,x^C) = x^2\,\Xi'(x)\,X^{A}(x^C)\,(\partial_x \otimes \partial_{A} + \partial_{A} \otimes \partial_x)$ and terms which decay faster than $\TohalfX$. For $\ell \geq 0$, let
\begin{align}
\OmegatohalfX_\ell(x)
	&= \sup_{x^C \in \transversemanifold}   \sum_{0 \leq j \leq \ell}
 \,[|\znabla^j (T - \TohalfX)|_{\mathring{g}} + x^{-1}\,|\znabla^j \zmcD(T - \TohalfX)|_{\mathring{g}}](x,x^C)
	\,,\label{Eq:18XII17.1}\\
\Upsilon_\ell(x)
	&= x^2(\OmegatohalfX_\ell(x) + \Lambda_{\ell+1}(x))
	\,,\label{Eq:18XII17.2}
\end{align}
where $\Lambda_{\ell+1}$ is as defined in \eqref{18XII17.31}.
Note that \eqref{26XI17.1} then holds with $\Upsilon = \Upsilon_\ell$.

\begin{corollary}\label{Cor:3.10Extended}
Let $n \geq 3$.
Assume that there exist a smooth vector field $X = X^{A}\partial_A $ on $ \transversemanifold$ and a smooth function $x \mapsto \Xi(x)$ such that
\begin{equation}
\Upsilon_2(x) = O(x^4)
	\,,\label{Eq:22XII17.E1}
\end{equation}
where $\Upsilon_\ell$ is as defined in \eqref{Eq:18XII17.2}.
After the change of coordinates
\begin{equation}\label{26XII17.1}
 (x,x^A) \mapsto (x,y^A = x^A + \Xi(x)\,X^A(x^C))
\end{equation}
one has
\begin{align}
R[g] - R[\mathring{g}]
	&= x^{n+1}\partial_x \Big[x^{-1} \partial_x \left( x^{-n} T^{xx}  \right)
 \nonumber\\
			&\qquad\qquad\qquad\qquad + x^{-n-2} \ringh_{AB} T^{AB} + 2x^{-n-1} \bzmcD _A T^{xA}\Big]\nonumber\\
		&\qquad \qquad  -(n-1)x^{n+1}\partial_x \Big[x^{-n}[\Xi'(x)]^2 \,|X|_{{\mathring h}}^2\Big]  \nonumber\\
		&\qquad\qquad - (n-1)k x^{n+1}\partial_x(x^{-n} T^{xx}) - \frac{(n-2)k}{2} \ringh_{AB} T^{AB}\nonumber\\
		&\qquad\qquad - nkx \bzmcD _A T^{xA} + \bzmcD _A \bzmcD _B T^{AB}\nonumber \\
		&\qquad \qquad  + O(\max(x^2\Upsilon_1(x), x^{-4}\Upsilon_2^2(x))
			\,. \label{28VII17.8Extended}
\end{align}
where the implicit constant in the error term depends on $n$, $\|X\|_{C^3(\transversemanifold)}$ and the implicit constant in \eqref{Eq:22XII17.E1}.
\end{corollary}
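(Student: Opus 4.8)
\medskip
The plan is to combine the diffeomorphism invariance of the scalar curvature with the identity \eqref{28VII17.8}. First I would observe that the old reference metric $\mathring g$ and the new one, $\mathring{\tilde g} = x^{-2}[\mathrm{d}x^2 + (1-\frac{k}{4}x^2)^2\,\ringh_{AB}(y^C)\,\mathrm{d}y^A\,\mathrm{d}y^B]$, are both of the form \eqref{17VIII15.1} with $\ringh$ satisfying \eqref{cm1.1}, so that $R[\mathring g] = R[\mathring{\tilde g}] = -n(n-1)$. Since $R[g]$ is a function on $M$ unaffected by the change of coordinates \eqref{26XII17.1}, one has $R[g] - R[\mathring g] = R[g] - R[\mathring{\tilde g}]$, and the right-hand side may be evaluated in the coordinates $(x,y^A)$ by applying \eqref{28VII17.8} to the metric $g = \mathring{\tilde g} + \tilde\lambda$, with $\tilde\lambda$ and its Newton tensor $\tilde T$ computed in the new frame. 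By Lemma~\ref{Lem:UChangeVarAng}, applied with $\Upsilon = \Upsilon_\ell$ (admissible by the remark following \eqref{Eq:18XII17.2}), together with the hypothesis \eqref{Eq:22XII17.E1}, the transformed metric is again of the type for which \eqref{28VII17.8} was derived, now with error $O(\max(x^4\tilde\Omega_1(x),\tilde\Omega_2^2(x)))$, where $\tilde\Omega_\ell$ is the analogue of \eqref{Eq:30XI17.1} built from $\tilde\lambda$ and $\mathring{\tilde g}$. The role of \eqref{26XII17.1} is precisely to cancel, inside $\tilde T^{xA}$, the slowly decaying part $x^2\Xi'(x)\,X^A$ of $T^{xA}$.

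Next I would insert the transformation formulas of Lemma~\ref{Lem:UChangeVarAng} for $\tilde T^{xx}$, $\tilde T^{xA}$ and $\tilde T^{AB}$ into the right-hand side of \eqref{28VII17.8} and reorganize the outcome. The terms built from $T^{xx}(x,y^C)$, $T^{xA}(x,y^C)$ and $T^{AB}(x,y^C)$ reproduce the first two and the last two lines of \eqref{28VII17.8Extended}; inserting the correction $-x^2[\Xi'(x)]^2\,|X|_{\mathring h}^2\,\ringh^{AB}$ to $\tilde T^{AB}$ into the term $x^{-n-2}\ringh_{AB}\tilde T^{AB}$ inside the leading bracket and then applying $x^{n+1}\partial_x[\,\cdot\,]$ yields exactly the new third line $-(n-1)x^{n+1}\partial_x[x^{-n}[\Xi'(x)]^2\,|X|_{\mathring h}^2]$. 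What remains is to show that all the other $\Xi$- and $X$-dependent corrections contribute only to the error, and this is the heart of the matter. The leading such contributions --- those of the $2x^2\Xi\,\bzmcD_A X^A$ correction to $\tilde T^{xx}$, of the $-x^2\Xi'\,X^A$ correction to $\tilde T^{xA}$, and of the trace part of the $\bzmcD^{(A}X^{B)}$-correction to $\tilde T^{AB}$ --- enter the leading bracket $x^{-1}\partial_x(x^{-n}\tilde T^{xx}) + x^{-n-2}\ringh_{AB}\tilde T^{AB} + 2x^{-n-1}\bzmcD_A\tilde T^{xA}$ with homogeneities $x^{-n}$ and $x^{-n+1}$, and, using $\ringh_{AB}\ringh^{AB}=n-1$ together with $\ringh_{AB}(\bzmcD^A X^B + \bzmcD^B X^A - 2\bzmcD_D X^D\,\ringh^{AB}) = -2(n-2)\bzmcD_D X^D$, I expect them to cancel identically up to terms carrying an extra factor $k x^2$. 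The analogous contributions to the lower-order lines of \eqref{28VII17.8} (the $(n-1)k$-, $\frac{(n-2)k}{2}$-, $nk$- and $\bzmcD_A\bzmcD_B$-terms) should cancel among themselves after reducing $\bzmcD_A\bzmcD_B(\bzmcD^A X^B + \bzmcD^B X^A - 2\bzmcD_D X^D\,\ringh^{AB})$ to $2(n-2)k\,\bzmcD_D X^D$ via the Einstein condition $\mathrm{Ric}[\ringh] = (n-2)k\,\ringh$. What survives is, schematically, of orders $k^2x^4\Xi$, $kx^4\Xi$, $x^5\Xi'$, $[\Xi']^2\,|X|_{\mathring h}^2$, plus the $O_\ell$-errors of Lemma~\ref{Lem:UChangeVarAng}.

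Finally I would bound every residual by $\max(x^2\Upsilon_1,x^{-4}\Upsilon_2^2)$. The hypothesis \eqref{Eq:22XII17.E1} forces $\Lambda_3(x) = O(x^2)$ (hence $\Xi = O(x^2)$) and $x^2\,\OmegatohalfX_2(x) = O(x^4)$, from which one checks, for example, $[\Xi']^2 = O(x^{-2}\Lambda_1^2) = O(x^{-4}\Upsilon_2^2)$ and $x^4\Xi = O(x^4\Lambda_2) = O(x^2\Upsilon_1)$, and similarly for the remaining residuals; meanwhile \eqref{20XII17.1a} gives $\tilde\Omega_\ell(x)\lesssim x^{-2}\Upsilon_\ell(x)$, so $x^4\tilde\Omega_1\lesssim x^2\Upsilon_1$ and $\tilde\Omega_2^2\lesssim x^{-4}\Upsilon_2^2$. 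Collecting everything gives \eqref{28VII17.8Extended}.

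The hard part will be the second step: organizing the change-of-coordinates corrections so that the divergence-type terms cancel visibly across the four structurally different groups of terms in \eqref{28VII17.8}, and then verifying that every surviving residual --- in particular the $O_\ell$-errors of Lemma~\ref{Lem:UChangeVarAng} once they have been pushed through the two $x$-derivatives in \eqref{28VII17.8} --- is genuinely dominated by $\max(x^2\Upsilon_1,x^{-4}\Upsilon_2^2)$. There is no conceptual obstruction here, but the bookkeeping is delicate, because some of the residuals are only borderline of the required order and so the cancellations cannot be circumvented.
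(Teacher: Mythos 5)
Your proposal follows the paper's own proof: apply Lemma~\ref{Lem:UChangeVarAng}, substitute the resulting $\tilde T$ into \eqref{28VII17.8} in the $(x,y^A)$ coordinates, use $R[\ringh]_{AB}=(n-2)k\ringh_{AB}$ together with the trace identity to exhibit the cancellations, and bound the residuals via $\Omega_\ell \lesssim x^{-2}\Upsilon_\ell$. One small imprecision in the description: the $kx^2\Xi$- and $kx^3\Xi'$-contributions do not cancel within the four lower-order lines alone, but rather against the $kx^{-n+2}\Xi$ residual which $x^{n+1}\partial_x$ produces from the leading bracket --- your final tally of what survives is nonetheless correct.
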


\begin{remark}\label{Rem:3.10App}
Note that while $|T - \TohalfX|_{\mathring{g}} = O_\ell(x^{-2}\Upsilon_\ell(x))$, one has $|\TohalfX|_{\mathring{g}} = O_\ell(x^{-3}\Upsilon_\ell(x))$ and so $|T|_{\mathring{g}} = O_\ell(x^{-3}\Upsilon_\ell(x))$.
 Thus, if one attempts to apply directly formula \eqref{28VII17.8}, one obtains an error estimation of order $O(\max(x\Upsilon_1(x), x^{-6}\,\Upsilon_2^2(x)))$, which is larger than that in \eqref{28VII17.8Extended}.
\end{remark}

\begin{proof}
By Lemma \ref{Lem:UChangeVarAng},
the change of angular variables \eqref{26XII17.1}
leads to better decay properties. Namely, with respect to the new coordinate system $(x,y^A)$, we have that $\tilde T^{xx}$, $\tilde T^{xA}$ and $\tilde T^{AB}$ are of order $O_\ell(\Upsilon_{\ell}(x))$, for $\ell = 1,2$.
Using $
\Lambda_{\ell +1} = O(x^{-2}\Upsilon_\ell)
$
and $\Upsilon_2 = O(x^4)$
(``$x$ to power four'', not to be confused with the coordinate ``$x$ subscript four'')
we obtain
$$
 \Lambda_{\ell +1}^3 = O(x^{-6}\Upsilon_\ell^3) = O(x^{-4} \Upsilon_\ell x^{-2} \Upsilon_\ell^2)= O(x^{-2}\Upsilon_\ell^2)
  \,,
$$
which will eventually be estimated as $O(x^{-4}\Upsilon_\ell^2)$.
 This can be used to rewrite the error terms in the conclusions of Lemma~\ref{Lem:UChangeVarAng} as follows:
\begin{align}
\tilde T^{xx}(x,y^C)
	&= T^{xx}(x,y^C) + 2 x^2\,\Xi(x)\,\bzmcD _A X^A(y^C)
			+ O_\ell(x^{-4}\Upsilon_{\ell}^2(x))
				\,,\label{Eq:16XII17-X1}\\
\tilde T^{xA}(x,y^C)
	&= T^{xA}(x,y^C) - \underbrace{x^2 \Xi'(x)\,X^A (y^C)}_{=\TohalfX{}^{xA}(x,y^C)}
		  + O_\ell(x^{-4}\Upsilon_{\ell}^2(x))
   \,,\label{Eq:16XII17-X2}\\
\tilde T^{AB}(x,y^C)
	&= T^{AB}(x,y^C) - x^2\,\Big(1 - \frac{k}{4}x^2\Big)^{-2}\,\Xi(x)\,(\bzmcD ^A X^B + \bzmcD ^B X^A - 2\bzmcD _D X^D\,\ringh^{AB})(y^C)\nonumber\\
		&\qquad\qquad - x^2  [\Xi'(x)]^2\,|X|_{{\mathring h}}^2\,\ringh^{AB}
			+ O_\ell(x^{-4}\Upsilon_{\ell}^2(x))
		\,,\label{Eq:16XII17-X3}
\end{align}
where we are using the same notations as in the proof of Lemma~\ref{Lem:UChangeVarAng} and we have used that $x^{-4}\,\Upsilon_2(x) = O(1)$.

Now, by \eqref{28VII17.8} \emph{in the new coordinates},
 we have
\begin{align*}
R[g] - R[\mathring{g}]
	&= x^{n+1}\partial_x \Big[x^{-1} \partial_x \left( x^{-n} \tilde T^{xx}  \right)
 \nonumber\\
			&\qquad\qquad\qquad\qquad + x^{-n-2} \ringh_{AB} \tilde T^{AB} + 2x^{-n-1} \bzmcD _{\partial_{y^A}} \tilde T^{xA}\Big]\nonumber\\
		&\qquad\qquad - (n-1)k x^{n+1}\partial_x(x^{-n} \tilde T^{xx}) - \frac{(n-2)k}{2} \ringh_{AB} \tilde T^{AB}\nonumber\\
		&\qquad\qquad - nkx \bzmcD _{\partial_{y^A}} \tilde T^{xA} + \bzmcD _{\partial_{y^A}} \bzmcD _{\partial_{y^B}} \tilde T^{AB}
		+ O(\max(x^{2} \Upsilon_1(x) , x^{-4} \Upsilon_2^2(x))
		\,.
\end{align*}
Therefore, by \eqref{Eq:16XII17-X1}-\eqref{Eq:16XII17-X3},
\begin{align*}
R[g] - R[\mathring{g}]
	&= x^{n+1}\partial_x \Big\{x^{-1} \partial_x \left( x^{-n} (T^{xx} + 2x^2 \Xi(x) \bzmcD _D X^D ) \right)
 \nonumber
 \\
			&\qquad\qquad\qquad + x^{-n-2}  \Big[\ringh_{AB}T^{AB} + 2(n-2)\,x^2\,\Big(1 + \frac{k}{2}x^2\Big)\,\Xi(x)\,\bzmcD _D X^D
\\
					&\qquad\qquad\qquad\qquad - (n-1)x^2  [\Xi'(x)]^2\,|X|_{{\mathring h}}^2 \Big]
\\
			&\qquad\qquad\qquad  + 2x^{-n-1} \bzmcD _A (T^{xA} - x^2\,\Xi'(x)\,X^A) \Big\}\nonumber
\\
		&\qquad\qquad - (n-1)k x^{n+1}\partial_x(x^{-n} (T^{xx} + 2x^2\, \Xi(x) \bzmcD _D X^D)
\\
		&\qquad\qquad - \frac{(n-2)k}{2}
 \big(
  \ringh_{AB}T^{AB} + 2(n-2)x^2\,\Xi(x)\,\bzmcD _D X^D
   \big)
   \nonumber
\\
		&\qquad\qquad - nkx \bzmcD _A (T^{xA}  - x^2\,\Xi'(x)\, X^A)
\\
		&\qquad\qquad+ \bzmcD _A \bzmcD _B
 \big(
  T^{AB} - x^2\,\Xi(x)\,(\bzmcD ^A X^B + \bzmcD ^B X^A - 2\bzmcD _D X^D\,\ringh^{AB})
\, \ringh^{AB}
    \big)
\\
		&\qquad\qquad + O(\max(x^{2} \Upsilon_1(x) , x^{-4}  \Upsilon_2^2(x))
\\
	&= x^{n+1}\partial_x \Big[x^{-1} \partial_x \left( x^{-n} T^{xx} \right)
 \nonumber
\\
			&\qquad\qquad\qquad  + x^{-n-2} \ringh_{AB} T^{AB} + 2x^{-n-1} \bzmcD _A T^{xA}
\\
			&\qquad\qquad\qquad - (n-1) x^{-n}  [\Xi'(x)]^2\,|X|_{{\mathring h}}^2\, \Big]\nonumber
\\
		&\qquad\qquad - (n-1)k x^{n+1}\partial_x(x^{-n} T^{xx}) - \frac{(n-2)k}{2} \ringh_{AB} T^{AB}\nonumber
\\
		&\qquad\qquad - nkx \bzmcD _A T^{xA} + \bzmcD _A \bzmcD _B T^{AB}
\\
		&\qquad\qquad + O(\max(x^{2} \Upsilon_1(x) , x^{-4} \Upsilon_2^2(x))
		\,,
\end{align*}
where we have used that $\bzmcD _A\bzmcD _B \bzmcD ^A X^B - \bzmcD _A \bzmcD ^A \bzmcD _B X^B = (n-2)k\bzmcD _D X^D$ thanks to $R[\ringh]_{AB}=(n-2)k\ringh_{AB}$.
\end{proof}
\qedskip

\subsection{Hyperbolic symmetries}\label{ssec:HypSym}

In this  section  we assume that the transverse manifold $\transversemanifold$ is the standard sphere $\mathbb{S}^{n-1}$. In this case, $\mathring{g}$ is the hyperbolic metric on the hyperbolic space $\mathbb{H}^n$, and so has a large group of symmetries.

Consider the realization of the hyperbolic space $\mathbb{H}^n$ by the hyperboloid $\{(y_0, y) \in \mathbb{R}^{1 + n}: y_0^2 - |y|^2 = 1\}$ in the Minkowski space $\mathbb{R}^{1+n}$, where $|\cdot|$ denotes the standard Euclidean norm. Writing $y_0 = \cosh s$ and $y = \sinh s\,\theta$ for some $s \in \mathbb{R}_{\geq 0}$ and $\theta \in \mathbb{S}^{n-1}$, we obtain
\[
\mathring{g} = ds^2 + \sinh ^2 (s)\,\ringh
	\,,
\]
where $\ringh$ is the round metric on $\mathbb{S}^{n-1}$.
This can be brought to the form $\mathring{g} = x^{-2}(\mathrm{d}x^2 + (1 - \frac{1}{4}x^2)^2\ringh)$ considered earlier via the transformation $s = - \ln \frac{x}{2}$.

The group $SO(1,n)$ acts isometrically on $\mathbb{H}^n$. Consider a hyperbolic element of $SO(1,n)$ of the form
\[
M_{\alpha, e}:= \left[\begin{array}{cc}
\cosh\alpha & \sinh \alpha\,e^T\\
\sinh \alpha \,e & I_n + (\cosh \alpha - 1) e\otimes e^T
\end{array}\right]
	\,,
\]
where $ \alpha \in \mathbb{R}, e \in \mathbb{S}^{n-1}$ and $I_n$ denotes the $n \times n$ identity matrix. The transformation $(y_0 = \cosh s, y) = M_{\alpha,e} \cdot (\tilde y_0 = \cosh \tilde s, \tilde y)$ is given by
\begin{align*}
\cosh s
	&= \cosh \alpha \cosh \tilde s + \sinh \alpha \,e \cdot \tilde y
		\,,\\
 y
	&= (\sinh \alpha \,\cosh \tilde s + \cosh \alpha\, e \cdot \tilde y)e + ( \tilde y - e \cdot  \tilde y\, e)
		\,.
\end{align*}
Observe that $\theta:= \frac{1}{|y|} y$ and $\tilde \theta:= \frac{1}{|\tilde y|}\tilde y$ are related by
\[
\theta = \Phi_\infty(\tilde\theta) + O(\tilde x^2)
\]
where $\Phi_\infty$ is a transformation of $\mathbb{S}^{n-1}$ (which is viewed as the `boundary' of $\mathbb{H}^n$) given by
\[
\Phi_{\alpha,e}^\infty(\tilde\theta) = \frac{(\sinh \alpha  + \cosh \alpha \, e \cdot \tilde \theta)e + ( \tilde\theta - e \cdot  \tilde\theta\, e)}{\cosh \alpha + \sinh \alpha \,e \cdot \tilde\theta}
	\,.
\]
The inverse of $\Phi_{\alpha,e}^\infty$ is $\Phi_{-\alpha,e}^\infty$, i.e.
\[
(\Phi_{\alpha,e}^{\infty})^{-1}(\theta) = \frac{(-\sinh \alpha  + \cosh \alpha \, e \cdot  \theta)e + (\theta - e \cdot  \theta e)}{\cosh \alpha - \sinh \alpha \,e \cdot \theta}
	\,.
\]

Note that $\Phi_\infty$ is a conformal transformation of $\mathbb{S}^{n-1}$. To see this, let $\tilde x^A$ be a local coordinate system on some open subset $U \subset \mathbb{S}^{n-1}$, and let $x^A$ be a local coordinate system in $\Phi_\infty(U)$, and write $\ringh = \ringh_{AB}(x^C) \,\mathrm{d}x^A\,\mathrm{d}x^B = \mathring{\tilde h}_{AB}(\tilde x^C)\,d\tilde x^A\,d\tilde x^B$. As $M_{\alpha, e}$ is an isometry of $\mathbb{H}^n$, we have
\[
\mathring{g}
	= ds^2 + \sinh^2s\,\mathring{h}_{AB}(x^C + O(\tilde x^{2}))\,\mathrm{d}x^A\,\mathrm{d}x^B
	= d\tilde s^2 + \sinh^2 \tilde s\,\mathring{\tilde h}_{AB}(\tilde x^C)\,d\tilde x^A\,d\tilde x^B.
\]
It follows that
\begin{align}
\mathring{\tilde h}_{AB}(\tilde x^C)\,d\tilde x^A\,d\tilde x^B
	 &= \lim_{\tilde s \rightarrow \infty} \frac{\sinh^2s}{\sinh^2\tilde s}\,\mathring{h}_{AB}(x^C)\,\mathrm{d}x^A\,\mathrm{d}x^B
	 	\nonumber\\
	 & = (\cosh \alpha + \sinh \alpha \,e \cdot \tilde\theta)^2\,\mathring{h}_{AB}(x^C)\,\mathrm{d}x^A\,\mathrm{d}x^B
	 \,.\label{9VIII17.C1}
\end{align}

Consider now a metric of the form {\eqref{18VIII15.6}-\eqref{18VIII15.8}}, i.e.
\begin{align*}
g
	&= \mathring{g} + x^{n-2} \mu_{AB}(x^C)\,\mathrm{d}x^A\,\mathrm{d}x^B +o(x^{n-2}) \mathrm{d}x^i \mathrm{d}x^j\\
   	&=  \mathring{g} + \tilde x^{n-2} \tilde\mu_{AB}(\tilde x^C)\,d\tilde x^A\,d\tilde x^B +o(\tilde x^{n-2}) \mathrm{d}\tilde x^i \mathrm{d}\tilde x^j
	\,.
\end{align*}

Set $\mu = \mu_{AB}(x^C)\,\mathrm{d}x^A\,\mathrm{d}x^B$ and $\tilde \mu = \tilde\mu_{AB}(\tilde x^C)\,d\tilde x^A\,d\tilde x^B$. We have
\[
\tilde \mu = \Big(\lim_{x \rightarrow 0}\frac{x^{n-2}}{\tilde x^{n-2}}\Big)(\Phi_{\alpha, e}^\infty)^* \mu = \frac{1}{(\cosh \alpha + \sinh\alpha\,e \cdot\tilde\theta)^{n-2}}\,(\Phi_{\alpha, e}^\infty)^* \mu.
\]
Also, by \eqref{9VIII17.C1}, $\mathring{\tilde h} = (\cosh \alpha + \sinh\alpha\,e \cdot\tilde\theta)^2\,(\Phi_{\alpha, e}^\infty)^* \ringh$. It follows that
\[
\mathring{\tilde h}^{AB}\,\tilde\mu_{AB} = \frac{1}{(\cosh \alpha + \sinh\alpha\,e \cdot\tilde\theta)^{n}} (\ringh^{AB}\,\mu_{AB}) \circ \Phi_{\alpha, e}
	\,.
\]
In order words, the mass aspect functions $\Theta$ and $\tilde\Theta$ relative to the $(x, x^A)$ and $(\tilde x, \tilde x^A)$ coordinate systems are related by
\begin{equation}
\tilde\Theta(\tilde\theta) = \frac{1}{(\cosh \alpha + \sinh\alpha\,e \cdot\tilde\theta)^{n}} \Theta \circ \Phi_{\alpha, e}(\tilde\theta).
	\label{9VIII17.C2}
\end{equation}

Recall that the covector $(m_0 , m_1 \ldots, m_n)$ is defined by  \eqref{9VIII17.Add1},
\begin{align*}
m_0
	&= \int_{\mathbb{S}^{n-1}} \Theta(\theta)\,dv_{\ringh}(\theta)
		\,,\\
m_i
	&= \int_{\mathbb{S}^{n-1}}\Theta(\theta)\,\theta_i\,dv_{\ringh}(\theta)
		\,,
\end{align*}
where $\theta_i \equiv \theta^i$. Using \eqref{9VIII17.C1}, \eqref{9VIII17.C2} and the relation
$$
\frac{1}{\cosh \alpha + \sinh\alpha\,e \cdot\tilde\theta} = \cosh \alpha - \sinh\alpha\,e \cdot\theta
	\,,
$$ we find that the corresponding covector $(\tilde m_0, \tilde m_1, \ldots, \tilde m_n)$ relative to the tilde coordinate system is given by
\begin{align*}
\tilde m_0
	&= \int_{\mathbb{S}^{n-1}} \Theta(\theta)\,(\cosh \alpha - \sinh\alpha\,e \cdot\theta)dv_{\ringh}(\theta)\\
	&= \cosh\alpha \,m_0 - \sinh\alpha\,e \cdot \stackrel{\rightarrow}{\ellm }
		\,,\\
\tilde m_i
	&= \int_{\mathbb{S}^{n-1}}\Theta(\theta)\,[(-\sinh \alpha  + \cosh \alpha \, e \cdot  \theta)e_i + (\theta_i - e \cdot  \theta e_i)]\,dv_{\ringh}(\theta)\\
	&= (-\sinh \alpha\,m_0  + \cosh \alpha \, e \cdot \stackrel{\rightarrow}{\ellm })e_i  + ( m_i - e \cdot \stackrel{\rightarrow}{\ellm } \ e_i)
		\,,
\end{align*}
where $e_i = e^i$ and $e \cdot \stackrel{\rightarrow}{\ellm } \,= \sum_{i = 1}^n e^i  m_i$. In particular, this gives the well-known relation
\[
(m_0, m_1 \ldots, m_n) = M_{\alpha,e} \cdot (\tilde m_0, \tilde m_1, \ldots, \tilde m_i).
\]

\section{Proof of the deformation theorem}
 \label{s17VIII15.3}

We are ready now to formulate, and prove, a precise version of Theorem~\ref{T31VII17.1a}. We consider a metric $g$ which, on $\{0<x<x_0\}$, for some $x_0<1$, takes the form \eq{18VIII15.6}-\eq{18VIII15.8} with all tensors twice-differentiable.
We further suppose
that
\begin{equation}\label{21XI17.E1}
x^{-1} (R[g] - R[\mathring{g}])\in L^1(M)
	\,,
\end{equation}
and that there exist constants $C_1$ and $\alpha>0$ and  such that
\begin{equation}\label{18VIII15.9}
 \sum_{0 \leq l \leq 2}[ |\znabla^l \lambda|_{{\zg}}+ x^{-1} |\znabla^l \zmcD \lambda|_{{\zg}}]
   \le C_1 x^n
 \,,
\ee
where $\mathring \nabla$ denotes the covariant derivative operator of the metric ${\zg}$.

We have:

\begin{theorem}
  \label{T31VII17.1}
Under \eq{18VIII15.6}-\eq{18VIII15.8} and \eq{21XI17.E1}-\eq{18VIII15.9}, let the space-dimension $n$ be greater than or equal to four.
There exists $\epsilon_0 > 0$ such that, for all $0<\epsilon<\epsilon_0 <x_0/4$ there exists a metric $g_\epsilon$, also of the form \eq{18VIII15.6}-\eq{18VIII15.8}, such that
\begin{enumerate}
\item
  $0\le R[g_\epsilon]-R[g ] \leq \frac{Cx^n}{|\ln x|^2}$ for some $C$ independent of $\epsilon$;

  \item $g_\epsilon$
coincides with $g$ for $x > 4\epsilon$;

 \item $g_\epsilon$ has a pure monopole-dipole mass aspect function
   $\Theta_\epsilon$ if $(\transversemanifold, \ringh)$ is conformal to the standard sphere,
   and has constant mass aspect function otherwise;

  \item the associated energy-momentum satisfies
  \begin{equation}\label{10VII17.2b}
   \left\{
     \begin{array}{ll}
       \lim_{\epsilon \to 0} m^\epsilon_0 = m_0\,,
\
 m^\epsilon_i = m_i \,,
& \hbox{if $(\transversemanifold, \ringh)$ is conformal to the round $ {\mathbb S}^{n-1}$;} \\
       \lim_{\epsilon \to 0} m^\epsilon = m
 \,, & \hbox{otherwise.}
     \end{array}
   \right.
\ee
\end{enumerate}
\end{theorem}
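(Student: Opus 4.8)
The plan is to deform $g$ only in a thin collar $\{0<x<4\epsilon\}$ of the conformal boundary, by composing a coordinate change at infinity that moves the mass aspect to the required normal form with a small correction restoring the scalar curvature inequality.

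First I would localise and normalise. All four assertions concern $\{0<x<x_0\}$, so I work there. By Corollary~\ref{Cor:29XII17-C1} (or Proposition~\ref{Prem:02V18-R1}), using a coordinate change equal to the identity for $x>x_0/2$, I bring $g$ to the form \eqref{18VIII15.6}-\eqref{18VIII15.8} with no leading $\mathrm dx^2$- or $\mathrm dx\,\mathrm dx^A$-terms, so its mass aspect is $\Theta=\ringh^{AB}\mu_{AB}$; Lemma~\ref{Lem:MAInv} shows this leaves $\Theta$ (and the scalar curvature) unchanged. I then isolate the part of $\Theta$ to be removed: $\Theta_*:=\Theta-(\text{average of }\Theta\text{ over }N)$ in the torus case ($k=0$), $\Theta_*:=\Theta-c_\epsilon$ for a suitable constant $c_\epsilon$ in the case $k=-1$, and $\Theta_*:=\Theta_{\ge 2}$, the projection of $\Theta$ onto spherical harmonics of degree $\ell\ge 2$, in the round-sphere case; in each case $\Theta_*$ is $L^2(N,\ringh)$-orthogonal to the constants, and additionally to the first eigenfunctions in the sphere case.

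Second I would solve a linear elliptic problem on $N$. Set $X=\mathrm{grad}_{\ringh}\psi$ for a potential $\psi$ to be found, and consider the coordinate change $(x,x^A)\mapsto(x,x^A+\Xi(x)X^A(x^C))$ of Lemma~\ref{Lem:UChangeVarAng}, with a radial profile $\Xi$ chosen below. By Corollary~\ref{Cor:3.10Extended} (and, for $n\ge 5$, already by Corollary~\ref{Cor:UChangeVarMA}) this alters the generalised mass aspect by a quantity whose linearisation in $\psi$ is an elliptic operator $\mathcal L$ on $N$ whose kernel is the constants (and, in the sphere case, the constants together with the first eigenfunctions). Hence $\mathcal L$ is invertible on the relevant orthogonal complement, and I solve $\mathcal L\psi=-\Theta_*$ for a smooth $\psi$, choosing $\psi$ orthogonal to the first eigenfunctions in the sphere case so that the $m_i$ are preserved. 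The genuinely nonlinear, quadratic-in-$X$ remainder produced by the coordinate change will be of order $o(x^n)$ for the profile below, hence harmless for the mass aspect.

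Third — and this is the crux — comes the choice of $\Xi$, the curvature estimate, and the gluing. The naive profile $\Xi\asymp x^{n-1}$ changes $R[g]-R[\mathring g]$ by a term with the wrong sign or too slow a decay when $n\le 6$; following Subsection~\ref{ss22XII17.1} I take $\Xi(x)=x^{n-1}\Xi_0(x)$ with $\Xi_0\asymp 1/|\ln x|$, multiplied by a cut-off $\chi(x/\epsilon)$ equal to $1$ for $x<\epsilon$ and $0$ for $x>4\epsilon$. The resulting metric $g^{(1)}_\epsilon$ then coincides with $g$ for $x>4\epsilon$ and has the prescribed mass aspect near $x=0$. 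By Corollary~\ref{Cor:3.10Extended}, $R[g^{(1)}_\epsilon]-R[g]$ is a total $x$-derivative whose integral over the slices $\{x=\mathrm{const}\}\times N$ realises the new mass, which differs from the old one by an amount tending to $0$ with $\epsilon$ (this gives property (4)), plus the explicitly signed term $-(n-1)x^{n+1}\partial_x\bigl(x^{-n}[\Xi'(x)]^2|X|^2_{\ringh}\bigr)$, plus errors $O\!\bigl(\max(x^2\Upsilon_1(x),\,x^{-4}\Upsilon_2^2(x))\bigr)=O(x^n/|\ln x|^3)$. I then add a last small correction $g_\epsilon=(1+f_\epsilon)^{4/(n-2)}g^{(1)}_\epsilon$, with $f_\epsilon$ supported in $\{x<4\epsilon\}$ and of size $o(x^n)$ (so the mass aspect is untouched), chosen — via invertibility of the conformal Laplacian on the thin collar for $\epsilon$ small — so as to absorb any negative part of the above and to get $0\le R[g_\epsilon]-R[g]\le Cx^n/|\ln x|^2$; this is exactly where the hypothesis $n\ge 4$ and the logarithmic weight are forced. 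A final coordinate straightening near $x=0$, the identity for $x>4\epsilon$ (Proposition~\ref{Prem:02V18-R1}), restores the form \eqref{18VIII15.6}-\eqref{18VIII15.8} without changing the mass aspect (Lemma~\ref{Lem:MAInv}). Then properties (1)-(2) hold by construction; for (3), $\Theta_\epsilon=\Theta+\mathcal L\psi+(\text{quadratic remainder})$ equals a constant (non-spherical case) resp.\ $\Theta_{\le 1}$, a monopole-dipole function (sphere case), the remainder being $o(x^n)$; and for (4), $g_\epsilon$ differs from $g$ only on $\{x<4\epsilon\}$ by a perturbation tending to $0$ with $\epsilon$, whence $m^\epsilon\to m$ in the non-spherical case and $m_i^\epsilon=m_i$, $m_0^\epsilon\to m_0$ in the spherical one, the $m_i$ being preserved exactly because the angular change is trivial for $x>4\epsilon$ and $\psi$ carries no first-eigenfunction component. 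The principal obstacle is the third step: extracting a nonnegative leading term of the exact size $x^n/|\ln x|^2$ from the scalar-curvature formula while dominating the dimension-dependent errors, which is what breaks down at $n=3$.
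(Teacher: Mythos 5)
Your scaffolding is recognisably close to the paper's — solve a linear elliptic equation on $N$ for a potential $\psi$, deform the metric only in a collar $\{x<4\epsilon\}$, and add a small radial correction to restore the scalar-curvature inequality — but the central mechanism by which you propose to change the mass aspect is not the right one, and would fail as stated. The angular change $(x,x^A)\mapsto(x,x^A+\Xi(x)X^A)$, truncated or not, does \emph{not} change the (generalised) mass aspect: inserting the transformed Newton tensor from Lemma~\ref{Lem:UChangeVarAng} into $-\tfrac1n\ringh_{AB}\Ttwo{}^{AB}-\tfrac2n\bzmcD_A\Tohalf{}^{xA}-\tfrac2n\Ttwo{}^{xx}$, the new linear-in-$X$ contributions sum to $\bigl(-\tfrac{2(n-2)}{n}+2-\tfrac4n\bigr)\bzmcD_AX^A=0$, as they must, since a diffeomorphism preserves the mass. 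What the paper does instead is \emph{define a new metric} $\hat g$ by an explicit modification of the Newton tensor, $\hat T^{xA}=T^{xA}-\tfrac12(n-1)\varphi_\epsilon'\,x^n\,\bzmcD^A\Phi$ and $\hat T^{AB}=T^{AB}+\varphi_\epsilon'x^{n+1}\psi\,\ringh^{AB}+\tfrac{(n-1)^2}{4}x^{2n-2}(\varphi_\epsilon')^2|\bzmcD\Phi|^2\ringh^{AB}-k\varphi_\epsilon x^{n+2}\psi\,\ringh^{AB}-\tfrac1{n-1}\varphi_\epsilon x^{n+2}\bzmcD^C\bzmcD_C\psi\,\ringh^{AB}$, where the cut-off $\varphi_\epsilon$ sits on $T$, not on $\Xi$. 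The auxiliary $\Xi(x)=\int_0^x\varphi_\epsilon'(s)\,s^{n-2}\,ds$ then vanishes \emph{both} for $x<\epsilon^2$ and for $x>\epsilon$ (by \eqref{27XII17.1+} and the zero-integral condition \eqref{27XII17.11+}), and Corollary~\ref{Cor:3.10Extended} is invoked only as a computational device to estimate $R[\hat g]-R[g]$. Near $x=0$ the only surviving change is the subleading $O(x^{n+2})$ modification of $T^{AB}$ by $-(k\psi+\tfrac1{n-1}\bzmcD^C\bzmcD_C\psi)\ringh^{AB}$; this is precisely what shifts the mass aspect while keeping the metric in the class \eqref{28VII17.4}. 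By contrast your $\Xi\asymp x^{n-1}/|\ln x|\cdot\chi(x/\epsilon)$ does \emph{not} vanish near $x=0$, so the ``identification'' metric has $\lambda_{AB}\asymp x^{-2}\Xi\asymp x^{n-3}/|\ln x|$, strictly slower than the required $O(x^{n-2})$, and the mass aspect of $g^{(1)}_\epsilon$ is no longer even defined.

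Two further points. First, you dismiss the quadratic-in-$X$ contribution as an $o(x^n)$ remainder, but the paper explicitly inserts $\tfrac{(n-1)^2}{4}x^{2n-2}(\varphi_\epsilon')^2|\bzmcD\Phi|^2\,\ringh^{AB}$ into $\hat T^{AB}$; it is the explicit cancellation of this term against the quadratic error produced by the angular change that closes the error budget when $n=4$ (the paper's own comments stress that this quadratic correction ``needs special care in low dimensions''). Second, the scalar-curvature fix in the paper is not a conformal change but the additive ansatz $\check g=\hat g+\tfrac1{n-1}\xi(x)\,\mathrm{d}x^2$ with the explicit profile $\xi(x)=C_*\,x^{n-2}\int_x^\infty\tvarphiep(s)\,s^{-1}|\ln s|^{-2}\,ds$, for which $R[\check g]\ge R[g]$ and $\check g=g$ for $x>4\epsilon$ are verified directly. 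Your conformal correction $(1+f_\epsilon)^{4/(n-2)}g^{(1)}_\epsilon$ would require a solution of a second-order elliptic problem on a thin collar that vanishes together with its derivatives on $\{x=4\epsilon\}$ (to keep $g_\epsilon=g$ there to the needed order), decays like $o(x^n)$ at $x=0$, and has the right sign to absorb the negative part of $R[g^{(1)}_\epsilon]-R[g]$; none of this is established, and the two-sided matching is in tension with the second-order nature of the conformal Laplacian.
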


\begin{Remark}
  \label{R4XII17.1}
  {\rm
  Note that the decay rate $0\le R[g_\epsilon]-R[g ]= O(\frac{x^n}{|\ln x|^2})$ preserves the integrability condition \eqref{21XI17.E1}.
  }
\end{Remark}

\begin{remark}
{\rm
In dimensions $n \geq 6$, the assumption \eqref{18VIII15.9} can be weakened to
%
\[
 \sum_{0 \leq l \leq 2} |\znabla^l \lambda|_{{\zg}}
   \le C_1 x^n
 \,.
\]
(This can be achieved by using \eqref{28VII17.8} instead of Corollary \ref{Cor:3.10Extended} in the proof; see  Remark \ref{Rem:3.10App}.)
We suspect that this remains true in dimensions $n = 4,5$ but have not attempted to address this.
}
\end{remark}

\noindent{\sc Proof of the Corollary~\ref{C10VIII17.1}:}
We apply Theorem~\ref{T31VII17.1} to the metric obtained by applying to $g$ an isometry of hyperbolic space which maps $(m_0,m_1,\ldots,m_n)$ to  $(m,0,\ldots,0)$, with $m=\pm\sqrt{\ellm _0^2- \sum_{i \geq 1} m_i^2}$. Here the negative sign of $m$ has to be chosen if the original energy-momentum vector was past pointing, positive otherwise.
\qed

Some comments on the proof of Theorem~\ref{T31VII17.1} might be useful. In Step 1  one perturbs the metric $g$ to another metric $\hat g$, which satisfies \eqref{28VII17.4} and whose mass aspect function is purely monopole-dipole, in a manner that the scalar curvature is perturbed in a controlled way.   The metric $\hat g$ obtained in our argument  agrees up to terms which are linear in $\psi$ and $\Phi$ with the metric obtained by first doing a change of variables as in Lemma \ref{Lem:UChangeVar} (so that in the new coordinate system, the metric $g$ satisfies  \eqref{28VII17.9xx}-\eqref{28VII17.9xA}), and then performing a `suitable' truncation to bring the asymptotic behavior back to \eqref{28VII17.4}. It also contains a term which is quadratic in $\Phi$, which needs special care in low dimensions but plays no role in dimensions $n \geq 5$.

As such, the metric $\hat g$ depends on $\epsilon$. In particular, it satisfies \eqref{28VII17.4} with an implicit $\epsilon$-dependent constant for the error terms which deteriorates as $\epsilon \rightarrow 0$. In controlling the scalar curvature, we need $\epsilon$-independent estimates  and, to this end, $\hat g$ needs to be treated as a perturbation of $\mathring{g}$ at order $O(\frac{x^{n-2}}{|\ln x|})$, rather than $O(x^n)$ if the implicit constant in \eqref{28VII17.4} were $\epsilon$-independent. This can be taken care of in dimensions $n \geq 5$ by arranging faster decay in the mixed components $\hat g^{xA}$, after which it is sufficient to work with a perturbation of order $O(x^{n-2})$.

This does not work when $n=4$, but
in this  dimension the logarithmic gain from $O(x^{n-2})$ to $O(\frac{x^{n-2}}{|\ln x|})$, together with the introduction of the quadratic correction term, lead to an error estimate of order $O(\frac{x^{2n-4}}{|\ln x|^2}) = O(\frac{x^{4}}{|\ln x|^2})$ for the scalar curvature, which suffices to take care of the issue. In dimension $n = 3$, the above procedure produces an error of order $O(\frac{x^2}{|\ln x|^2})$, which is too big to be handled by our methods.

\medskip

\noindent{\sc Proof of Theorem~\ref{T31VII17.1}:}
 Let $\Theta$ be the  (standard)  mass aspect function of $g$ in the given asymptotic coordinate system $(x,x^A)$ where \eq{18VIII15.6}-\eq{18VIII15.8} holds.

\medskip
\noindent{\it Step 1.} We will deform the metric $g$ in the asymptotic region to a metric $\hat g$ such that $\hat g$ satisfies   \eqref{28VII17.4}, its mass aspect function $\hat\Theta$ is purely monopole-dipole, and that $R[\hat g] - R[g] = O(\frac{x^n}{|\ln x|^2})$.

Assume that $\psi$ is a smooth function on $\transversemanifold$ such that
\begin{equation}
\langle \psi \rangle = 0
	\,,
\label{31VII17.2}
\end{equation}
where here and below
\[
\langle f \rangle = \frac{1}{\mu_{\ringh}(\transversemanifold)}\int_{\transversemanifold} f\,d\mu_{\ringh}
\]
denotes the average of a function $f$ over $\transversemanifold$ with respect to the measure $d\mu_{{\ringh}{}} $ associated with $\mathring  h_{AB}$.
Thanks to \eqref{31VII17.2}, there exists a function $\Phi: \transversemanifold \rightarrow \mathbb{R}$ such that
\[
\bzmcD _A \bzmcD ^A \Phi = \psi.
\]

For $\epsilon>0$ small we will denote by $\varphi_\epsilon \in C^\infty(\mathbb{R})$ a cut-off function satisfying
\begin{equation}\label{27XII17.1+}
  \varphi_\epsilon=\left\{
     \begin{array}{ll}
       1, & \hbox{$0<x<\epsilon^2$,} \\
       0, & \hbox{$x>\epsilon$,}
     \end{array}
   \right.
\end{equation}
as well as
\begin{equation}\label{27XII17.11+}
 \int_0^\infty \varphi_\epsilon'(x)\,x^{n-2}\,dx =0
	\,,
\end{equation}
together with
\[
|\varphi_\epsilon(x)| \leq C \text{ and } |\varphi_\epsilon'(x)| \leq \frac{C}{x|\ln x|}
 \,,
\]
for some constant $C$ independent of $\epsilon$. See Appendix~\ref{App27XII17.1} for existence of such functions.

We define a new metric $\hat g$ using the formulae
\begin{align*}
T
	&= \lambda - \tr_{\mathring{g}}(\lambda)\mathring{g}
	\,,\\
\hat T^{xx}
	&= T^{xx}
	\,,\\
\hat T^{xA}
	&= T^{xA} - \frac{1}{2} (n-1)\varphi_\epsilon' \, x^n\,\bzmcD ^A \Phi
		\,,\\
\hat T^{AB}
	&=  T^{AB} + \varphi_\epsilon' \,x^{n+1}\,\psi\,\ringh^{AB}
		+\frac{(n-1)^2}{4} x^{2n-2} (\varphi_\epsilon')^2 \,|\bzmcD \Phi|_{{\mathring h}}^2\,\ringh^{AB}\\
		&\qquad\qquad - k \varphi_\epsilon\,x^{n+2}\,\psi\,\ringh^{AB}
			- \frac{1}{n-1}\varphi_\epsilon\,x^{n+2}\,\bzmcD ^C \bzmcD _C \psi\,\ringh^{AB}
		\,,\\
\hat \lambda
	&= \hat T - \frac{1}{n-1}\tr_{\mathring{g}}(\hat T) \mathring{ g}
	\,,\\
\hat g
	&= \mathring{g} + \hat \lambda\,.
\end{align*}
It should be  clear that $\hat g \equiv g$ in the region $\{x \ge \epsilon\}$.

In the region $\{0<x<\epsilon\}$ let
\begin{align}\label{7XII17.1}
 \Xi(x) & = \int_{0}^x \varphi_\epsilon'(s)\,s^{n-2}\,ds
  \equiv \int_{\epsilon^2}^x \varphi_\epsilon'(s)\,s^{n-2}\,ds
  = {O(x^{n-2})}{|\ln \epsilon|^{-1}}
   \,,\\
  X^A(x^C)
  	&= -\frac{1}{2}(n-1)\,\zmcD^A\Phi(x^C)
	\,.
\end{align}
Note that $\Xi $ vanishes for $x<\epsilon^2$ by \eqref{27XII17.1+}, as well as for $x>\epsilon$ by \eqref{27XII17.11+}.

We identify the initial coordinates $(x,x^A)$ for $g$ and the new coordinates $(x,y^A)$ for $\hat g$, as constructed in Corollary~\ref{Cor:3.10Extended}, using
$$
 (x,x^A)\mapsto (x,y^A=x^A)
$$
(thus, \emph{not} $(x,x^A)\mapsto (x,y^A(x^B))$; in other words, we first do the coordinate transformation \eqref{28XII17.1}, and then compare the metric $\hat g$ at a point $(x,y^A=x^A)$ with the metric $g$ at a point $(x,x^A)$, keeping in mind Remark~\ref{R27VIII17.1}).
By Corollary~\ref{Cor:3.10Extended},
applied to $\hat g$, with
$$
 \Upsilon_2(x) = O\big(\frac{x^n}{|\ln x|}\big)
$$
(as defined in \eqref{Eq:18XII17.2}),
and \eqref{28VII17.8} applied to $g$,
we have for $x < \epsilon$
\begin{align}
R[\hat g] - R[g]
	&= x^{n+1}\partial_{x} \Big[
			x^{-1} \partial_{x} \left( x^{-n} (\hat T^{xx} - T^{xx})  \right)
			+ x^{-n-2} \ringh_{AB} (\hat T^{AB} - T^{AB})
 \nonumber
\\
			&\qquad\qquad + 2 x^{-n-1} \bzmcD _A(\hat T^{xA} - T^{xA})\Big]
 \nonumber
\\
			&\qquad\qquad
    {-\frac{(n-1)^3}{4} x^{n+1}\partial_x \Big[x^{n-4} (\varphi_\epsilon')^2 \,|\bzmcD \Phi|_{{\mathring h}}^2\Big] }
    \nonumber
\\
			&\qquad\qquad - k(n-1) x^{n+1}\partial_{x}(x^{-n} (\hat T^{xx}
 - T^{xx})) -
     \frac{(n-2)k}{2} \ringh_{AB} (\hat T^{AB} - T^{AB})
    \nonumber
\\
		&\qquad\qquad  - nkx \bzmcD _A (\hat T^{xA} - T^{xA})
 +
 \bzmcD _A \bzmcD _B (\hat T^{AB} - T^{AB})
 \nonumber
\\
		&\qquad\qquad +O\Big(\max(\frac{x^{n+2}}{|\ln x|}, \frac{x^{2n-4}}{|\ln x|^2})\Big)
				\nonumber
\\
	&= x^{n+1}\partial_{x} \Big[
		  (n-1)
  \big\{
   \underbrace{x^{-1}\,\varphi_\epsilon' \,\psi}_{(a)}
		  	+\underbrace{
 \frac{(n-1)^2}{4} x^{n-4} (\varphi_\epsilon')^2 \,|\bzmcD \Phi|_{{\mathring h}}^2
  }_{(b)}
   \nonumber
\\
			&\qquad\qquad\qquad
		  	- \underbrace{
 k   \varphi_\epsilon\,\psi
  }_{(c)}
				-
 \underbrace{
  \frac{1}{n-1}\varphi_\epsilon\,\bzmcD _A \bzmcD ^A \psi \big\}
   }_{(d)}
    \nonumber
\\
			&\qquad\qquad\qquad \underbrace{-(n-1) x^{-1} \varphi_\epsilon' \,\psi
 }_{(a)}
  \Big]
   \nonumber
\\
			&\qquad\qquad   \underbrace{-\frac{(n-1)^3}{4} x^{n+1}\partial_x \Big[x^{n-4} (\varphi_\epsilon')^2 \,|\bzmcD \Phi|_{{\mathring h}}^2\Big] }_{(b)}
\\
			&\qquad\qquad -
 \underbrace{
  \frac{k(n-1)(n-2)}{2}x^{n+1} \varphi_\epsilon'\,\psi
   }_{(c)}
   \nonumber
\\
			&\qquad\qquad +
 \underbrace{
  \frac{kn(n-1)}{2}x^{n+1} \varphi_\epsilon'\,\psi
   }_{(c)}
				+
 \underbrace{
  \varphi_\epsilon' \,x^{n+1}\,\bzmcD _A \bzmcD ^A \psi
   }_{(d)}
				+ O\Big(\max({x^{n+2}}, \frac{x^{2n-4}}{|\ln x|^2})\Big)
				\nonumber
\\
	&= O\Big(\max({x^{n+2}}, \frac{x^{2n-4}}{|\ln x|^2})\Big)
	\,,
 \label{27XII17.7}
\end{align}
where the groups marked $(a)$, $(b)$, etc., add to zero, and
where the constant in the big $O$ term does not depend on $\epsilon$.

Observe that the metric $\hat g$ satisfies \eqref{28VII17.4} and
\[
 \sum_{0 \leq l \leq 2} \big(|\znabla^l \hat \lambda|_{{\zg}} + x^{-1} |\znabla^l \zmcD\hat \lambda|_{{\zg}} \big)
   \le C(\epsilon)\,x^n
 \,.
\]
In particular, $\hat g$ has a well-defined mass
and its (standard) mass aspect function $\hat\Theta$, as reexpressed in \eqref{28VII17.MAX} in terms of $\hat T$, reads
\begin{align*}
\hat \Theta
	&= \Theta - \frac{1}{n} \ringh_{AB}(\stackrel{(n+2)}{\hat T}{}^{AB} - \stackrel{(n+2)}{T}{}^{AB}) - \frac{2}{n}\bzmcD _A (\stackrel{(n+1)}{\hat T}{}^{xA} - \stackrel{(n+1)}{T}{}^{xA}) - \frac{2}{n} (\stackrel{(n+2)}{\hat T}{}^{xx} - \stackrel{(n+2)}{T}{}^{xx})\\
	&= \Theta + \frac{(n-1)}{n}\Big[k \psi + \frac{1}{n-1} \bzmcD _A \bzmcD ^A \psi\Big]	\\
	&= \Theta + \frac{1}{n} \bzmcD _A \bzmcD ^A \psi + \frac{k(n-1)}{n}\psi.
\end{align*}

We now proceed to choose $\psi$. Consider first the case when $\transversemanifold$ is \emph{not} the standard sphere. By a result of Lichnerowicz  and of Obata~\cite{Lichnerowicz58,Obata62} (compare~\cite{Kuehnel}), the first eigenvalue of the Laplacian is strictly larger than $n - 1$. Therefore, there exists $\psi$ such that
\begin{equation}
\bzmcD _A\,\bzmcD ^A \psi + k(n-1)\psi = n(\langle \Theta \rangle - \Theta)
	\,.\label{31VII17.1}
\end{equation}
(When $k \neq 0$, $\psi$ exists as $\bzmcD _A\,\bzmcD ^A \psi + k(n-1)$ is injective. When $k = 0$, $\psi$ exists as the right-hand side of \eqref{31VII17.1} has zero average.) Furthermore, if $k \neq 0$, we see by integrating both sides of \eqref{31VII17.1} that $\langle \psi \rangle = 0$, i.e. \eqref{31VII17.2} is satisfied. If $k = 0$, $\psi$ is determined up to an additive constant, which can be arranged so that \eqref{31VII17.2} is satisfied. In any event, we obtain a solution of \eqref{31VII17.1} which also satisfies \eqref{31VII17.2}. This leads to
\[
\hat \Theta = \langle\Theta\rangle
	\,.
\]

Consider next the case when $\transversemanifold$ is the standard sphere (in which case $k = 1$). It is well known that $n-1$ is the first eigenvalue of the Laplacian. Let $\Theta_0 = \langle \Theta\rangle$ and $\Theta_1$ be respectively the orthogonal projection of $\Theta$ onto the zeroth and first eigenspaces of the Laplacian. Then there exists a solution of
\begin{equation}
\bzmcD _A\,\bzmcD ^A \psi + k(n-1)\psi = n(\Theta_0 + \Theta_1 - \Theta)
	\,.\label{31VII17.X}
\end{equation}
Integrating both sides of the above equation, we see that  \eqref{31VII17.2} is also satisfied. We thus obtain
\[
\hat\Theta = \Theta_0 + \Theta_1
	\,,
\]
which concludes Step 1.

\medskip
\noindent{\it Step 2.} We proceed to deform $\hat g$ to the desired metric.

Let $\tvarphiep(x) := \tilde\varphi(\frac{x - 2\epsilon}{2\epsilon})$ {where, the cut-off function $\tilde\varphi\in C^\infty(\R)$ equals to one in $(-\infty,0]$ and vanishes on $[1,\infty)$. One can, and it is convenient to, assume that
$$
 |\tilde \varphi'|^2 \leq C\,\tilde\varphi
  \,.
$$
From Step 1, there exists some constant $C_1$ independent of $\epsilon$ such that
\begin{equation}\label{27XII17.5}
R[\hat g] - R[g] \geq - C_1\,\tvarphiep(x)\, \frac{x^{n}}{|\ln x|^2}
	\,.
\end{equation}
Here we have used that $n \geq 4$.

Consider
\begin{equation}\label{27XI17.1}
\check g  = \hat g + \frac{1}{n-1}\xi(x)\,\mathrm{d}x^2
\end{equation}
where, for some $C_* > 0$ to be specified,
\[
\xi(x) = - C_*\,x^{n-2}\int_{4\epsilon}^{x} \frac{\tvarphiep(s)}{s\,|\ln s|^2} \,ds
 =  C_*\,x^{n-2}\int_{x}^{\infty} \frac{\tvarphiep(s)}{s\,|\ln s|^2} \,ds
	\,.
\]
Note that, as $\varphi$ is non-increasing and non-negative,
\begin{equation}
\Big|\int_{4\epsilon}^{x} \frac{\tvarphiep(s)}{s\,|\ln s|^2} \,ds\Big| \leq \tvarphiep(x)
   |\ln(4 \epsilon)|^{-1}
  \qquad \text{ for } x < 4\epsilon
	\,.\label{Eq:01XII17.1}
\end{equation}
Thus $\xi(x)$ vanishes for $x\ge 4\epsilon$, while for $0<x<4 \epsilon<1/2 $ we have
\begin{equation}\label{27XII17.8}
 0\le  \xi (x) \le C x^{n-2}\tvarphiep(x)|\ln \epsilon|^{-1}  = O (x^{n-2}|\ln \epsilon|^{-1}) = O (x^{n-2}|\ln x|^{-1})
	\,.
\end{equation}

The tensor $\check T$ corresponding to $\check g$ is
\begin{align*}
\check T^{xx}
	&= \hat T^{xx}
	\,,\\
\check T^{xA}
	&= \hat T^{xA}
	\,,\\
\check T^{AB}
	&= \hat T^{AB} - \frac{1}{n-1} x^{{4}}\,\big(1 - \frac{k}{4}x^2\big)^{{-2}}\,\xi(x)\,\ringh^{AB}
	\,.
\end{align*}

For $x \leq 4\epsilon$, after inspecting the calculations of Corollary \ref{Cor:3.10Extended} to determine $R[\check g] - R[\hat g]$, one finds
\begin{align}
 \nn
 R[\check g] - R[g]=
 &
 { R[\check g] - R[\hat g] + R[\hat g] - R[g]}
\\
 \nn
	\geq
 &
  -x^{n+1}\partial_{x}\Big\{x^{-(n-2)} \xi(x)\Big\}
		\underbrace{-{\frac{k(n-2)}2 x^{{4}} \xi}
    }_{\ge -C_2 {C_*}\tvarphiep(x)\, \frac{x^{{n+2}}}{|\ln x| }}
\\
 \nn
 		&
  - C_2\,\tvarphiep(x)\max({x^{n+2}}, \frac{x^{2n-4}}{|\ln x|^{2}})
\\
		&
 - C_2(x^6\,\omega_1(x)  + x^{4} \omega_2(x)^2)
 { + R[\hat g] - R[g]}
		\,,
 \label{27XII17.6}
\end{align}
where $C_2$ is independent of $\epsilon$ and
\[
\omega_\ell(x) = \sum_{0 \leq j \leq \ell} x^j\,|\partial_x^j \xi(x)|, \qquad \ell = 1, 2
	\,.
\]
Using \eqref{Eq:01XII17.1} and the fact that $|\tilde\varphi'|^2 \leq C\tilde\varphi$, we can bound
\[
x^6\,\omega_1(x)  + x^{4} \omega_2(x)^2 \leq C_3\,C_*(C_* + 1) \tvarphiep(x)\,\frac{x^{n +4}}{|\ln x|}
\]
for some $C_3$ independent of $\epsilon$. In view of \eqref{27XII17.5} it should be clear that a constant $C_*$ can be chosen such that,  for all sufficiently small $\epsilon$, there holds
\begin{align*}
R[\check g] - R[g]
	&\geq \frac{C_*}{2}\,\tvarphiep(x)\,\frac{x^n}{|\ln x|^2}
		\,,
\end{align*}
which implies that
\[
R[\check g] \geq R[g]
	\,.
\]
It is also clear that $R[\check g] \leq R[g] +\tilde \varphi _{2\epsilon}(x)\,O(\frac{x^n}{|\ln x|^2})$.

The metric $\check g$ is readily seen to be of the form \eqref{28VII17.4}, and so, by Corollary \ref{Cor:29XII17-C1}, of the form \eq{18VIII15.6}-\eq{18VIII15.8} after a suitable coordinate transformation at infinity. The mass aspect function $\check\Theta$ of $\check g$ is found to be
\begin{align*}
\check\Theta
	&= \hat \Theta +  \frac{C_*}{n}\,\int_0^{4\epsilon} \frac{\tvarphiep(s)}{s|\ln s|^2}\,ds\\
	&= \hat \Theta +  O(|\ln \epsilon|^{-1})
		\,.
\end{align*}
This concludes the proof.
\qedskip

\begin{remark}
 \label{R4V18.1}
 If the metric $g$ in Theorem \ref{T31VII17.1} is $C^k$--conformally compactifiable, $3 \leq k \leq \infty$, then the metrics $g_\epsilon$ constructed above are $C^{\min(k,n+1)}$--conformally compactifiable;
in fact, $C^{n+1|k-(n+1)}$--conformally compactifiable for $k> n+1$.
When $n \geq 5$, the proof can be slightly modified to obtain metrics $g_\epsilon$ which are $C^k$--conformally compactifiable.
\end{remark}

\begin{proof} After Step 1 of the proof, the metric $\hat g$ is $C^k$--conformally compactifiable. In Step 2, note that $\xi$ is a multiple of $\frac{x^{n-2}}{\ln x}$ near $x = 0$, and so $g$ appears to be $C^{\min(k,n)}$--conformally compactifiable. However, we have
\[
(g_\epsilon)_{xx} = g_{xx} + \frac{1}{n-1}\xi + x^{n-2} \zeta(x^A)
\]
where $\zeta$ is a smooth function on $\transversemanifold$.
We can thus pass from the original coordinate system $(x,x^A) $ to new coordinates $(y,y^A)$ by setting $y^A = x^A$, while $y$ is obtained by integrating
$$
 \mathrm{d}y^2  =  \left(1+ \frac{x^2}{n-1}\xi(x)\right)\,\mathrm{d}x^2
 \,.
$$
In this coordinate system $  g_\epsilon $ is   $C^{\min(k,n+1)}$--conformally compactifiable.

When $n \geq 5$, one can modify the proof of Theorem~\ref{T31VII17.1} to obtain a $C^{k}$--conformally compactifiable metric by letting instead
\[
\xi(x) =  C_*\,x^{n-2}\int_{x}^{\infty} \tvarphiep(s)\,s \,ds
	\,.
\]
This is because, in place of \eqref{27XII17.5}, we have in these dimensions the estimate $R[\hat g] - R[g] \geq - C_1\,\tvarphiep(x)\, x^{n+2}$.
\qedskip
\end{proof}

\section{Miscellaneous}
 \label{s28XII17.1}

In this section we point-out some miscellaneous results concerning the mass aspect function $\Theta$.
We start by  noting that $\Theta$ can always be ``pushed-up'' by an arbitrary amount.

\begin{theorem}
  \label{T28VIII17.1}
  Under \eq{18VIII15.6}-\eq{18VIII15.8} and \eq{21XI17.E1}-\eq{18VIII15.9}, let the space-dimension $n$ be greater than or equal to three. Let $\Theta:\transversemanifold \rightarrow \mathbb{R}$ be the mass aspect function of $g$ and let
  $$
   \eta: \transversemanifold \rightarrow \mathbb{R}
  $$
  be a smooth non-negative function.  There exists $\epsilon_0 > 0$ such that, for all $0<\epsilon<\epsilon_0 <x_0/4$ there exists a metric $g_\epsilon$, also of the form \eq{18VIII15.6}-\eq{18VIII15.8}
  (after possibly a coordinate transformation), such that

\begin{enumerate}
\item
  $
   0\le R[g_\epsilon]-R[g ] =O(x^{n+2})
   \,,
   $%

  \item $g_\epsilon$
coincides with $g$ for $x > 4\epsilon$,

 \item $g_\epsilon$ has mass aspect function $\Theta + \eta + c_\epsilon$ for some non-negative constant $c_\epsilon$ which tends to zero as $\epsilon \rightarrow 0$.

\end{enumerate}
\end{theorem}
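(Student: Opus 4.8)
The plan is a two–step deformation patterned on the proof of Theorem~\ref{T31VII17.1}, but much simpler: here one only has to \emph{raise} $\Theta$ by the fixed nonnegative function $\eta$, and the scalar curvature is allowed to change by $O(x^{n+2})$ (rather than $O(x^n|\ln x|^{-2})$), so all cut–offs may be supported away from $\{x=0\}$ and the logarithmic cut–offs and change–of–variables machinery of Theorem~\ref{T31VII17.1} are not needed — which is why $n\ge3$ suffices. Fix $\varphi_\epsilon\in C^\infty$ with $\varphi_\epsilon\equiv1$ on $\{0<x<\epsilon\}$, $\varphi_\epsilon\equiv0$ on $\{x>2\epsilon\}$, $|\partial_x^j\varphi_\epsilon|\le C_j\epsilon^{-j}$, and let $\tvarphiep$ be the Step~2 cut–off of Theorem~\ref{T31VII17.1}, $\tvarphiep\equiv1$ on $\{x\le2\epsilon\}$, $\equiv0$ on $\{x\ge4\epsilon\}$, with $|\tilde\varphi'|^2\le C\tilde\varphi$.

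\emph{Step 1.} I would set $\hat g:=g+\tfrac1{n-1}\,\varphi_\epsilon(x)\,x^{n-2}\,\eta(x^C)\,\ringh$, i.e.\ perturb only the $\mathrm{d}x^A\mathrm{d}x^B$–block of $g$. For $x<\epsilon$ this is again of the form \eqref{18VIII15.6}--\eqref{18VIII15.8} with $\hat\mu_{AB}=\mu_{AB}+\tfrac1{n-1}\eta\,\ringh_{AB}$, hence with mass aspect $\hat\Theta=\tr_{\ringh}\hat\mu=\Theta+\eta$, and $\hat g\equiv g$ for $x>2\epsilon$. The corresponding Newton tensor differs from $T$ only by $x$–independent multiples of $x^{n+2}$ in its $xx$– and $AB$–blocks, so in \eqref{28VII17.8} all $x^n$–order contributions drop out and a direct computation gives, on $\{x<\epsilon\}$,
\[
R[\hat g]-R[g]=x^{n+2}\,q(x^C)+O(x^{n+3})\,,\qquad q=\tfrac{n(n-4)k}{2}\,\eta-\tfrac{n-2}{n-1}\,\bzmcD^A\bzmcD_A\eta\,,
\]
plus an $\epsilon$–independent $O(x^{n+2})$ term supported in $\{\epsilon<x<2\epsilon\}$ (the transition region of $\varphi_\epsilon$, where $|\partial_x^j\varphi_\epsilon|\le C_j\epsilon^{-j}\le C_j'x^{-j}$). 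Since $q$ is a fixed continuous function on the compact $\transversemanifold$, hence bounded below, this yields
\[
R[\hat g]-R[g]\ \ge\ -C_1\,\tvarphiep(x)\,x^{n+2}
\]
for some $C_1$ depending on $\eta$ but not on $\epsilon$.

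\emph{Step 2.} Following Step~2 of the proof of Theorem~\ref{T31VII17.1} in the dimensions $n\ge5$ (Remark~\ref{R4V18.1}), I would put $\check g:=\hat g+\tfrac1{n-1}\,\xi(x)\,\mathrm{d}x^2$ with $\xi(x):=C_*\,x^{n-2}\!\int_x^\infty\tvarphiep(s)\,s\,\mathrm{d}s\ge0$. Then $\xi\equiv0$ for $x\ge4\epsilon$, and for $x<2\epsilon$ one has $\xi(x)=C_*\,x^{n-2}\big(I_\epsilon-\tfrac{x^2}{2}\big)$ with $I_\epsilon:=\int_0^\infty\tvarphiep(s)\,s\,\mathrm{d}s=O(\epsilon^{2})$; in particular $\xi=O(x^{n-2})$ and $\xi\ge0$. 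The only change in the Newton tensor is $\check T^{AB}=\hat T^{AB}-\tfrac1{n-1}x^4(1-\tfrac k4x^2)^{-2}\xi\,\ringh^{AB}$, and inspecting the computation leading to \eqref{27XII17.6} one gets
\[
R[\check g]-R[\hat g]=C_*\,x^{n+2}\,\tvarphiep(x)+O\!\big((x^2+\epsilon^2)\,x^{n+2}\,\tvarphiep(x)\big)\,,
\]
the error being controlled $\epsilon$–independently by $\xi$, its first two derivatives, and $|\tilde\varphi'|^2\le C\tilde\varphi$. Choosing $C_*>2C_1$ and using that $\tvarphiep\equiv1$ on $\{x\le2\epsilon\}\supset\supp\varphi_\epsilon$ then gives $R[\check g]\ge R[g]$ everywhere and $0\le R[\check g]-R[g]=O(x^{n+2})$. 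Moreover $\check g$ is of the form \eqref{28VII17.4} with $\check\mu_{xA}=0$ and $\check\mu_{xx}=\tfrac{C_*}{n-1}I_\epsilon$ constant; by Corollary~\ref{Cor:29XII17-C1} it can be brought to the form \eqref{18VIII15.6}--\eqref{18VIII15.8} by a coordinate change at infinity, and by \eqref{28VII17.6} its mass aspect is $\Theta+\eta+c_\epsilon$, with $c_\epsilon=\tfrac{n-1}{n}\check\mu_{xx}=\tfrac{C_*}{n}I_\epsilon\ge0$ and $c_\epsilon=O(\epsilon^2)\to0$. Setting $g_\epsilon:=\check g$ finishes the proof.

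The one structurally important point is the mechanism by which the $O(x^{n+2})$ error of Step~1 — which cannot be avoided, since adding $\eta\ge0$ strictly increases the total mass $c_n\int_{\transversemanifold}\Theta\,d\mu_{\ringh}$, so the deformation is not a pure gauge transformation — gets absorbed near $\{x=0\}$: the $\mathrm{d}x^2$–perturbation $\xi\,\mathrm{d}x^2$ has to be $O(x^{n-2})$ for the mass to remain well defined, but its $x^n$–component carries the $\epsilon$–\emph{independent} coefficient $-\tfrac{C_*}{2}$, which is what produces the order-$x^{n+2}$ boost of the scalar curvature, while its $x^{n-2}$–component carries the $\epsilon$–dependent coefficient $C_*I_\epsilon\to0$, responsible only for the vanishing constant shift $c_\epsilon$; in particular no control of the sign of $q$ is needed. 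I expect the (routine but slightly fiddly) verification of the scalar–curvature estimates in the transition regions $\{\epsilon<x<2\epsilon\}$ and $\{2\epsilon<x<4\epsilon\}$, using $|\tilde\varphi'|^2\le C\tilde\varphi$ as in the proof of Theorem~\ref{T31VII17.1}, to be the part requiring the most care.
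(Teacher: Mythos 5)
There is a genuine gap in Step~1, in the choice of which block of the metric to perturb.

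You set $\hat g = g + \tfrac{1}{n-1}\,\varphi_\epsilon(x)\,x^{n-2}\eta\,\ringh$, i.e.\ you put the $\eta$-perturbation into the $\mathrm{d}x^A\mathrm{d}x^B$-block. The resulting change in the Newton tensor is, to leading order, $\delta T^{xx}=-x^{n+2}\varphi_\epsilon\eta$ and $\ringh_{AB}\,\delta T^{AB}=-(n-2)x^{n+2}\varphi_\epsilon\eta$, and the dominant term in \eqref{28VII17.8} is then
\[
x^{n+1}\partial_x\Big[x^{-1}\partial_x\big(x^{-n}\delta T^{xx}\big)+x^{-n-2}\ringh_{AB}\,\delta T^{AB}\Big]
= -\,\eta\,x^{n+1}\big[(n+1)\varphi_\epsilon'+x\varphi_\epsilon''\big]\,.
\]
On $\{x<\epsilon\}$, where $\varphi_\epsilon\equiv1$, this indeed vanishes and you are left with the $O(x^{n+2})$ curvature terms you wrote down. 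But on the transition region $\{\epsilon<x<2\epsilon\}$ one has $\varphi_\epsilon'\sim\epsilon^{-1}\sim x^{-1}$ and $x\,\varphi_\epsilon''\sim\epsilon^{-1}\sim x^{-1}$, so the displayed term is of order $x^{n}$ with $\epsilon$-independent constant — not $O(x^{n+2})$ as claimed. Worse, its sign is not controllable: writing $(n+1)\varphi_\epsilon'+x\varphi_\epsilon''=x^{-n}\partial_x\big(x^{n+1}\varphi_\epsilon'\big)$, any $C^1$ cutoff has $\varphi_\epsilon'$ vanishing at both ends of the transition interval, so $x^{n+1}\varphi_\epsilon'$ cannot be monotone and the bracket must change sign. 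In particular $R[\hat g]-R[g]$ can be as negative as $\sim -\epsilon^{n}$ somewhere in $\{\epsilon<x<2\epsilon\}$, and your Step~2 boost, which there is $\sim C_*\epsilon^{n+2}$, cannot absorb it for any $\epsilon$-independent $C_*$.

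The paper avoids this by perturbing the $\mathrm{d}x^2$-block instead, taking $\hat g = g+\tfrac{n}{n-1}\,x^{n-2}\varphi_\epsilon\eta\,\mathrm{d}x^2$. By \eqref{28VII17.6} this still shifts the mass aspect by exactly $\eta$, since $\mu_{xx}$ enters $\Theta$ with coefficient $\tfrac{n-1}{n}$. But now one can use \eqref{6IX15.1}: the leading operator is $-(n-1)x^2(x\partial_x-(n-2))$ applied to $q_{xx}$, and $x\partial_x-(n-2)$ annihilates the $x^{n-2}$ factor, leaving only a \emph{single} derivative of the cutoff. Thus
\[
R[\hat g]-R[g]\;=\;-\,n\,x^{n+1}\,\varphi_\epsilon'(x)\,\eta\;+\;O(x^{n+2})\,,
\]
with $\epsilon$-independent constant in the $O(x^{n+2})$ term, and with a non-increasing $\varphi_\epsilon$ the leading term is $\ge0$. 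The remaining error is then genuinely $O(x^{n+2})$, which your Step~2 — which coincides with the paper's, via the non-logarithmic $\xi(x)=C_*x^{n-2}\int_x^\infty\tvarphiep(s)\,s\,\mathrm{d}s$, Corollary~\ref{Cor:29XII17-C1}, and the $c_\epsilon=O(\epsilon^2)$ accounting — does handle correctly. So the two-step architecture of your proposal is right, but the angular perturbation in Step~1 must be replaced by the $\mathrm{d}x^2$ perturbation for the sign and order of the error to be controllable.
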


\begin{proof}
Let $0 \leq \varphi \in C^\infty(\mathbb{R})$ be a non-increasing cut-off function which equals $1$ in $(0,1)$ and vanishes identically in $(2,\infty)$. Fix some small $\epsilon > 0$, and let
\[
\varphi_\epsilon(x) = \varphi(\epsilon^{-1} x).
\]

Consider
\begin{equation}\label{27XI17.1a}
\hat g  = g + \frac{n}{n-1}\,x^{n-2}\,\varphi_\epsilon(x)\,\eta(x^A)\,\mathrm{d}x^2
	\,.
\end{equation}
For $ x \geq 2\epsilon$, we have $\hat g \equiv g$. By \eqref{6IX15.1}, we have for $x \leq 2\epsilon$ that
\begin{align*}
R[\hat g] - R[g]
	&\geq - nx^{n+1}\partial_{x}\Big\{ \varphi_\epsilon(x)\,\eta(x^A)\Big\} + O(x^{n+2})\\
	&= - nx^{n+1}\underbrace{\varphi_\epsilon'(x)}_{\leq 0}\,\eta(x^A) + O(x^{n+2})\\
	&\geq O(x^{n+2})
		\,,
\end{align*}
where the error term is larger than $-C\,x^{n+2}$ for some $C$ independent of $\epsilon$.

The metric $\hat g$ is of the form \eqref{28VII17.4}, and so of the form \eq{18VIII15.6}-\eq{18VIII15.8} after a suitable coordinate transformation at infinity as in Corollary \ref{Cor:29XII17-C1}. The mass aspect function $\hat\Theta$ of $\hat g$ is related to the mass aspect function $\Theta$ by
\begin{align*}
\hat\Theta
	&= \Theta + \eta
		\,.
\end{align*}
We now follow Step 2 in the proof of Theorem \ref{T31VII17.1} to deform $\hat g$ (in the asymptotic region) to a metric $\check g$ such that $\check g \equiv g$ for $x \geq 4\epsilon$, $0 \leq R[\check g] - R[g] \leq O(x^{n+2})$, and the mass aspect function $\check\Theta$ can be written in the form $\check\Theta = \hat \Theta + c_\epsilon$ for some constant $c_\epsilon = O(\epsilon^{2})$.
\end{proof}
\qedskip

\begin{corollary}
  \label{Cor28VIII17.1}
  Under \eq{18VIII15.6}-\eq{18VIII15.8} and \eq{21XI17.E1}-\eq{18VIII15.9}, let the space-dimension $n$ be greater than or equal to three. Assume that $(\transversemanifold,\ringh)$ is conformal to the standard sphere and let the energy-momentum covector of $(M,g)$ be $(m_0, m_1, \ldots, m_n)$. Let $(\tilde m_0, \tilde m_1, \ldots, \tilde m_n)$ be an energy-momentum covector which lies to the chronological future of $(m_0, m_1, \ldots, m_n)$, i.e. $\tilde m_0 > m_0$ and $(\tilde m_0 - m_0)^2 - \sum_{i \geq 1} (\tilde m_i - m_i)^2 > 0$. There exists $\epsilon_0 > 0$ such that, for all $0<\epsilon<\epsilon_0 <x_0/4$ there exists a metric $g_\epsilon$, also of the form \eq{18VIII15.6}-\eq{18VIII15.8}
    (after possibly a coordinate transformation), such that

\begin{enumerate}
\item
  $
   0\le R[g_\epsilon]-R[g ] =O(x^{n+2})
   \,,
   $%

  \item $g_\epsilon$
coincides with $g$ for $x > 4\epsilon$,

 \item $g_\epsilon$ has an energy-momentum covector $(m_0^\epsilon, m_1^\epsilon, \ldots, m_n^\epsilon)$ such that $m_\mu^\epsilon \rightarrow \tilde m_\mu$ as $\epsilon \rightarrow 0$.
\end{enumerate}
\end{corollary}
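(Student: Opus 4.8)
The plan is to deduce the Corollary directly from Theorem~\ref{T28VIII17.1}, the only additional input being an elementary lemma describing which first-order moments on $\mathbb{S}^{n-1}$ can be realised by a smooth non-negative density. Fix once and for all the conformal frame at infinity in which $\ringh$ is the round metric on $\mathbb{S}^{n-1}$, so that the normalised first eigenfunctions $X_i$ of Section~\ref{s18VIII15.1} are the coordinate functions $\theta^i$ (here $\mathbb{S}^{n-1}$ is regarded as the unit sphere of $\mathbb{R}^n$, with $\theta=(\theta^1,\dots,\theta^n)$ and $|\theta|=1$). Set $\delta m_\mu:=\tilde m_\mu-m_\mu$; by hypothesis $\delta m_0>0$ and $\delta m_0^2-\sum_{i\ge1}\delta m_i^2>0$, i.e.\ $(\delta m_0,\delta m_1,\dots,\delta m_n)$ lies in the open future light cone. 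Suppose we have produced a smooth non-negative function $\eta$ on $\mathbb{S}^{n-1}$ with
\begin{equation}\label{28XII17.mom}
c_n\int_{\mathbb{S}^{n-1}}\eta\,d\mu_{\ringh}=\delta m_0\,,\qquad c_n\int_{\mathbb{S}^{n-1}}\eta\,\theta^i\,d\mu_{\ringh}=\delta m_i\,,\quad i=1,\dots,n\,.
\end{equation}
Applying Theorem~\ref{T28VIII17.1} to $g$ with this $\eta$ produces, for every $0<\epsilon<\epsilon_0<x_0/4$, a metric $g_\epsilon$ of the form \eqref{18VIII15.6}-\eqref{18VIII15.8} (possibly after a coordinate transformation at infinity, which pointwise fixes the boundary and hence leaves the round conformal frame and formula \eqref{9VIII17.Add1} unchanged), with $g_\epsilon=g$ for $x>4\epsilon$, with $0\le R[g_\epsilon]-R[g]=O(x^{n+2})$, and with mass aspect function $\Theta+\eta+c_\epsilon$ for some constant $c_\epsilon\ge0$ such that $c_\epsilon\to0$ as $\epsilon\to0$. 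Plugging $\Theta+\eta+c_\epsilon$ into \eqref{9VIII17.Add1}, using $\int_{\mathbb{S}^{n-1}}\theta^i\,d\mu_{\ringh}=0$ and \eqref{28XII17.mom}, gives $m_0^\epsilon=m_0+\delta m_0+c_nc_\epsilon\,\mu_{\ringh}(\mathbb{S}^{n-1})\to\tilde m_0$ and $m_i^\epsilon=m_i+\delta m_i=\tilde m_i$, so $m_\mu^\epsilon\to\tilde m_\mu$, as claimed.

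It remains to construct $\eta$. For any non-negative $\eta$ one has, since $|\theta|=1$, that $\big|\int_{\mathbb{S}^{n-1}}\eta\,\theta\,d\mu_{\ringh}\big|\le\int_{\mathbb{S}^{n-1}}\eta\,d\mu_{\ringh}$, so the moment map sends the non-negative cone into the closed future light cone; the point is that its interior is attained without leaving the smooth non-negative class. If $\vec{\delta m}:=(\delta m_1,\dots,\delta m_n)=0$, take $\eta$ equal to the constant $\delta m_0/(c_n\,\mu_{\ringh}(\mathbb{S}^{n-1}))$. Otherwise put $e:=\vec{\delta m}/|\vec{\delta m}|\in\mathbb{S}^{n-1}$, and for small $t>0$ let $\rho_t\ge0$ be a smooth function of $\theta\cdot e$ alone, supported in $\{\theta\cdot e>1-t\}$, with $\int_{\mathbb{S}^{n-1}}\rho_t\,d\mu_{\ringh}=1$; by rotational symmetry about $e$ one has $\int_{\mathbb{S}^{n-1}}\rho_t\,\theta\,d\mu_{\ringh}=d(t)\,e$ with $0<d(t)<1$ and $d(t)\to1$ as $t\to0$. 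Fix $t$ so small that $\delta m_0\,d(t)>|\vec{\delta m}|$ and, for $s\in[0,1]$, set $\eta_s:=\frac{\delta m_0}{c_n}\big(\frac{1-s}{\mu_{\ringh}(\mathbb{S}^{n-1})}+s\,\rho_t\big)\ge0$. Then $c_n\int_{\mathbb{S}^{n-1}}\eta_s\,d\mu_{\ringh}=\delta m_0$ for every $s$, whereas $c_n\int_{\mathbb{S}^{n-1}}\eta_s\,\theta\,d\mu_{\ringh}=\delta m_0\,s\,d(t)\,e$; since $s\mapsto\delta m_0\,s\,d(t)$ runs continuously from $0$ to a value strictly larger than $|\vec{\delta m}|$, there is $s_*\in(0,1)$ with $\delta m_0\,s_*\,d(t)=|\vec{\delta m}|$, and $\eta:=\eta_{s_*}$ satisfies \eqref{28XII17.mom}.

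The only step carrying genuine content is this moment lemma: a smooth non-negative density on $\mathbb{S}^{n-1}$ can reproduce any prescribed pair (total mass, dipole vector) whose dipole is strictly shorter than the mass --- equivalently, the closed convex cone generated by the null rays $\{(1,\theta):\theta\in\mathbb{S}^{n-1}\}$ is the closed future light cone, and its interior lies in the image of the moment map restricted to smooth non-negative densities. The interpolation with a fixed concentrated bump above is merely a concrete device for staying inside the smooth non-negative class while hitting the target exactly. Everything else --- the reduction to Theorem~\ref{T28VIII17.1} and the fact that the residual constant shift $c_\epsilon$ affects only $m_0^\epsilon$ and vanishes in the limit --- is routine bookkeeping.
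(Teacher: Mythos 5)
Your proposal is correct and follows essentially the same route as the paper: apply Theorem~\ref{T28VIII17.1} with a judiciously chosen smooth non-negative $\eta$, and observe that timelikeness of $\tilde m - m$ is exactly what guarantees that a non-negative $\eta$ with the prescribed monopole and dipole moments exists. The construction of $\eta$ (a constant plus a bump concentrated near the pole $e=\vec{\delta m}/|\vec{\delta m}|$, tuned so the dipole hits $\vec{\delta m}$) is the same idea as the paper's, which first rotates so only $v_1\neq 0$, then uses $\eta=\alpha+\beta\,a(x^1)$ with $a$ supported in $(v_1 v_0^{-1},1)$; your interpolation-plus-intermediate-value presentation is only a cosmetic variant of the paper's explicit solution for $\alpha,\beta$.
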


\begin{remark}
 \label{R28XII17.1}
{\rm
In particular, if $(m_\mu)$ is timelike past-pointing, the above produces metrics $g_\epsilon$ with $(m_\mu^\epsilon)$ as close to zero as desired.
}
\end{remark}

\begin{proof} We view $\transversemanifold \approx \mathbb{S}^{n-1}$
 as being standardly embedded in $\mathbb{R}^n$ so that the first eigenfunctions of the Laplacian on $\mathbb{S}^{n-1}$ are the coordinate functions $x_i \equiv x^i$ of $\mathbb{R}^n$.
 Fix some non-negative smooth function $\eta: \mathbb{S}^{n-1} \rightarrow [0,\infty)$ for the moment, and let $\check g$ be the metric obtained in the proof of Theorem \ref{T28VIII17.1}. We proceed to compute the energy-momentum covector $(\check m_0, \check m_1, \ldots, \check m_n)$ of $\check g$. We have
\begin{align*}
\check m_0
	&= m_0 + c_n\,\int_{\mathbb{S}^{n-1}} \eta\,d\mu_{\ringh} + O(\epsilon^{2})
		\,,\\
\check m_i
	&= m_i +  c_n\,\int_{\mathbb{S}^{n-1}} \eta\,x_i\,d\mu_{\ringh} + O(\epsilon^{2})
		\,,\enspace i = 1, \ldots, n
		\,.
\end{align*}
Thus, to conclude the argument, it suffices to show that $\eta$ can be chosen such that
\begin{align*}
\int_{\mathbb{S}^{n-1}} \eta\,d\mu_{\ringh}
	&= \frac{1}{c_n} (\tilde m_0 - m_0) =: v_0
		\,,\\
\int_{\mathbb{S}^{n-1}} \eta\,x_i d\mu_{\ringh}
	&= \frac{1}{c_n} (\tilde m_i - m_i) =: v_i, \enspace i = 2, \ldots, n
 \,.
\end{align*}
To this end, we may assume without loss of generality (after a suitable rotation of coordinate axes of $\mathbb{R}^n$) that $v_1 \geq 0$ and $v_2 = \ldots = v_n = 0$. Note that by assumption, the covector $(v_0, v_1, \ldots, v_n)$ is timelike and so $v_0 > v_1 \geq 0$. Select a function $a \in C_c^\infty(v_1\,v_0^{-1},1)$ such that $a \geq 0$ and $a \neq 0$. Then $\eta$ can be chosen as
\[
\eta = \alpha + \beta\,a(x^1)
\]
where
\begin{align*}
\alpha
	&= \Big(\mu_{\ringh}(\mathbb{S}^{n-1}) \int_{\mathbb{S}^{n-1}} a(x^1)\,x^1\,d\mu_{\ringh}\Big)^{-1} \Big\{v_0 \int_{\mathbb{S}^{n-1}} a(x^1)\,x^1\,d\mu_{\ringh} - v_1 \int_{\mathbb{S}^{n-1}} a(x^1)\,d\mu_{\ringh}\Big\},\\
\beta
	&=  \Big( \int_{\mathbb{S}^{n-1}} a(x^1)\,x^1\,d\mu_{\ringh}\Big)^{-1} \,v_1
		\,.
\end{align*}
Clearly $\beta \geq 0$ and, thanks to the requirement that the support of $a$ is contained in $(v_1\,v_0^{-1},1)$, $\alpha \geq 0$. The conclusion is readily seen.
\end{proof}
\qedskip

\section{Applications}
 \label{s29X17.1}

Let $(M^n,g)$ be an asymptotically locally hyperbolic (ALH) manifold as defined above.
We assume that  $x^{-1}(R[g]+ n(n-1)) $  is in $L^1$  and that the decay hypotheses needed for the deformation results above hold.
We consider the case where $M$ is complete with compact  boundary
 satisfying $H< (n-1)$, where the mean extrinsic curvature $H$ is calculated with respect to the inner pointing normal. The models we have in mind in Theorem~\ref{pmass1} are the higher dimensional black holes discussed, in, e.g.\ Birmingham's paper~\cite{Birmingham}, in the cases $k = 0$ and $k =1$.  More specifically, we are interested in the cases where $N^{n-1}$ is a torus  or a (nontrivial) quotient of a sphere.

\medskip

\noindent{\sc Proof of the torus case of Theorem~\ref{pmass1}:}
Suppose $m < 0$. Hence,  deforming the metric slightly near the conformal boundary if necessary, we may assume  by Theorem~\ref{T31VII17.1} that the mass aspect function is negative.

We claim that,  for $r$ large enough (i.e., $x$ sufficiently close to $0$), $r = r_1$, say, the level surface $N_1 = \{r_1\} \times N$ has mean curvature  $H_1 > n-1$    calculated with respect to the normal  $\nu$  pointing towards conformal infinity.  To this end, we recall {\eqref{18VIII15.6}-\eqref{18VIII15.8}} (note that $k = 0$):
\[
  g = x^{-2} \Big(\mathrm{d}x^2  +  {\ringh}{}     + x^{n} \mu \Big)+o(x^{n-2}) \mathrm{d}x^i \mathrm{d}x^j
\,.
\]
The one-form dual to the normal $\nu$ to $\{r = r_1\}$ is
$$-\frac{1}{g(dx,dx)^{1/2}} dx = -\frac{1 + o(x^{n})}{x} dx
 \,.
$$
Thus,
\begin{align*}
H_1
	&= g^{AB}\,\nabla_{A} \nu_B \\
	&= x(1 + o(x^{n}))(\ringh^{AB} - x^n\,\ringh^{AC}\ringh^{BD} \mu_{CD} + o(x^n))\,\Gamma_{AB}^1\\
	&= \frac{1}{2} x^3(\ringh^{AB} - x^n\,\ringh^{AC}\ringh^{BD} \mu_{CD} + o(x^n))(-\partial_x g_{AB} + o(x^{n-2}))\\
	&= \frac{1}{2} (\ringh^{AB} - x^n\,\ringh^{AC}\ringh^{BD} \mu_{CD} + o(x^n))(2 \ringh_{AB} - (n-2)x^{n}\mu_{AB} + o(x^{n}))\\
	&= (n-1) - \frac{1}{2}n\,x^n \tr_{\ringh}\mu + o(x^n)
		\,.
\end{align*}
As the mass aspect function $\tr_{\ringh}\mu$ is negative, the claim follows.

For this proof we find it convenient to use standard existence results for marginally outer trapped surfaces (MOTSs); see \cite{AEM} and references therein.  To this end,  we introduce a second fundamental form: $K = - g$, and consider the initial data set $(M,g,K)$.   Observe that the scalar curvature condition implies that the dominant energy condition,
$\mu \ge |J|$, holds.

Now consider the compact body $W = [r_0, r_1]  \times  N$, with boundary $N_0 \cup N_1$.  For the null expansion $\theta _0$ of $N_0$, with respect to the normal pointing into $W$, we have
$\theta _0 = H_0 + {\rm tr}_{N_0} K < (n-1) - (n-1) = 0$.   For the null expansion
 $\theta _1$ of $N_1$, with respect to the normal pointing out of $W$, we have $\theta _1 = H_1 + {\rm tr}_{N_1} K > (n-1) - (n-1) =0$.
 Under these barrier conditions there exists an `outermost' MOTS ${\Sigma}$ in the interior of $W$; that is, ${\Sigma}$ encloses $N_0$, and there is no MOTS, or, more generally, weakly outer trapped surface ($\theta  \le 0$), enclosing
${\Sigma}$ (see \cite[Theorem 4.6]{AEM} and \cite[Theorem 5.1]{eichmairgah}).

In general, ${\Sigma}$ may have several components.
Using the  product structure of $W$, we obtain a  projection map $P: W \to N_0$,
such that $P\circ j = {\rm id}$, where $j: N_0 \to \hat M$ is inclusion. The map
$f = P \circ i : {\Sigma} \to N_0$, where $i : {\Sigma} \to \hat M$ is inclusion, induces
a map on homology $f_*: H_{n-1}({\Sigma}) \to H_{n-1}(N_0)$.   Using that ${\Sigma}$ is homologous to $N_0$, we compute, $f_*[{\Sigma}] = P_*(i_*[{\Sigma}]) = P_*(j_*[N_0]) = {\rm id}_*[N_0] = N_0 \ne 0$.  It follows that there is a component ${\Sigma}'$ of ${\Sigma}$, for which there is a nonzero degree map from ${\Sigma}'$ to $N_0$.
Hence, by a result of Schoen and Yau  \cite[Corollary~2]{SYsc2}, ${\Sigma}'$ does not carry a metric of positive scalar curvature.  It then follows from Theorem 3.1 in \cite{G} that an outer neighborhood of ${\Sigma}'$ is foliated by MOTSs.  But this contradicts ${\Sigma}$ being outermost.
\qed

We pass now to:

\medskip

\noindent{\sc Proof of the sphere case of Theorem~\ref{pmass1}:}
Here we apply more directly results and arguments from \cite{AnderssonGallowayCai}.
Suppose $m < 0$. Hence,  again by Theorem~\ref{T31VII17.1} we may assume that the mass aspect function of $(M,g)$ is negative. 
By Remark~\ref{R4V18.1} and Proposition~\ref{Prem:02V18-R1}, we may also assume that the terms $o(x^{n-2})dx^idx^j$ in \eqref{18VIII15.6} are actually $o(x^{n-2})dx^Adx^B$, as assumed in \cite{AnderssonGallowayCai}.   Pass to the Riemannian universal cover
$(M',g')$.  We have $M' = [r_0, \infty) \times N'$, where $(N',\mrh')$ is a round sphere covering $(N, \mrh)$, and $N'_0 = \{r_0\} \times N'$ has mean curvature $H < n-1$. Moreover, the mass aspect function will be negative in $(M',g')$.  Then by \cite[Theorem 3.2]{AnderssonGallowayCai}, $g'$ can be deformed to a metric $g''$ on $M'$ such that:
\ben
\item $R(g'') \ge - n(n-1)$, and for some numbers $r_1 < r_2$,
\item $g'' = g'$ (up to homothety) inside $r = r_1$,
\item $g'' =$ hyperbolic metric outside $r = r_2$.
\een

Now using the `translational isometries' of the half space model for hyperbolic space, we obtain the identification space $(\hat M, \hat g)$, which, outside a compact set $K$, is given by
(see \cite[Section 2.3]{AnderssonGallowayCai}),
\beq\label{cusp}
\hat M = \bbR \times T \,,  \quad \hat g = dt^2 + e^{2t} h \,,
\eeq
where $(T,h)$ is a flat torus.  Thus, $(\hat M, \hat g)$ is just
a standard hyperbolic cusp outside the compact set $K$,
with scalar curvature $S[\hat g] \ge -n(n+1)$ everywhere, and with a spherical boundary ${\Sigma}_0$, say, contained in $K$, having mean curvature $H < n-1$.

Fix a  large number $b > 0$, so that $K$ lies in the region $-b < t < b$. Let $W$ be the region of $\hat M$ bounded between the toroidal slices ${\Sigma}_1 = \{-b\} \times T$ and
${\Sigma}_2 = \{b\} \times T$; thus $W$ is compact with boundary components ${\Sigma}_i$, $i =0, 1, 2$.

Now consider the `brane action' $\calB$:  For any compact hypersurface ${\Sigma}$ in $W$ homologous to ${\Sigma}_2$ (equivalently, homologous to ${\Sigma}_0 \cup {\Sigma}_1$),
\beq
{ \calB}({\Sigma}) = A({\Sigma}) - (n-1)V({\Sigma}) \, ,
\eeq
where $A({\Sigma}) =$ area of ${\Sigma}$ and $V({\Sigma}) =$ the volume of the region bounded by ${\Sigma}$ and ${\Sigma}_0 \cup {\Sigma}_1$.  We now minimize $\calB $ among all such hypersurfaces, as in \cite{AnderssonGallowayCai}.  The difference here is the presence of the boundary component ${\Sigma}_0$.  However, it has mean curvature
$< (n-1)$ with respect to the normal pointing into $W$, and, as such, forms an appropriate barrier for the minimization process.  As described in \cite{AnderssonGallowayCai},  using standard regularity results from geometric measure theory, we obtain a smooth compact embedded minimizer $S$ for the brane action, homologous to
${\Sigma}_2$.  From the discussion in \cite[Section 2.3]{AnderssonGallowayCai}, the minimizer can be constructed so as to lie in the region $-b < t < b$, and  hence is contained in the interior of $W$.

Using the `almost product' structure of $W$, there exists a retract of $W$ onto ${\Sigma}_2$.  Arguing as in the torus case (see also~\cite[Section 2.3]{AnderssonGallowayCai}), one finds that there is a nonzero degree map from some component $S'$ of $S$ to the torus  ${\Sigma}_2$. It follows from the result of Schoen and Yau \cite{SYsc2} alluded to above that $S'$ cannot carry a metric of positive scalar curvature. Then, since $S'$ must minimize the brane action in its homology class, Theorem 2.3 in \cite{AnderssonGallowayCai} gives that a neighborhood $U$ of $S'$ splits as a warped product,
\beq\label{warped}
U = (-u_1, u_2) \times S' \qquad \hat g|_U = du^2 + e^{2u} h  \, ,
\eeq
where the induced metric $h$ on $S'$ is flat.  But since $S'$ in fact {\it globally}
minimizes the brane action in its homology class,  this local
warped product structure can be extended to  larger $u$-intervals. Extend the warped product to larger
values of $u_2$ (keeping $u_1$ fixed for the moment).  Using the fact that $S$ is separating, eventually
$S'_2 = \{u_2\} \times S'$ will meet ${\Sigma}_2$ or another component of $S$ (without meeting ${\Sigma}_0$).  However, since $S'_2 \cup (S \setminus S')$ minimizes the brane action, the latter cannot occur: Where they touch, one could remove small disks of radius $\delta$, which contribute a term of order $O(\delta^2)$ to the brane action, and insert a cylinder, which contributes a term of order $O(\delta^3)$ to the brane action, so as to decrease the brane action, thereby contradicting the minimality of ${\calB }(S)$.%
\footnote{Alternatively, as $S'_2 \cup (S \setminus S')$ is minimizing, it is a regular embedded surface and so $S_2'$ cannot touch $S \setminus S'$.}
Hence, $S'_2$ meets ${\Sigma}_2$, and by the maximum principle, they agree.  This implies that $S'$ is homologous to ${\Sigma}_2$, and hence homologous to ${\Sigma}_0 \cup {\Sigma}_1$.

Now continue the warped product \eqref{warped} to more negative $u_1$-values, until at
some such value,  $S'_1 = \{-u_1\} \times S'$ meets ${\Sigma}_0$.  (If
the warped product reached ${\Sigma}_1$ without touching ${\Sigma}_0$, then $S'$ would be homologous to
${\Sigma}_1$, contradicting that it is homologous to ${\Sigma}_0 \cup {\Sigma}_1$.)
But, by a basic mean curvature comparison result, this one-sided tangential intersection is incompatible with the fact that ${\Sigma}_0$ has mean curvature smaller than $(n-1)$ and $S_1'$ has mean curvature equal to $(n-1)$ with respect to its `inward' normal. Hence, we arrive at a contradiction.\qed

\begin{remark}
\label{R10I18.1}
{\rm
 It is interesting to consider the torus case of Theorem~\ref{pmass1} in the context of the Horowitz-Myers AdS soliton \cite{HorowitzMyers}.   The AdS soliton is a globally static spacetime satisfying the vacuum Einstein equations with negative cosmological constant, which has negative mass.  Each time slice has topology $\bbR^2 \times T^{n-2}$.  Removing an open radial disk from the $\bbR^2$ factor, one obtains an ALH manifold $M = [r, \infty) \times T^{n-1}$, which, under appropriate scalings, satisfies all the assumptions of the  torus case of Theorem~\ref{pmass1} (in dimensions $4 \le n \le 7$), except for the mean curvature condition.  The mean curvature $H(r)$ of the boundary $N_r = \{r\} \times T^{n-1}$ is always greater than $n-1$, but comes arbitrarily close to this value as $r$ becomes arbitrarily large.  In this sense, one sees that Theorem~\ref{pmass1}, in the torus case, is  essentially sharp.
 \qed
 }
\end{remark}
 
It is perhaps worth noting that the torus case generalizes to the case of a compact flat (i.e. curvature zero) conformal infinity, provided the product assumption in Theorem 1.1 extends to the conformal completion.  This follows from a covering space argument using the fact that any compact flat manifold is finitely covered by a flat torus.

 \appendix

\section{Variations of the metric and scalar curvature}
 \label{app17VIII15.1}

In this appendix we  estimate in detail the error terms arising in our argument.
For this, consider a metric of the form
\begin{equation*}
\hat g = g+q = {\zg} + \lambda + q
\,,
\end{equation*}
where $\lambda$ and $q$ are thought of as being small compared to  $\zg$, in the sense that
\begin{eqnarray*}
|\lambda|_{{\zg}} + |\mathring\nabla\lambda|_{{\zg}}  + |\mathring\nabla\mathring\nabla\lambda|_{{\zg}}  +
|q|_{{\zg}} + |\mathring\nabla q|_{{\zg}}  + |\mathring\nabla\mathring\nabla q|_{{\zg}}
  \le \delta
  \,,
\end{eqnarray*}
 where the error terms are understood in ${\zg}$-norm.
Here the metric ${\zg}$ is considered to be general, not necessarily given by \eq{17VIII15.1}.

Given a metric $g_1$ and a small symmetric tensor $\lambda_1$ we will use the following formulae
\begin{eqnarray}
 R[g_1+\lambdatwo ] &=& R[g_1] + (\bzmcD _{g_1}R[g_1])\lambdatwo  + O(|\lambdatwo |^2_{g_1}) + O(|\nabla_{g_1}\lambdatwo |^2_{g_1})
\nonumber
\\
 &&
    + O(|\lambdatwo |_{g_1}|\nabla_{g_1}\nabla_{g_1}\lambdatwo |_{g_1})
 \,,
\\
\mathrm{Ric}[g_1+\lambdatwo ] &=&\mathrm{Ric}[g_1] + (\bzmcD _{g_1}\mathrm{Ric}[g_1])\lambdatwo
    + O(|\lambdatwo |_{g_1}|\nabla_{g_1}\lambdatwo |_{g_1})
\nonumber
\\
    &&
    + O(|\nabla_{g_1}\lambdatwo |^2_{g_1})
+ O(|\lambdatwo |_{g_1}|\nabla_{g_1}\nabla_{g_1}\lambdatwo |_{g_1})
\,,
\label{Ric_formula}
\\
\bzmcD _{g_1}R[g_1]\lambdatwo  &=& - {\nabla_{g_{1}}}_k {\nabla_{g_{1}}}^k\mathrm{tr}_{g_1}\lambdatwo  + {\nabla_{g_{1}}}^k{\nabla_{g_{1}}}^l {\lambdatwo }_{kl} -\mathrm{Ric}[g_1]^{kl}{\lambdatwo }_{kl}
\,,
\phantom{xxxxx}
\\
(\bzmcD _{g_1}\mathrm{Ric}[g_1] \lambdatwo )_{ij}
&=& {\nabla_{g_1}}^k{\nabla_{g_1}}_{(i} {\lambdatwo }_{j)k}  -\frac{1}{2} {\nabla_{g_{1}}}_k {\nabla_{g_{1}}}^k {\lambdatwo }_{ij}
\nonumber
\\
&&
- \frac{1}{2}{\nabla_{g_1}}_{i}{\nabla_{g_1}}_{j} \mathrm{tr}_{g_1}{\lambdatwo }
\,.
\end{eqnarray}
Moreover, the inverse metric $g^{{\mathrm{inv}}}$ satisfies
\begin{eqnarray*}
 g^{{\mathrm{inv}}}  - {\zg}^{{\mathrm{inv}}} \,=\,  O(|\lambda |_{{\zg}})
\,,\quad
\Gamma^k_{ij} - \mathring \Gamma^k_{ij} \,=\,  O(|\mathring\nabla\lambda |_{{\zg}})
\,.
\end{eqnarray*}
That yields
\begin{eqnarray}
 \nn
R[g+q] &=&
R[g] - \mathring \Delta \mathrm{tr}_{{\zg}} q
+ \mathring \nabla^k \mathring \nabla^l q_{kl}
 -   \zg^{ik} \zg^{jl} R_{ij}[{\zg} + \lambda ] q_{kl}
\nonumber
\\
&&+ O(|q|_{\zg}^2) + O(|\znabla q|_{\zg}^2)+ O(|q|_{\zg} |\znabla\znabla q|_{\zg})
\nonumber
\\
&&
 + O(|q|_{\zg}|\znabla \lambda|_{\zg}) + O(|\znabla q|_{\zg}|\znabla \lambda|_{\zg})+ O(|q|_{\zg}|\znabla\znabla \lambda|_{\zg})
\nonumber
\\
&&
 + O(|\lambda |_{\zg}| q|_{\zg})  + O(|\lambda |_{\zg}|\znabla q|_{\zg}) + O(|\lambda |_{\zg}|\znabla\znabla q|_{\zg})
\\
&=&R[g]+ \bzmcD _{{\zg}}R[{\zg}] q
\nonumber
\\
&&+ O(|q|_{\zg}^2) + O(|\znabla q|_{\zg}^2)+ O(|q|_{\zg} |\znabla\znabla q|_{\zg})
\nonumber
\\
&&
 + O(|q|_{\zg}|\znabla \lambda|_{\zg}) + O(|\znabla q|_{\zg}|\znabla \lambda|_{\zg})+ O(|q|_{\zg}|\znabla\znabla \lambda|_{\zg})
\nonumber
\\
&&
 + O(|\lambda |_{\zg}| q|_{\zg})  + O(|\lambda |_{\zg}|\znabla q|_{\zg}) + O(|\lambda |_{\zg}|\znabla\znabla q|_{\zg})
\label{formula_ricci}
\,.
\end{eqnarray}
Next, using
$$
 \hat C^k_{ij}:= \hat \Gamma^k_{ij} - \Gamma^k_{ij} =  \nabla_{(i}q_{j)}{}^k -\frac{1}{2}\nabla^kq_{ij}   -q^{kl}\nabla_{(i}q_{j)l}+ \frac{1}{2}q^{kl}\nabla_lq_{ij} +  O(|q|_{g}^2) + \error
 \,,
$$
we can write
\begin{eqnarray}
R[g+ q] &=&
\hat g^{ij}(R_{ij}[ g] + \nabla_{k} \hat C^k_{ij}- \nabla_{i} \hat C^k_{kj} +\hat C^k_{ij}\hat C^l_{kl} - \hat C^k_{il} \hat C^l_{jk})
\\
&=&
R[ g]- g^{ik}g^{jl} q_{kl}R_{ij}[ g] + g^{ij}\nabla_{k} \hat C^k_{ij}- g^{ij} \nabla_{i} \hat C^k_{kj}
\nonumber
\\
&&
-g^{im}g^{jn} q_{mn} \nabla_{k} \hat C^k_{ij} + g^{im}g^{jn} q_{mn} \nabla_{i} \hat C^k_{kj}
\nonumber
\\
&&
+ O(|q|_g^2)+ O(|\nabla q|_g^2)
+ O(|q|_{g}^2|\nabla\nabla q|_{g})
\\
&=&
R[ g]- g^{ik}g^{jl} q_{kl} R_{ij}[ g] +g^{ik}g^{jl}   \nabla_{i} \nabla_{j}q_{kl} -\Delta \tr_g q
\nonumber
\\
&&  - 2g^{im}g^{jn} g^{kl}q_{mn}   \nabla_{i}\nabla_{k}q_{jl}
+g^{ik}g^{jl}q_{ij}\nabla_{k} \nabla_l\tr_g q
 + g^{ik}g^{jl}q_{ij}\Delta q_{kl}
\nonumber
\\
&&
+ O(|q|_g^2)+ O(|\nabla q|_g^2) + O(|q|_{g}^2|\nabla\nabla q|_{g})
\\
&=&
R[ g]-  q^{ij} R_{ij}[ {\zg} ]
 +  g^{ik}g^{jl}\znabla_{k} \znabla_{l}q_{ij} - g^{kl}g^{mn}\znabla_m\znabla_nq_{kl}
\nonumber
\\
&&
 + g^{ik}g^{jl}  g^{mn}(q_{ij}\znabla_{k} \znabla_l q_{mn} +q_{ij}\znabla_m\znabla_n q_{kl}
- 2q_{kl}  \znabla_{i}\znabla_{\ellm }q_{jn})
\nonumber
\\
&&
+ O(|q|_{\zg}^2)+ O(|\nabla q|_{\zg}^2) + O(|q|_{\zg}^2|\znabla\znabla q|_{\zg})
+ O(|q|_{\zg}|\znabla \lambda|_{\zg})
\nonumber
\\
&&
  + O(|\znabla q|_{\zg}|\znabla \lambda|_{\zg})+ O(|q|_{\zg}|\znabla\znabla \lambda|_{\zg})
+ O(|\lambda |_{\zg}| q|_{\zg})
\label{gql_exp}
\,,
\end{eqnarray}
 where the indices on $q_{ij}$ have been raised with the metric $\zg$.

We need a more detailed version of the above in the case $q_{xA}=q_{AB}= \lambda_{xx}=\lambda_{xA}=0$, and when the metric $\zg$ satisfies
\eq{17VIII15.1}. We start by noting that
\begin{equation}
 R[{\zg}]_{ij} = -(n-1){\zg}_{ij}\,, \quad R[{\zg}] = -n(n-1)
\,.
\label{ricci_zg}
\end{equation}
For the Christoffel symbols of ${\zg}$ we find
\begin{eqnarray}
& \mathring \Gamma^C_{AB} \,=\,  \Gamma[{\ringh}{}]^C_{AB}
\,,
\quad
\mathring\Gamma^x_{AB}\,=\, x^{-1}\big(1-\frac{k^2}{16}x^4\big){\ringh}{}_{AB}
\,,
\quad
\mathring\Gamma^x_{xA} \,=\, 0
\,,&
\label{christ_zg1}
\\
&\mathring\Gamma^x_{xx} \,=\, -x^{-1}\,, \quad
\mathring\Gamma^C_{xA}\,=\, -x^{-1}\frac{1+\frac{k}{4}x^2}{1-\frac{k}{4}x^2}\delta_A^C
\,,
\quad
\mathring\Gamma^C_{xx} \,=\, 0
\,.&
\label{christ_zg2}
\end{eqnarray}
As such, it holds that
\begin{eqnarray*}
&& g^{ik}g^{jl}\znabla_{k} \znabla_{l}q_{ij} - g^{kl}g^{mn}\znabla_m\znabla_nq_{kl}
\\
&&
 + g^{ik}g^{jl}  g^{mn}q_{ij}\znabla_{k} \znabla_l q_{mn} +g^{ik}g^{jl} q_{ij}g^{mn}\znabla_m\znabla_n q_{kl}
- 2g^{kl}q^{ij}   \znabla_{i}\znabla_{k}q_{jl}
\\
&=&
-(n-1)x^2(  x\partial_x   -  n + 3 )q_{xx} +  O(x^4)q_{xx}  +O(x^5)\partial_xq_{xx}
\\
&&
- x^4 \bzmcD _A\bzmcD ^A q_{xx}+O(x^6) \bzmcD _A\bzmcD ^A q_{xx}
+2x^6q_{xx}\bzmcD _A\bzmcD ^A q_{xx}
+ O(x^8)q_{xx}\bzmcD _A\bzmcD ^A q_{xx}
\\
&&
+x^2(1-2x^2q_{xx}) (\lambda^{AB} + O(|\lambda|_{\zg}^2))\bzmcD _A\bzmcD _B q_{xx}
\\
&&
+ O(|q|_{\zg}^2)+ O(|q|_{\zg}|\nabla q|_{\zg})  + O(|\lambda |_{\zg}| q|_{\zg})   + O(|\znabla\lambda |_{\zg}| q|_{\zg})
 \,,
\end{eqnarray*}
where $\bzmcD $ denotes the covariant derivative associated to the Riemannian metric ${\ringh}{}$, and where $\bzmcD ^A ={\ringh}{}^{AB}\bzmcD _B$.
Using the formula  \eq{Ric_formula} for $R_{ij}[\zg]$  we obtain
\begin{eqnarray}
 \nn
  \lefteqn{
R[g+ q]
 =
R[ g]  -(n-1)x^2(  x\partial_x   -  n + 2 )q_{xx} +  O(x^4)q_{xx}  +O(x^5)\partial_xq_{xx}
}
 &&
\\
 \nn
&&
- x^4 \bzmcD _A\bzmcD ^A q_{xx}+O(x^6) \bzmcD _A\bzmcD ^A q_{xx}
+2x^6q_{xx}\bzmcD _A\bzmcD ^A q_{xx}
\\
 \nn
&&
+x^2(1-2x^2q_{xx}) (\lambda^{AB} + O(|\lambda|_{\zg}^2))\bzmcD _A\bzmcD _B q_{xx}
+ O(x^8)q_{xx}\bzmcD _A\bzmcD ^A q_{xx}
\\
 \nn
    &&
    + O(|q|_{\zg}^2)
  + O(|\nabla q|_{\zg}^2)
+ O(|q|_{\zg}^2|\znabla\znabla q|_{\zg})
\\
    &&
    + O(| q|_{\zg}|\lambda |_{\zg})  + O(|\znabla q|_{\zg}|\znabla \lambda|_{\zg})+ O(|q|_{\zg}|\znabla\znabla \lambda|_{\zg})
 \,.
  \label{6IX15.1}
\end{eqnarray}

We  also need to compare $R[\zg+q]$ with $R[\zg]$.
Equation \eq{gql_exp} with $\lambda=0$ yields
\begin{eqnarray}
 R[g]
&=&R[\zg]
 +\znabla_{k}\znabla_{l}q^{kl} - \mathring \Delta\mathrm{tr}q
-q^{ij}R_{ij}[{\zg}]
\nonumber
\\
&&+ O(|q|_{\zg}^2)
+ O(|\znabla q|_{\zg}^2)
+ O(|q|_{{\zg}} |\znabla  \znabla q|_{{\zg}})
\,.
\label{exp_Ricci}
\end{eqnarray}
Again, all indices raised and lowered with ${\zg}$.
In the calculations that follow, the following formulae are useful:
\bean
 \znabla_i  \znabla_j q^{ij}
&  =& \frac 1 {\sqrt {\det \zg}} \partial_i ({\sqrt {\det \zg}}\, \znabla_j q^{ij} )
\\
 &
  =
   &
   \frac 1 {\sqrt {\det \zg}} \partial_i \big(\partial_j ({\sqrt {\det \zg}} \,  q^{ij} )
  +  {\sqrt {\det \zg}} \, \zGamma^i_{k j}  q^{kj} \big)
  \,,
\\
 \Delta_{\zg} \tr_{\zg} q
 & = &
  \frac 1 {\sqrt {\det \zg}} \partial_i \big({\sqrt {\det \zg}}\, \zg^{ij} \partial_j (\tr_{\zg} q )
   \big)
   \,.
\eeal{12VIII15.4}
Assume again that \eq{17VIII15.1} and thus \eq{ricci_zg}-\eq{christ_zg2} hold.
An application of \eq{exp_Ricci}  then gives
\begin{eqnarray}
\nn
 \lefteqn{
 R[ {\zg}+ q]
=
 R[ {\zg}]
 -\big((x \partial _{x})^2   - (n-1) \big) \mathrm{tr}_{\zg}q
+  n x \frac{1+\frac{k}{4}x^2}{ 1-\frac{k}{4}x^2}\partial_x (\mathrm{tr}_{\zg}q)
}
 &&
\\
 \nn
&&
-(n-3)   \Big(\frac{1+\frac{k}{4}x^2}{1-\frac{k}{4}x^2}\Big)^2 \tr_{\zg} q
-2\frac{1+\frac{k^2}{16}x^4}{(1-\frac{k}{4}x^2)^{2}}\tr_{\zg} q
 - \frac{x^2 }{(1-\frac{k}{4}x^2)^{2}}\bzmcD ^{A}\bzmcD _{A} (\mathrm{tr}_{\zg} q)
\\
 \nn
&&+x^2(x^2\partial^2_{xx} + 5x \partial_x    +4 )  q_{xx}
-\frac{x^2}{(1-\frac{k}{4}x^2)^{2}}\Big(
(2n-1)x(1-\frac{k^2}{16}x^4)\partial_x  q_{xx}
\\
 \nn
&&
+2(n-1)(1-\frac{3}{16}k^2x^4)q_{xx}
-n(n-3)(1+\frac{k}{4}x^2)^2q_{xx}
- \frac{k^2}{4}x^4q_{xx}
\Big)
\\
 \nn
&&
+
\frac{2x^3}{ (1-\frac{k}{4}x^2)^{2}} (x\partial_{x}+ 1)(\bzmcD ^A q_{xA})
 -2(n-2)x^3\frac{1+\frac{k}{4}x^2}{(1-\frac{k}{4}x^2)^3} \bzmcD ^Aq_{xA}
\\
&&
+\frac{x^4}{(1-\frac{k}{4}x^2)^{4}}\bzmcD ^A\bzmcD ^Bq_{AB}
+ O(|q|_{\zg}^2)
+ O(|\znabla q|_{\zg}^2)
+ O(|q|_{{\zg}} |\znabla  \znabla q|_{{\zg}})
\,.
\eeal{12VIII15.1}
Making explicit the dominant terms only, this becomes
\begin{eqnarray}
 R[ {\zg}+ q]
 \nn
&=&
 R[ {\zg}]
 + x^2\Big((1-n)x \partial_x  +(n-1)(n-2) - x^2\bzmcD ^{A}\bzmcD _{A}
 \Big)q_{xx}
\\
 \nn
&&
 -\big( (x \partial_{x})^2- nx \partial_x  +x^2\bzmcD ^{A}\bzmcD _{A} \big)(x^2 \mathrm{tr}_{{\ringh}{}}q)
\\
 \nn
&&
+2x^3(x\partial_{x} + 3-n  )\bzmcD ^A q_{xA}
+x^4 \bzmcD ^A\bzmcD ^Bq_{AB}
\\
 \nn
&&
+  O(x^4)q_{xx} + O(x^5)\partial_x q_{xx}
 + O(x^6)\bzmcD ^{A}\bzmcD _{A}q_{xx}
+ O(x^4) \mathrm{tr}_{{\ringh}{}}q
\\
 \nn
&&
+  O(x^5) \partial_x\mathrm{tr}_{{\ringh}{}}q
+  O(x^6) \partial^2_{xx}\mathrm{tr}_{{\ringh}{}}q
+ O(x^6) \bzmcD ^{A}\bzmcD _{A} \mathrm{tr}_{{\ringh}{}}q
\\
 \nn
&&
 + O(x^5)  \bzmcD ^Aq_{xA} + O(x^6)\partial_x( \bzmcD ^Aq_{xA})
 + O(x^6) \bzmcD ^A\bzmcD ^Bq_{AB}
\\
&&
+ O(|q|_{\zg}^2)
+ O(|\znabla q|_{\zg}^2)
+ O(|q|_{{\zg}} |\znabla  \znabla q|_{{\zg}})
\,.
\eeal{26VIII15.1asf}

\section{Proof of Lemma~\protect\ref{Lem:UChangeVar}}
 \label{A28XII17.1}

 In the new coordinate system $(\bar x, x^A)$, we will use $\bar \lambda$ to denote the difference between $g$ and the new reference metric
$$\mathring{\bar g} = \bar x^{-2}\Big[\mathrm{d}\bar x^2 + \Big(1 - \frac{k}{4}\bar x^2\Big)^2\ringh
 \Big]
 \,.
$$
We will accordingly use a bar to refer to the metric components of $\bar \lambda$, its Newton tensor etc. For example, we have $\bar \lambda = \bar \lambda_{xx}\, \mathrm{d}\bar x^2+ \bar \lambda_{xA} \,\mathrm{d}\bar x\mathrm{d}x^A + \bar \lambda_{AB}\,\mathrm{d}x^A\mathrm{d}x^B$.

We compute
%
\begin{align*}
x^{-2}\Big(1 - \frac{k}{4}x^2\Big)^2
	&= {\bar x}^{-2}\Big(1 - \frac{k}{4}{\bar x}^2\Big)^2 + 2{\bar x}^{n-4}\,\psi + O({\bar x}^{\min(n,2n-6)})
		\,,\\
\mathrm{d}x
	&= [1 - (n-1)\bar x^{n-2}\psi + O(\bar x^{2n-4})]\mathrm{d}{\bar x}\\
		&\qquad\qquad - [\bar x^{n-1}\bzmcD _A  \psi + O(\bar x^{2n-3})]\,\mathrm{d}{  x}^A
	\,.
\end{align*}
This implies that
\begin{align*}
x^{-2}\,\mathrm{d}x^2
	&= \bar x^{-2}\,\mathrm{d}\bar x^2  - [2(n-2){\bar x}^{n-4}\,\psi + O({\bar x}^{2n-6})]\,\mathrm{d}{\bar x}^2\\
		&\qquad\qquad  - 2[\bar x^{n-3} \bzmcD _A \psi + O({\bar x}^{2n-5})]\mathrm{d}{\bar x}\,\mathrm{d}{  x}^A + O({\bar x}^{2n-4})\mathrm{d}{  x}^A\,\mathrm{d}{  x}^B
			\,,
\end{align*}
and
\begin{align*}
x^{-2}\Big(1 - \frac{k}{4}x^2\Big)^2\,\ringh_{AB}(x^C)\,\mathrm{d}x^A\,\mathrm{d}x^B
	&= \bar x^{-2}\Big(1 - \frac{k}{4}\bar x^2\Big)^2\,\ringh_{AB}\,\mathrm{d}x^A\,\mathrm{d}x^B \\
		&
+ 2[\bar x^{n-4}\psi + O(\bar x^{\min(n,2n-6)})]\,\ringh_{AB}\,\mathrm{d}x^A\,\mathrm{d}x^B
		\,.
\end{align*}
It follows that,
\begin{align*}
\bar \lambda 	
	&= [\lambda_{xx}(\bar x, x^C) - 2(n-2)\bar x^{n-4}\psi(x^C) + O(\bar x^{2n-6})]\, \mathrm{d}\bar x^2\\
		&\qquad  + \red{2} [\lambda_{xA}(\bar x, x^C) - \bar x^{n-3} \bzmcD _A \psi(x^C)  + O(\bar x^{2n-5})]  \,\mathrm{d}\bar x\mathrm{d}x^A\nonumber\\
		&\qquad + [\lambda_{AB}(\bar x,x^C) + 2\bar x^{n-4}\psi(x^C)\,\ringh_{AB}(x^C)
  + O(\bar x^{\min(n,2n-6)})]\,\mathrm{d}x^A\mathrm{d}x^B
			\,.
\end{align*}
%

We now proceed to compute the tensor $T$. We note that
\[
 \bar x^2 \Big(1 - \frac{k}{4}\bar x^2\Big)^{-2} =  x^2 \Big(1 - \frac{k}{4} x^2\Big)^{-2} + 2\bar x^n\,\psi + O(\bar x^{\min(2n-2,n+2)})
 	\,.
\]
This leads to, in view of \eqref{30VII17.1},
\begin{align*}
\tr_{\mathring{\bar g}}(\bar \lambda)
	& = \bar x^2 \,\bar \lambda_{xx} + \bar x^2 \Big(1 - \frac{k}{4}\bar x^2\Big)^{-2} \,\ringh^{AB}\,\bar \lambda_{AB}\\
	& = \bar x^2 \,\lambda_{xx} + \bar x^2 \Big(1 - \frac{k}{4}\bar x^2\Big)^{-2} \,\ringh^{AB}\,\lambda_{AB}\\
		&\qquad\qquad - 2(n-2)\bar x^{n-2}\,\psi + 2(n-1)\bar x^{n-2}\Big(1 - \frac{k}{4}\bar x^2\Big)^{-2}\,\psi
			+ O(\bar x^{\min(n+2,2n-4)})\\
	& = \tr_{\mathring{g}}(\lambda)
		 + 2\bar x^{n-2}\,\psi + k(n-1)\bar x^{n}\,\psi
			+ O(\bar x^{\min(n+2,2n-4)})
				\,,\\
\bar T^{xx}
	&= \bar x^4 \,\bar \lambda_{xx} - \tr_{\mathring{\bar g}}(\bar \lambda)\,\bar x^2\\
	&= T^{xx} - 2(n-1)\bar x^{n}\,\psi - k(n-1)\bar x^{n + 2}\,\psi
			+ O(\bar x^{\min(n+4,2n-2)})
				\,,
\\
\bar T^{xA}
	&= \bar x^4\Big(1 - \frac{k}{4}\bar x^2\Big)^{-2} \ringh^{AB} \bar \lambda_{xB}\\
	&= \bar x^4\Big(1 - \frac{k}{4}\bar x^2\Big)^{-2}
  \left(\ringh^{AB} \lambda_{xB} - \bar x^{n-3}\,\bzmcD ^A \psi \right)
   + O(\bar x^{2n-1})
\\
	&= \Big[ x^4\Big(1 - \frac{k}{4} x^2\Big)^{-2} +  O(\bar x^{n+2})\Big]\ringh^{AB}\, \lambda_{xB}  - \bar x^{n+1}\,\bzmcD ^A \psi + O(\bar x^{2n-1})
  + O(\bar x^{n+3})
\\
	&= T^{xA} - \bar x^{n+1}\,\bzmcD ^A \psi
            +  O(\bar x^{\min{}(n+3,2n-1)})
		\,,\\
\bar \lambda^{AB}
	&= \bar x^4\Big(1 - \frac{k}{4}\bar x^2\Big)^{-4}\,\ringh^{AC}\,\ringh^{BD} \bar \lambda_{CD}\\
	&= \bar x^4\Big(1 - \frac{k}{4}\bar x^2\Big)^{-4}\,\ringh^{AC}\,\ringh^{BD} \lambda_{CD} + 2\bar x^n\Big(1 - \frac{k}{4}\bar x^2\Big)^{-4}\,\psi\,\ringh^{AB} + O(\bar x^{\min(n+4,2n-2)})\\
	&= \lambda^{AB}  + 2\bar x^n\,\psi\,\ringh^{AB} + 2k\,\bar x^{n+2}\,\psi\,\ringh^{AB}  + O(\bar x^{\min(n+4,2n-2)})
		\,,\\
\bar T^{AB}
	&= \bar \lambda^{AB} - \bar x^2\Big(1 - \frac{k}{4}\bar x^2\Big)^{-2}\tr_{\mathring{\bar g}}(\bar \lambda)\,\ringh^{AB}\\
	&= \bar \lambda^{AB} - \bar x^2\Big(1 - \frac{k}{4}\bar x^2\Big)^{-2}\tr_{\mathring{g}}(\lambda)\,\ringh^{AB}
		-2 \bar x^n \psi\,\ringh^{AB} - kn\,\bar x^{n+2}\,\psi\,\ringh^{AB} + O(x^{\min(n+4,2n-2)})\\
	&= T^{AB} - k(n-2) \bar x^{n+2}\,\psi\,\ringh^{AB}+ O(\bar x^{\min(n+4,2n-2)})
		\,.
\end{align*}
This completes the proof. 

\section{A convenient cut-off function}
 \label{App27XII17.1}

In this appendix, we construct, for small $\epsilon > 0$, a cut-off function $\varphi_\epsilon \in C^\infty(\RR)$ such that $\varphi_\epsilon \equiv 1$ in $(-\infty, \epsilon^2)$, $\varphi_\epsilon \equiv 0$ in $(\epsilon,\infty)$, and
\[
|\varphi_\epsilon(x)| \leq C \text{ and } |\varphi_\epsilon'(x)| \leq \frac{C}{x|\ln x|}
\]
for some constant $C$ independent of $\epsilon$,
 together with
\bel{29XII17.2}
0 = \int_0^\infty (n-2) \varphi_\epsilon(x)\,x^{n-3}\,dx = - \int_0^\infty \varphi_\epsilon'(x)\,x^{n-2}\,dx
	\,.
\ee

Let $\chi, \zeta \in C^\infty(\RR)$ such that $\chi \equiv 1- \zeta \equiv 1$ in $(-\infty,0)$, $\chi \equiv \zeta \equiv 0$ in $(1,\infty)$, $\zeta \geq 0$ in $(0,1)$, $\zeta = 1$ in $(1/2,3/4)$ and $\chi \equiv \zeta$ in $(1/2,\infty)$. See Figure \ref{F29XII17.1}.

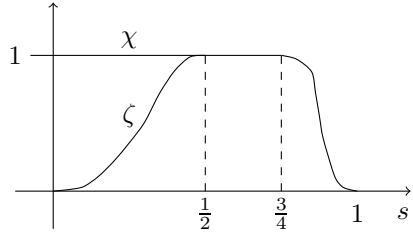
\begin{figure}[h]
\begin{center}
\begin{tikzpicture}
\draw[->] (-0.5,0) -- (4.7,0);
\draw[->] (0,-0.5)--(0,2.5);
\draw(-0.3,1.8)--(3,1.8);
\draw plot[smooth] coordinates{(0,0)(0.2,0.02)(0.4,0.06)(0.6,0.2) (0.8,0.4) (1,0.64) (1.2, 0.91) (1.4,1.3) (1.6, 1.6) (1.8,1.78) (2,1.8)};
\draw plot[smooth] coordinates{(3,1.8)(3.2,1.75) (3.4,1.58) (3.45,1.3) (3.5,1)  (3.55,0.7) (3.7,0.2) (3.8,0.06)(3.9,0.02) (4,0)};
\draw[dashed](3,0)--(3,1.8);

\draw[dashed](2,0)--(2,1.8);

\draw (4.6,-0.3) node {$s$}
(4,-0.3) node {$1$}
(2,-0.3) node {$\frac{1}{2}$}
(3,-0.3) node {$\frac{3}{4}$}
(-0.5,1.8) node {$1$}
(1,2) node {$\chi$}
(1,1) node {$\zeta$}
;
\end{tikzpicture}
\end{center}
\caption{The functions $\chi$ and $\zeta$.}
\label{F29XII17.1}
\end{figure}%

 For small $\epsilon > 0$, define
\[
\varphi_\epsilon(x) = \chi\Big(2 + \frac{\ln x}{|\ln \epsilon|}\Big) - a_\epsilon\,\zeta\Big(2 + \frac{\ln x}{|\ln \epsilon|}\Big)
	\,,
\]
where $a_\epsilon$ is a constant which is chosen so that \eqref{29XII17.2} holds, i.e. $a_\epsilon = b_\epsilon\,c_\epsilon^{-1}$ where
\begin{align*}
b_\epsilon
	&=  \int_{0}^\epsilon \chi\Big(2 + \frac{\ln x}{|\ln \epsilon|}\Big)\,x^{n-3}\,dx
		\,,\\
c_\epsilon
	&=\int_{0}^\epsilon \zeta\Big(2 + \frac{\ln x}{|\ln \epsilon|}\Big)\,x^{n-3}\,dx
	\,.
\end{align*}
By construction we have
\begin{equation}\label{27XII17.1}
  \varphi_\epsilon=\left\{
     \begin{array}{ll}
       1, & \hbox{$0<x<\epsilon^2$;} \\
       0, & \hbox{$x>\epsilon$.}
     \end{array}
   \right.
\end{equation}
Note that
\begin{align}
 \label{27XII17.3}
\Big|\int_0^{\epsilon^{3/2}} \chi\Big(2 + \frac{\ln x}{|\ln \epsilon|}\Big)\,x^{n-3}\,dx \Big|
	&\leq \frac{1}{n-2}\sup_\RR |\chi|\,\epsilon^{\frac{3}{2}(n-2)}
	\,,\\
\Big|\int_0^{\epsilon^{3/2}} \zeta\Big(2 + \frac{\ln x}{|\ln \epsilon|}\Big)\,x^{n-3}\,dx \Big|
	&\leq \frac{1}{n-2}\sup_\RR |\zeta|\,\epsilon^{\frac{3}{2}(n-2)}
	\,,
\end{align}
and
\begin{align*}
\int_{\epsilon^{3/2}}^\epsilon \chi\Big(2 + \frac{\ln x}{|\ln \epsilon|}\Big)\,x^{n-3}\,dx
	&= \int_{\epsilon^{3/2}}^\epsilon \zeta\Big(2 + \frac{\ln x}{|\ln \epsilon|}\Big)\,x^{n-3}\,dx\\
	&\geq \int_{\epsilon^{3/2}}^{\epsilon^{5/4}} x^{n-3}\,dx
		= \frac{1}{n-2}(\epsilon^{\frac{5}{4}(n-2)} - \epsilon^{\frac{3}{2}(n-2)})
			\,.
\end{align*}
The above implies that
\begin{equation}\label{27XII17.2}
\lim_{\epsilon \rightarrow 0} a_\epsilon = \lim_{\epsilon \rightarrow 0} \frac{b_\epsilon}{c_\epsilon} = 1
	\,.
\end{equation}
It is readily seen that $\varphi_\epsilon$ satisfies all the needed requirements.

\bigskip

\noindent{\sc Acknowledgements:}   PTC was supported in part by  the Austrian Science Fund (FWF) under project  {P29517-N27  and by the Polish National Center of Science (NCN) 2016/21/B/ST1/00940; he acknowledges the friendly hospitality of the IHES, Bures-sur-Yvette, during part of work on this paper}. GG's research was supported by  NSF grants DMS-1313724 and DMS-1710808.
TTP acknowledges financial support by the Austrian Science Fund (FWF) under the project P 28495-N27.
The authors are grateful to the Erwin Schr\"odinger Institute, Vienna, for hospitality and support during part of work on this paper.

\bibliographystyle{amsplain}
\bibliography{%
../references/reffile,%
../references/newbiblio,%
../references/hip_bib,%
../references/newbiblio2,%
../references/bibl,%
../references/howard,%
../references/bartnik,%
../references/myGR,%
../references/newbib,%
../references/Energy,%
../references/netbiblio,%
../references/PDE,%
ChruscielGallowayNguyenPaetz-minimal,%
aspect}

\end{document}